\documentclass{article}
\usepackage{amsmath,bm}
\usepackage{float}
\usepackage{todonotes}
\usepackage{pgfplots}
\pgfplotsset{compat=1.18} 
\usepackage{afterpage}

\usepackage{cite}
\usepackage{xcolor}
\usepackage{xstring}
\usepackage{caption}
\usepackage[english]{babel}
\usepackage[utf8x]{inputenc}
\usepackage{amsfonts}
\usepackage{amssymb}
\usepackage{amsthm}
\usepackage{theoremref}
\usepackage{enumitem}
\usepackage{subfig}
\usepackage{tikz}
\usepackage{tkz-euclide}
\usetikzlibrary{decorations.pathreplacing, patterns, positioning}
\usetikzlibrary{shapes.geometric, arrows.meta, positioning, fit, backgrounds, calc}
\usepackage{tikz-cd}
\usetikzlibrary{positioning} 
\usepackage[export]{adjustbox}
\usepackage{hyperref}
\hypersetup{
  colorlinks   = true, 
  urlcolor     = blue, 
  linkcolor    = blue, 
  citecolor   =  green 
}
\usepackage[section]{placeins} 
\usepackage{ifthen}
\usepackage{authblk}
\usepackage{mathtools,bm}
\usepackage[linesnumbered,ruled,vlined]{algorithm2e}
\SetKwInput{KwInput}{Input}                
\SetKwInput{KwOutput}{Output}              

\usepackage{pdfsync}
\usepackage{geometry}
\usepackage{graphicx}
\usepackage[normalem]{ulem}
\usepackage{booktabs}
\usepackage{etoolbox}
\usepackage{tabulary,capt-of}
\usepackage{xfrac}
\usepackage{multicol}\columnseprule 0.4pt\raggedcolumns

\def\BibTeX{{\rm B\kern-.05em{\sc i\kern-.025em b}\kern-.08em
    T\kern-.1667em\lower.7ex\hbox{E}\kern-.125emX}}

\definecolor{shadecolor}{RGB}{200,200,200}
\definecolor{darkgreen}{RGB}{20,130,30}
\definecolor{bblack}{RGB}{35,55,59}
\definecolor{ForestGreen}{RGB}{34,139,34}

\newtheorem{theorem}{Theorem}
\newtheorem{lemma}{Lemma}
\newtheorem{proposition}{Proposition}
\newtheorem{corollary}{Corollary}

\newtheorem{remark}{Remark}
\newtheorem{problem}{Problem}
\newtheorem{example}{Example}
\newtheorem{definition}{Definition}

\newcommand{\F}{\mathbb{F}}
\newcommand{\Fq}{\mathbb{F}_{q}}

\newcommand{\Fqn}{\mathbb{F}_{q^n}}
\newcommand{\Fqk}{\mathbb{F}_{q^k}}

\newcommand{\rank}{\mathrm{rank}}

\newcommand{\ds}{\mathrm{d_s}}
\newcommand{\calA}{\mathcal{A}}
\newcommand{\calB}{\mathcal{B}}
\newcommand{\calC}{\mathcal{C}}
\newcommand{\calD}{\mathcal{D}}
\newcommand{\calE}{\mathcal{E}}
\newcommand{\calF}{\mathcal{F}}
\newcommand{\calG}{\mathcal{G}}
\newcommand{\calL}{\mathcal{L}}

\newcommand{\calO}{\mathcal{O}}
\newcommand{\calP}{\mathcal{P}}
\newcommand{\calR}{\mathcal{R}}
\newcommand{\calS}{\mathcal{S}}

\newcommand{\calU}{\mathcal{U}}
\newcommand{\calW}{\mathcal{W}}
\newcommand{\calV}{\mathcal{V}}
\newcommand{\calX}{\mathcal{X}}
\newcommand{\calY}{\mathcal{Y}}

\newcommand{\fred}{f_{\mathrm{red}}}
\newcommand{\fexp}{f_{\mathrm{exp}}}
\newcommand{\fmul}{f_{\mathrm{mul}}}
\newcommand{\pr}{\mathrm{P}}

\newcommand{\gbinom}[2]{\genfrac{[}{]}{0pt}{}{#1}{#2}}

\DeclareMathOperator{\Inv}{Inv}

\DeclareMathOperator{\Gr}{Gr}

\counterwithout{table}{section}

\newcommand{\upmodels}{\mathrel{\rotatebox[origin=c]{90}{\scalebox{0.7}{$\models$}}}}

\newcommand{\Span}[2]{\langle #2 \rangle_{#1}}
\newcommand{\Core}[1]{\mathop{\scalebox{1.5}{$\Cap$}}\nolimits_{\mathbb{F}_{q^k}}(#1)}

\newcommand{\Wildcard}{*}
\newcommand{\Slice}[3]{
    \IfEqCase{#2}{
        {1}{#1_{#3, \Wildcard, \Wildcard}}
        {2}{#1_{\Wildcard, #3, \Wildcard}}
        {3}{#1_{\Wildcard, \Wildcard, #3}}
    }
}

\title{Decoding Desarguesian spread codes beyond \\half minimum distance}
\author[1]{Ermes Franch%
  \thanks{Co-funded by the European Union, Grant Agreement No.~101126560, through the Bergen Research and Training Programme for Future AI Leaders Across the Disciplines (LEAD AI).}}
\author[1]{Chunlei Li}
\author[2]{Angelica Piccirillo}

\affil[1]{University of Bergen, Bergen, Norway}
\affil[2]{Technical University of Munich, Munich, Germany}

\date{}

\begin{document}

\maketitle
\begin{abstract}
Spread codes are a well-known family of constant-dimension subspace-metric codes.
For constant dimension $k$ and ambient space dimension $n$ being a multiple of $k$, these codes have minimum distance $2k$ and a rich geometric structure.
In this paper, we study the decoding capabilities of the Nearest Neighbor Decoder for Desarguesian spread codes, establishing that unique decoding is still achievable beyond half the minimum distance. Motivated by this, we develop a new decoding algorithm to uniquely decode Desarguesian spread codes in the presence of both insertions and deletions, which increase and decrease, respectively, the dimension of the
transmitted codeword. Even when the sum of the dimensions of insertions and deletions exceeds half the minimum distance, provided that deletions are of dimension at most $k-2$, the algorithm succeeds with a small decoding failure. We also propose two refinements to this algorithm that, empirically, can handle nearly as many insertions as the Nearest Neighbor Decoder.
\end{abstract}
\textbf{Keywords: }Subspace codes, Desarguesian spread codes, Nearest neighbour decoding, Product of subspaces, Generalized evasive subspaces. 
\section{Introduction}
K\"{o}tter and Kschischang in~\cite{Kotter08} introduced the idea of using subspace codes for random linear network coding. In this setting, an operator channel takes in a vector space and
puts out another vector space, possibly with erasures 
(deletion of dimensions due to, e.g., an insufficient
min-cut in the network or an unfortunate choice of coefficients
in the random linear network code) and errors (insertion of dimensions due to errors or deliberate malfeasance).
Subspaces of $\Fq^n$ (equivalently $\Fq$-linear subspaces of $\Fqn$) are typically considered.
Using the subspace metric, they showed that a nearest neighbour (equivalently minimum distance) decoder can be used to reliably decode a corrupted codeword, namely, finding a subspace in the code that is the closest to the received subspace when their intersection is sufficiently large.
In the construction of subspace codes, K\"{o}tter and Kschischang proposed to use lifted Gabidulin codes~\cite{Gabidulin1985,silva2008rank} in the following way. 
A lifted Gabidulin code is derived by appending the identity matrix $I \in \Fq^{m \times m}$ to all the codewords in a Gabidulin code $\calG \subseteq \Fq^{m \times n}$ in its matrix form. 
The space $\calV=\{\mathrm{Rowspan}(I\mid C)\colon C \in \calG \}$ will be the lifting of the Gabidulin code $\calG$. The close connection to Gabidulin codes enables an efficient bounded-distance decoder for this class of subspace codes~\cite{silva2008rank}.
More precisely, the decoder can successfully recover the transmitted codeword from any received subspace whose distance from the original codeword is less than half the minimum distance of the code. 
Researchers also considered list decoding for rank-metric codes and subspace codes beyond half of the minimum distance (see~\cite{Bartz15} and reference therein), which typically gives lists of exponential size for any radius beyond half of the minimum rank distance~\cite{wachter13}.

Existing works on subspace codes have mainly focused on variants of Gabidulin codes or folded Gabidulin codes.
This is largely owing to the fact that Gabidulin-like codes provide  not only flexible code rates but also efficient decoding algorithms, like syndrome-based decoding and interpolation-based decoding~\cite{Bartz2022,Gabidulin2022}. 
Another interesting family of subspace codes is based on the construction of spreads in finite geometry, see for example~\cite{Hirschfeld1998}. 
For a constant dimension $k$, spread codes offer a larger minimum distance $2k$ yet a smaller rate compared to lifted Gabidulin codes, with minimum distance $k$. 
The use of a particular family of these codes known as \textit{Desarguesian spread codes} in the context of random linear network decoding was proposed for the first time by Gorla, Manganiello and Rosenthal in~\cite{Gorla08}.
They designed two decoding algorithms (see~\cite{Gorla11,Gorla08}) that can recover uniquely a $k$-dimensional codeword $\calC$ from a received space $\calR$ within
half of the minimum distance from $\calC$ in polynomial time; more precisely when the received space $\calR$ has dimension at most $k$ and $\dim(\calR \cap \calC) \geq \frac{k+1}{2}$.

 \subsection{Our contribution}
In this work, we propose a probabilistic polynomial-time decoding algorithm for Desarguesian spread codes. Our algorithm can decode subspaces beyond the unique-decoding radius with a success probability quickly converging to one as the field size increases.
In particular, our approach can handle the case where the received subspace has dimension larger than the codeword, i.e., when insertions outnumber deletions. This addresses the open problem raised in the conclusion of~\cite{Gorla11}. Moreover, our results show that it is possible to decode with high probability (w.h.p.) received spaces affected simultaneously by large deletions with dimension $d$ up to $k-2$, and by insertions with dimensions exceeding $d$, up to the theoretic decoding radius of a nearest neighbour decoder.

We also present two refinements of the basic algorithm that improve its performance, both in terms of the dimension of correctable insertions and experimental decoding success, as supported by the experimental results reported in Section~\ref{sec:experimental_results}.

In Section \ref{sec:Probability of success for a minimum distance decoder}, we further link the nearest neighbour decoder of Desarguesian spread codes to the geometric notion of subspaces that are evasive with respect to Desarguesian spreads. This provides an intrinsic limit on the dimension of random insertions that can be decoded reliably from the received subspace alone.
Experimental results show that the refined version of our algorithm starts to degrade only near the same regime where nearest neighbour decoding itself becomes unreliable.

 \subsection{Outline of the paper} 
 The structure of the paper is as follows. Section \ref{sec: preliminaries} recalls several tools that will be employed throughout the paper and provides an overview of subspace codes, with a particular focus on Desarguesian spread codes. In Section \ref{sec: ER dec}, we recall the two types of errors that may occur for subspace codes in random network coding, namely insertions and deletions, and introduce the expanding and reducing functions on $\Fq$-subspaces of $\mathbb{F}_{q^{rk}}$. We then present our decoding algorithm, which relies heavily on these operations and, in particular, on their behavior with respect to $\mathbb{F}_{q^k}$-linearity. Section \ref{sec: Improved techniques: Expand Reduce Expand and Filtered ERE} presents two refined versions of the  aforementioned algorithm, and Section \ref{sec:experimental_results} summarizes the results of the experiments conducted on the three algorithms. In Section \ref{sec:Probability of success for a minimum distance decoder}, we study Desarguesian spread decoding from the perspective of a nearest neighbour decoder, highlighting its connection with well-known geometric objects. In Section \ref{sec: Conclusions and Open Problems}, we conclude this work along with future research directions arising from the results of this paper. Finally, the Appendix contains additional material that may help the reader better understand the work.

\section{Preliminaries}
\label{sec: preliminaries}
Throughout this paper, let $q$ denote a prime power, $\Fq$ the finite field with $q$ elements, and $\Fqn$ an extension field of $\Fq$ of degree $n$. We write $\Fqn^*=\Fqn\setminus\{0\}$. For an $\Fq$-subspace $\calW$ of $\Fqn$ and $a\in\Fqn^*$, we denote by $a\calW$ the $\Fq$-subspace $\{aw : w\in\calW\}$. In particular, if $\calW=\Fqk$, we simply write $a\Fqk$.
\par Consider the set $\calL(\Fq^n)$ that consists of all subspaces of $\Fq^n.$
When fixing a basis of $\Fqn$ over its subfield $\Fq$, this set is equivalent to the set of all the $\Fq$-linear subspaces of $\Fqn$, denoted by $\calL_{\Fq}(\Fqn)$ . 
We can equip the lattice $\calL_{\Fq}(\Fqn)$ with the subspace distance~\cite{Kotter08}.
\begin{definition}[Subspace distance]
    The subspace distance between two $\Fq$-linear subspaces $\calC,\calD \subseteq \Fqn$ is defined as:
    \[
     \ds(\calC,\calD) =  \dim(\calC + \calD) - \dim(\calC \cap \calD) = \dim(\calC) + \dim(\calD) - 2 \dim(\calC \cap \calD).
    \]
\end{definition}

A subspace code is simply a subset of $\calL_{\Fq}(\Fqn).$
Let us denote by $\Gr_{\Fq}(k,\Fq^n)$ the set of all $k$-dimensional $\Fq$-linear spaces of $\Fq^n.$
This is isomorphic to the set $\Gr_{\Fq}(k, \Fqn)$
of all the $\Fq$-linear spaces in $\Fqn$ of $\Fq$-dimension $k$. 
A significant class of subspace codes is that of constant-dimension codes~\cite{Kotter08},
consisting of subspaces in $\Gr_{\Fq}(k, \Fqn)$, equivalently in $\Gr_{\Fq}(k,\Fq^n)$, and an important subclass of constant-dimension codes is given by the so-called spread codes~\cite{Gorla08}.
\begin{definition}[Spread Code] \label{def:spread_code}
   A spread code is a constant-dimension code $\bm \calC\subseteq \Gr_{\Fq}(k, \Fqn)$ such that any distinct $\calC_1, \calC_2 \in \bm \calC$ have a trivial intersection $\calC_1 \cap \calC_2 = \{ 0\}$
   and, for each $a \in \Fqn^*,$ there exist $\calC_i$ such that $a \in \calC_i.$
\end{definition}
Spread codes have two attractive properties:
since each intersection between two elements of a spread code is trivial, every spread code with constant dimension $k$ has  minimum distance $2k$.
This is the largest distance achievable for a constant-dimension subspace code of dimension $k$.
In addition, as spread codes cover $\Fqn^*$, they also achieve the highest possible cardinality, given by $(q^n -1)/(q^k -1)$, among constant-dimension codes of minimum distance $2k$.
This number is easily obtained by the fact that $|\Fqn^*| = q^n - 1$  while each subspace contains exactly $q^k-1$ non-zero vectors.
As all the intersections between the subspaces of a spread are trivial, the cardinality of a spread code is $(q^n -1)/(q^k -1)$, this also indicates that we can construct a spread only if $k \mid n.$
It is to be noted that while the ambient space $\Fqn$ is covered by a spread code, this fact does not indicate perfect covering property; namely, it is not true that all the subspaces of $\Fqn$ are contained in a ball of half-minimum subspace distance $k$ from the codewords of a spread~\cite{Martin1995}.

In this paper we will focus on the subclass of \textit{subfield spread codes} or \textit{Desarguesian spread codes}, in view of their algebraic structure and geometric interpretation.
\begin{definition}[Desarguesian spread code]
\label{def:subfield_spread}
   Let $k|n$ and consider the subfield $\Fqk \subseteq \Fqn$ of $q^k$ elements.
   We can consider the Desarguesian spread code given by the orbit of the subfield $\Fqk$ under the action of the multiplicative group $\Fqn^*$ as follows,
   $$
    \bm  \calD = \{ a \Fqk \mid a \in \Fqn^*\} \subseteq \Gr_{\Fq}(k,\Fqn).
   $$
\end{definition}
It is easy to see that the spaces $a \Fqk, b \Fqk$ are either the same space or disjoint.
Indeed, if $c \in a \Fqk \cap b \Fqk,$ then $c = a x_1 = b x_2$ for some $x_1,x_2 \in \Fqk,$ implying $a = b x_2 x_1^{-1} \in b \Fqk$ and $b = a x_1 x_2^{-1} \in a \Fqk$.
As mentioned in the introduction, the use of these codes in random linear networks  was proposed in~\cite[Definition 2]{Gorla08},\cite[Lemma 5, Theorem 6]{Gorla11} together with an efficient bounded-distance decoding algorithm, which returns a unique codeword whenever the received space lies within the distance $k-1,$ strictly below the half minimum of $\bm \calD$, which equals $k$.
\\
\par In this work, we consider the nearest neighbour decoder (also known as minimum distance decoder) in the specific case of Desarguesian spread codes.
\begin{definition}[Nearest Neighbour Decoder]
    \label{def: Nearest neighbour decoder}
    Let $n=kr$ and let $
    \bm  \calD$ be a Desarguesian spread code of constant dimension $k$.
    For a received subspace $\calR \subseteq \Fqn$ a
    \textbf{Nearest Neighbour Decoder (NND)} will find the codeword $\hat{\calC}$ as
    \[
   \hat{\calC} = \arg \min_{\calC \in \bm \calD} \{ \ds(\calR,\calC)\}. \]
    In other words, find $\calC \in \bm \calD$ that minimize the distance $\ds(\calR,\calC).$
    If the distance between the original codeword $\calC_0$ and the received space $\calR$ is bounded by
    $\ds(\calR,\calC_0) <\frac{\ds(\bm  \calD)}{2}=k$, a nearest neighbour decoder is guaranteed to return $\calC_0$.
\end{definition}
With this definition in mind, we recall that the algorithm presented in~\cite{Gorla11} for the Desarguesian spread code $\bm\calD$ of constant dimension $k$ uniquely decodes up to distance $k-1$, i.e., up to half the minimum distance, requiring $O((n-k)k^3)$ operations over the subfield $\Fqk$. However, it is restricted to the case where the received subspace $\calR$ satisfies $\dim(\calR)\leq k$, due to the theoretical foundation upon which it is based.
\\
\par Note that if the received subspace is at distance greater than $k-1$ from the transmitted codeword, it may be closer or equally close to another codeword than to the one that was transmitted.
For example, let $k=2s$ and suppose that we transmit the space $a \Fqk$, if the received subspace is subject to $s$ deletions (meaning $s$ linearly independent vectors are lost during the transmission) and $s$ insertions (meaning $s$ linearly independent vectors are added) its total distance from $a \Fqk$ will be $$\ds(\calR,a \Fqk)=\dim(\calR) + \dim(a \Fqk) - 2 \dim(\calR \cap a \Fqk)=k+k-2s=k.$$
If we further assume that all the insertions come from the same space $b \Fqk \neq a \Fqk$, the received space will be at the same distance $k$ from both $a \Fqk$ and $b\Fqk.$ In this case, the output of an NND will not be unique.
Notice that, to construct such a negative example, we had to choose the insertions in a very specific way. A typical insertion is unlikely to be contained in a subspace $b\Fqk$; therefore, the space $a\Fqk$ will still be the closest choice, with high probability, even when the distance is $k$. This will be discussed in more detail in Section \ref{sec:Probability of success for a minimum distance decoder}, where we study the relationship between the geometric properties of the received space and the success probability of the NND in Definition \ref{def: Nearest neighbour decoder}. This suggests that a NND can still recover the correct codeword even for distance larger than $k-1$ from the original codeword with some probability.

Inspired by this possibility, we propose a probabilistic decoding algorithm that can decode beyond half the minimum distance and also handle the case where the received subspace has dimension larger than $k$, thereby addressing the open question in~\cite{Gorla11}.
\section{Expansion-Reduction Decoding of Desarguesian Spread Codes}
\label{sec: ER dec}
\subsection{Error Model for Subspace Codes}
\label{subsec: error model}
We start by recalling the two types of errors, namely, insertion and deletion (corresponding to errors and erasures in~\cite[Sec.~III-C]{Kotter08}) that may occur in the context of subspace codes. Let $\calC = \Span{\Fq}{c_1,c_2,\dots,c_k}$\footnote{Unless otherwise stated, the notation $\Span{\Fq}{v_1,v_2,\dots,v_{\ell}}$ denotes the $\Fq$-subspace of $\Fqn$ spanned by the elements $v_1,v_2,\dots,v_{\ell}\in\Fqn$.} be a codeword in a constant-dimension subspace code $\bm \calC$.
\begin{itemize}
	\item \textit{Insertion.} 
	An insertion of weight $t$ corresponds to receiving $\calC + \Span{\Fq}{b_1, \ldots, b_t}$ where $\Span{\Fq}{b_1, \ldots, b_t}$ is a subspace of dimension $t.$
	\item \textit{Deletion.} A deletion of weight $d$ corresponds to receiving a subspace $\calU = \Span{\Fq}{c_1,\ldots,c_{k-d}} \subseteq \calC$ of dimension $k-d$ instead of the whole space $\calC$ of dimension $k.$  
	
\end{itemize}
We refer to $t$ insertions (resp. $d$ deletions) when an insertion of weight $t$ (resp. a deletion of weight $d$) has occurred.

\par When only one of these two types of errors occurs, it is relatively easy to recover the original codeword $\calC$ by applying just the reducing function (to remove insertions) or the expanding function (to recover from deletions), which we will define later.
Decoding is more challenging when both types of error occur at the same time.
In this case, we need to consider the received space of the form
$$
\calR = \calU + \calB = \Span{\Fq}{c_1, \ldots, c_{k-d}} + \Span{\Fq}{b_1, \ldots, b_{t}},
$$
where $\calU$ is a $(k-d)$-dimensional subspace of the intended codeword $\calC$ and $\calB$ is a random $\Fq$-linear subspace of $\Fqn$ of dimension $t.$ The distance between $\calC$ and $\calR$ is given by 
$$\ds(\calC, \calR)=\dim(\calC) + \dim(\calR) - 2 \dim(\calC \cap \calR)\leq d+t.$$
Throughout the theoretical analysis on which the algorithm is based, without loss of generality we assume $\calR = \calU \oplus \calB$ and hence $\ds(\calC, \calR)=d+t$. 
In fact, let $r= \dim(\calR)$, we can always consider a basis of the space $
\Tilde{\calU} = \calC \cap \calR$ given by $c_1, \ldots , c_h$ for some $h \geq k-d$.
This basis can be completed to a basis of $\calR$ with some elements $b_{h+1}, \ldots , b_{r}$ hence, if we denote by $\Tilde{\calB}$ the span of these elements, we have 
\[
\calR = \calU + \calB = \Tilde{\calU}\oplus\Tilde{\calB},
\]
where $\Tilde{\calU}=\Span{\Fq}{c_1, \ldots, c_{h}}$ and $\Tilde{\calB}=\Span{\Fq}{b_{h+1}, \ldots, b_{r}}$.
\subsection{Expansion and Reduction}
\label{sec:exp_and_red}
In this section, inspired by the functions presented in \cite{Aragon19} in the context of LRPC codes, we introduce two operations on $\calL_{\Fq}(\Fqn)$, called \textit{expansion} and \textit{reduction}, that preserve $\Fqk$-linear subspaces while significantly altering random $\Fq$-linear subspaces $\calB \subseteq \Fqn$.
We will also discuss the properties of the expansion and reduction operations, which pave the way for the decoding algorithms proposed in this paper.
\\
\par Recall that an element $\calC \in \bm \calD$ is of the form $\calC = a \Fqk=\Span{\Fq}{au_1, \ldots, au_k},$ for some $a \in \Fqn^*$ and $\Fq$-linearly independent $u_i\in\Fqk^*$. 
An important operation between the set of $\Fq$-linear subspaces of $\Fqn$ and one element of $\Fqk^*$ is the scalar multiple function 
\begin{align*}
    \fmul\colon \calL_{\Fq}(\Fqn) \times \Fqk^* &\rightarrow \calL_{\Fq}(\Fqn)\\
    (\calV,x)&\mapsto\fmul(\calV,x) \coloneqq x \calV.
\end{align*}
Thanks to the distributive property of the product, this operation preserves the vector space structure as well as the $\Fq$-dimension of the space.
\begin{definition}[{Expanding function}]
The expanding function $\fexp:\,$
$
\calL_{\Fq}(\Fqn) \times (\Fqk^*)^s 
\rightarrow \calL_{\Fq}(\Fqn) $ is defined
as $$\fexp(\calV, \bm x)=x_1 \calV + \cdots + x_s \calV,$$
where $\bm x=(x_1,\ldots,x_s)$. Moreover, $s$ will be called the length of the expansion, while $\fexp(\calV, \bm x)$ will be called $s$-expansion, or simply expansion when the length is clear from the context.
\end{definition}
Equivalently, the expansion function $\fexp(\calV, \bm x)$ can be defined as the product subspace between $\calV$ and the $\Fq$-linear space generated by the entries of $\bm x$.
Let $\calX = \Span{\Fq}{x_1,\ldots,x_s} \subseteq \Fqk$ and let $\calV = \Span{\Fq}{v_1, \dots, v_{h}}$, then
$$
    \fexp(\calV,\bm x) = \calV.\calX \coloneqq \Span{\Fq}{\{v_i x_j \mid i \in [h], j \in [s]\}},
$$
which is the smallest $\Fq$-linear subspace containing the set $\calV \calX := \{ vx \mid v \in \calV, x \in \calX\}.$ This product was introduced in the context of LRPC codes in \cite{gaborit2013} and its properties were further analyzed in \cite{Aragon19}. 
\\
\par Consider a subspace $\calU \subseteq \calC = a \Fqk$ and let $x \in \Fqk^*$. Then $x \calU$ has the same $\Fq$-dimension as $\calU$, and since $x\calC=\calC$, it follows that $x \calU \subseteq \calC$.
This implies that the expanding function we defined above preserves $\calC= a \Fqk$ while, for any subspace $\calU$ of $\calC$, we have $\fexp(\calU, x)\subseteq \calC.$ 
In principle, the $\Fq$-dimension of the expanded subspace could be greater than $\calU$ while it will be upper bounded by $\min\{s \dim(\calU), k \}.$ In our experiments, we observed that $\dim(\fexp(\calU, \bm x))=\min\{s \dim(\calU), k \}$ holds in most cases, in agreement with the theoretical results of \cite{Aragon19}. When $\dim(\fexp(\calU, \bm x))=\min\{s \dim(\calU), k \}$, we say that $\fexp(\calU, \bm x)$ is an \textit{optimal expansion}.
\par In \cite{Aragon19}, given two random $\Fq$-subspaces $\calA$ and $\calB$ of $\Fqn$ with dimensions $\alpha$ and $\beta$, respectively, the authors, under the assumption that $\alpha\beta < n$, investigate the typical dimension of the product subspace $\calA.\calB$. 
\begin{proposition}[{\cite[Proposition III.3]{Aragon19}}]\label{prop: expected dimension of the expansion}
    Let $\calB$ be a fixed $\Fq$-subspace of $\Fqn$ of dimension $\beta$ and let $\calA=\langle a_1,\ldots,a_{\alpha}\rangle_{\Fq}$ where $a_1,\ldots,a_{\alpha}$ are $\alpha$ $\Fq$-linearly independent elements of $\Fqn$ chosen uniformly at random such that $\alpha\beta < n$.  Then $\dim(\calA.\calB)=\alpha\beta$ with probability at least $1-\alpha q^{-(n-\alpha\beta)}$.
\end{proposition}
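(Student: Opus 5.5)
The plan is to add the random elements $a_1,\ldots,a_\alpha$ one at a time and bound the failure probability at each step by a union bound. Set $\calA_i=\langle a_1,\ldots,a_i\rangle_{\Fq}$. Then $\calA.\calB = a_1\calB+\cdots+a_\alpha\calB$, and each $a_j\calB$ has dimension exactly $\beta$ because multiplication by a nonzero element is an $\Fq$-linear bijection of $\Fqn$. Hence $\dim(\calA.\calB)=\alpha\beta$ holds if and only if at every step $i$ the space $a_i\calB$ meets $\calA_{i-1}.\calB$ trivially. That is, we need
\[
a_i\calB\cap (\calA_{i-1}.\calB)=\{0\}\quad\text{for } i=1,\ldots,\alpha,
\]
and it suffices to show that the $i$-th condition fails with probability at most $q^{-(n-\alpha\beta)}$.

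Fix step $i$ and condition on $a_1,\ldots,a_{i-1}$, assuming the previous steps succeeded, so that $W:=\calA_{i-1}.\calB$ has dimension $(i-1)\beta$. The condition fails exactly when there is a nonzero $b\in\calB$ with $a_ib\in W$, i.e. $a_i\in b^{-1}W$. So the set of bad choices of $a_i$ lies in the union $\bigcup_{b} b^{-1}W$ over $b\in\calB\setminus\{0\}$. Since $b$ and $\lambda b$ with $\lambda\in\Fq^*$ give the same space $b^{-1}W$, we only need one $b$ per projective point of $\calB$, which gives $(q^\beta-1)/(q-1)$ spaces. Each has dimension $(i-1)\beta$, so the bad set has size at most
\[
\frac{q^\beta-1}{q-1}\,q^{(i-1)\beta}\le q^{i\beta}.
\]

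The element $a_i$ is uniform among elements of $\Fqn$ outside $\calA_{i-1}$. Dividing the bound above by $q^n$ gives a probability of order $q^{i\beta-n}\le q^{\alpha\beta-n}$. The one subtlety is the conditioning: $a_i$ is uniform on $\Fqn\setminus\calA_{i-1}$, which has $q^n-q^{i-1}$ elements, not $q^n$. I would handle this either by noting that the factor $1/(q-1)$ in the count of projective points absorbs the loss, or more simply by sampling $a_i$ uniformly from all of $\Fqn$ and observing that $a_i\in\calA_{i-1}$ already forces $a_i\calB\subseteq W$, so that event is itself counted among the bad events. With either fix, the failure probability at step $i$ is at most $q^{-(n-\alpha\beta)}$.

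Finally, a union bound over the $\alpha$ steps (equivalently, multiplying the conditional success probabilities) shows that $\dim(\calA.\calB)=\alpha\beta$ with probability at least $1-\alpha q^{-(n-\alpha\beta)}$. The hypothesis $\alpha\beta<n$ is exactly what makes this bound nontrivial. I expect the main obstacle to be the counting in the second step: identifying the bad set as a union of the multiplicative translates $b^{-1}W$, and keeping the constants tight enough after the conditioning correction to obtain exactly the exponent $n-\alpha\beta$.
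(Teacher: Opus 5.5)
Your proposal is correct and follows essentially the same route as the paper: the paper cites this statement from \cite{Aragon19} rather than reproving it, and your incremental union-bound argument (adjoin $a_i$ one at a time, bound the probability that $a_i\calB$ meets $\calA_{i-1}.\calB$ nontrivially by $q^{i\beta-n}$, then sum over $i$) is the proof the paper reproduces in the appendix for Corollary~\ref{cor: expected dimension of the expansion}. Of your two ways of handling the conditioning, the second is the cleaner one: $\calA_{i-1}\subseteq b^{-1}(\calA_{i-1}.\calB)$ for every nonzero $b\in\calB$, so restricting $a_i$ to $\Fqn\setminus\calA_{i-1}$ can only lower the failure probability (for $q=2$ the factor $1/(q-1)$ alone does not absorb the loss, and you would need the $-1$ in $q^\beta-1$ together with $i-1+\beta\le n$).
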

This yields the following corollary, whose proof is a straightforward reformulation of the previous proposition. For completeness, it is provided in Appendix \ref{sec: proof of corollary 1}.
\begin{corollary}
    \label{cor: expected dimension of the expansion}
    Let $\calU = \Span{\Fq}{u_1, \dots, u_{k-d}}\subseteq a\Fqk$ with $a\in\Fqn^*$ be a fixed subspace and suppose we construct a random subspace $\calX = \Span{\Fq}{x_1,\ldots,x_s} \subseteq \Fqk$ by choosing uniformly at random $s \leq k$ $\Fq$-linearly independent elements $x_1,\ldots,x_s$ of $\Fqk$. Let $s \dim(\calU)<k$, then $\dim(\fexp(\calU, \bm x))=\dim(\calU.\calX)=s \dim(\calU)$ with probability at least $1-s q^{-(k-s(k-d))}$.
\end{corollary}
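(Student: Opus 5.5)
The plan is to reduce the statement to Proposition~\ref{prop: expected dimension of the expansion} by dividing out the scalar $a$ and working inside the subfield $\Fqk$, viewed as an extension of $\Fq$ of degree $k$. Since $\calU\subseteq a\Fqk$, we can write $\calU = a\calU'$ with $\calU' = a^{-1}\calU = \Span{\Fq}{a^{-1}u_1,\ldots,a^{-1}u_{k-d}} \subseteq \Fqk$. Multiplication by $a$ is an $\Fq$-linear bijection of $\Fqn$, so it preserves $\Fq$-dimension. Moreover,
\[
\calU.\calX = \Span{\Fq}{a u'_i x_j} = a\,(\calU'.\calX).
\]
Hence $\dim(\calU.\calX) = \dim(\calU'.\calX)$. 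It therefore suffices to compute the dimension of $\calU'.\calX$, a product of two $\Fq$-subspaces of $\Fqk$.

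Next I will apply Proposition~\ref{prop: expected dimension of the expansion} with the ambient field $\Fqk$ in place of $\Fqn$, so that $n$ is replaced by $k$. The fixed space is $\calB=\calU'$, of dimension $\beta=\dim(\calU)=k-d$. The random space is $\calA=\calX$, spanned by $\alpha=s$ uniformly chosen $\Fq$-linearly independent elements of $\Fqk$, which is exactly the hypothesis of the proposition. The condition $\alpha\beta<n$ becomes $s(k-d)<k$, which is the assumption $s\dim(\calU)<k$. The proposition then gives
\[
\dim(\calU'.\calX)=s(k-d) \quad\text{with probability at least } 1-s\,q^{-(k-s(k-d))}.
\]
This is precisely the claimed bound. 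Since $\fexp(\calU,\bm x)=\calU.\calX$ by the observation following the definition of the expanding function, the statement follows.

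The only point that needs care is checking that Proposition~\ref{prop: expected dimension of the expansion} applies to an arbitrary finite extension $\mathbb{F}_{q^k}/\Fq$ and not only to the specific ambient field $\Fqn$. Its statement is uniform in the extension degree, so this holds. I also need the randomness to live entirely in $\calX$ and the fixed space to be independent of it, which is the case here. I do not expect any real obstacle: the argument is a change of ambient field followed by a scaling isomorphism.
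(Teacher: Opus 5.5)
Your proposal is correct and is essentially the paper's argument. The paper presents the corollary as a straightforward reformulation of Proposition~\ref{prop: expected dimension of the expansion}; its appendix proof re-runs the inductive union bound of Aragon et al.\ directly inside $a\Fqk$, using that $x_iu$ is uniform on $a\Fqk\setminus\{0\}$, and your rescaling $\calU=a\calU'$ followed by applying the proposition with ambient field $\Fqk$ achieves the same in one step (with $\beta=\dim(\calU)=k-d$, i.e.\ taking $u_1,\ldots,u_{k-d}$ linearly independent, as the statement implicitly does).
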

In the worst case of $k-2$ deletions, i.e., when $\dim(\calU)=2$, the probability that $\dim(\fexp(\calU, \bm x))=2s$ can be refined to $1 - q^{2s - k - 1}$.
The proof is rather technical and can be found in Appendix \ref{app: Expansion}. More generally, the analysis in Appendix~\ref{app: Expansion} applies to specific two expansions of an arbitrary $\Fq$-subspace $\calV$ of $\Fqk$, i.e., to spaces of the form $\calV + a \calV$ with $a\in\Fqk^*$. Therefore, in the case in which $\dim(\calU)=2$, thanks to the commutativity of the product $\calU.\calX = \calX.\calU$, the same analysis describes the behavior of $\fexp(\calU,\bm x)$ when $\dim(\calU)=2$.
In this regime, non-optimal expansions correspond to elements $u \in \Fqk$ such that $\calX \cap u\calX \neq \{0\}$, i.e., $u$ lies in intersections of subspaces of the form $x^{-1}\calX$ for $x \in \calX$.
When these intersections are minimal (typically equal to $\Fq$), non-optimal expansions are relatively frequent but only miss optimality by one dimension.
Conversely, if larger intersections occur, leading to a loss of multiple dimensions, then the set of such $u$ is necessarily much smaller.
This trade-off is favorable for our decoding algorithm: severe deviations from optimal expansion are rare, while the more common non-optimal cases only incur a limited loss in dimension.
As a consequence, when $\dim(\calU)=2$, the expansion is optimal with high probability.
\\
\par We now return to the general setting. As an immediate consequence of Corollary~\ref{cor: expected dimension of the expansion}, in the presence of only $d$ deletions, if $\dim(\calU)=k-d \geq k/s$, it is likely that $\fexp(\calU, \bm x) = \calC$ for a vector $\bm x \in (\Fqk^*)^s$ chosen uniformly at random.

\par When we consider both deletion and insertion errors, instead, it is important to understand the behavior of the expanding function on the subspace 
$
    \calR = \calU \oplus \calB,
$
where $\calU \subseteq a\Fqk$ has dimension $k-d$ and $\calB \subseteq \Fqn$ is a random subspace of dimension $t.$
In this case we have
$$
    \fexp(\calR, \bm x) = \calR .\calX = \calU.\calX + \calB.\calX
$$
and, if $\dim(\calU) = k-d \geq k/s$, then it is likely that
$$
   \fexp(\calR,\bm x) =\calC + \calB',
$$
where $\calB'=\calB.\calX $ and $\dim(\calB') \leq \min \{st, n \}.$ 
More generally, with an expansion of length $s$, the dimension of the expansion $\fexp(\calR, \bm x)$ is upper bounded by
\begin{equation}
    \label{eq: dim_exp}
    \min\{ st + k, st + s(k-d), n\},
\end{equation}
where we notice that
\begin{equation}
    \label{eq: min upper bound on dimension of expansion}
    \min\{st + k,\, st + s(k-d)\}
=
\begin{cases}
st + k & \text{if } s \ge \frac{k}{k-d}, \\
st + s(k-d) & \text{otherwise}.
\end{cases}
\end{equation}
The above observations show that the expanding function decreases the dimension of deletions, potentially recovering the entire codeword, while increasing the dimension of insertions. To this end, we introduce the following function.
\begin{definition}[Reducing function]
    The reducing function 
\(
\fred:\calL_{\Fq}(\Fqn)\times (\Fqk^*)^s\longrightarrow \calL_{\Fq}(\Fqn)
\)
is defined as
\[
\fred(\calV,\bm y)
   = \bigcap_{\substack{i\in [s]}} y_i\calV,
\]
where $\bm y=(y_1,\ldots y_s)$. Moreover, $s$ will be called the length of the reduction, while $\fred(\calV, \bm x)$ will be called $s$-reduction, or simply reduction when the length is clear from the context.
\end{definition}
As $y_i \Fqk= \Fqk$ for any $y_i \in \Fqk^*$ then $\fred(a \Fqk, \bm y) = a \Fqk,$ which means that the reducing function keeps each codeword  $\calC$ unchanged.
On the other hand, the dimension of a random space $\calV$ can decrease, possibly even to zero. The following proposition shows that this is more likely when $\dim(\calV)\ll \left\lfloor\frac{n}{2}\right\rfloor$.
\begin{remark}
    \label{prop: probability reduction goes to zero}
    Let $\calV$ be a random $\Fq$-subspace of $\Fqn$. Let $y_1,\ldots,y_s$ be $\Fq$-linearly independent elements chosen uniformly at random from $\Fqk$. Then 
    \[
    \mathrm{P}\left(\bigcap_{i\in [s]} y_i\calV=\{0\}\right)\geq 1-q^{2\dim(\calV)-n},
    \]where $[s]:=\{1,2,\dots,s\}$. Indeed note that if there exists $i,j\in[s]$ such that $i\neq j$ and $y_i\calV\cap y_j\calV=\{0\}$, then $\bigcap_{i\in [s]} y_i\calV=\{0\}$. Consequently 
    \[
    \mathrm{P}(\text{There exist }i,j\in[s]\text{ such that }i\neq j\text{ and }y_i\calV\cap y_j\calV=\{0\})\leq \mathrm{P}\left(\bigcap_{i\in [s]} y_i\calV=\{0\}\right).
    \]
    For $i\neq j$, we have that $y_i\calV\cap y_j\calV\neq\{0\}$ if and only if $\calV\cap y_i^{-1}y_j\calV\neq\{0\}$ if and only if $\calV\cap y\calV\neq\{0\}$ with $y=y_i^{-1}y_j$. Since $y_1,\ldots,y_s$ are $\Fq$-linearly independent elements chosen uniformly at random in $\Fqk$, the elements $y_{ij}=y_i^{-1}y_j$ are also distributed uniformly at random.
    Therefore, by considering the complement of the events,
    \begin{align*}
        \mathrm{P}\left(\bigcap_{i\in [s]} y_i\calV\neq\{0\}\right)&\leq\mathrm{P}(\text{For all }i,j\in[s]\text{ such that }i\neq j\text{, we have }y_i\calV\cap y_j\calV\neq\{0\})\\
        &=\mathrm{P}(\text{For all }i,j\in[s]\text{ such that }i\neq j\text{, we have }\calV\cap y_{i,j}\calV\neq\{0\})\\
        &\leq \min_{i,j\in[s],i\neq j}\mathrm{P}(\calV\cap y_{i,j}\calV\neq\{0\})\leq q^{2\dim(\calV)-n}.
    \end{align*}
    The last inequality follows from the fact that since $\calV$ is a random subspace of $\Fqn$, i.e., its elements are chosen uniformly at random from $\Fqn$, once we have fixed $v\in\calV$, the probability that for an element $y_{i,j}\in\Fqk^*$ chosen uniformly at random we also have $y_{i,j}v\in\calV$, is 
    \[
    \mathrm{P}(y_{i,j}v\in\calV)=\frac{\lvert\calV\rvert}{q^n},
    \]
    hence
    \begin{align*}
        \mathrm{P}(\calV\cap y_{i,j}\calV\neq\{0\})&=\mathrm{P}(\text{There exists }v\in\calV\text{ such that }y_{i,j}v\in\calV)\\
        &=\mathrm{P}\left(\bigcup_{v\in\calV}y_{i,j}v\in\calV\right)\\
        &\leq \frac{\lvert\calV\rvert^2}{q^n}=q^{2\dim(\calV)-n}.
    \end{align*}
\end{remark}
Observe that the lower bound in Remark~\ref{prop: probability reduction goes to zero} does not depend on $s$. In fact, the proof only relies on the case $s=2$. Since adding more subspaces to the intersection can only decrease its dimension, the probability that
\[
\bigcap_{i\in[s]} y_i\calV=\{0\}
\]
actually increases with $s$.
\\
\par
In reductions of length two observe that, if $\bm y = (y_1,y_2)$ we need $y_1,y_2$ to be $\Fq$-linearly independent as otherwise $y_1 \calV = y_2 \calV.$
It would be tempting to generalize this relation assuming the reductions depend only on the subspace generated by the entries of $\bm y$.
The following example shows this is not the case.
Consider the space $\calV =\Span{\F_2}{1,\alpha} \subseteq \F_{8}$ such that $\alpha^3+\alpha+1=0$ and the vector $\bm y = (1,\alpha,1+\alpha)$ having support $\calV$ of dimension $2$.
The reduction  $\fred(\calV,\bm y) = \calV \cap \alpha \calV \cap (\alpha+1) \calV  = \{ 0\}.$ 
If we consider the vectors $(1,\alpha)$ and $(1, \alpha+ 1)$, those have the same support as $\bm y$ but reducing $\calV$ by these shorter vectors leads to $\{0, \alpha \}$ and $\{0, \alpha + 1 \}$, respectively. 
\subsubsection{$\Fqk$-linear subspaces with respect to expansion and reduction}
\label{subsec: From expansion and reduction to Fqk-linearity}
For any given $\Fq$-linear space $\calV \subseteq \Fqn$ where $n=kr$,
we can always consider the smallest $\Fqk$-linear space that contains $\calV$ and the largest $\Fqk$-linear space contained in $\calV$.
\begin{definition}
    Let $n=kr$ and $\calV$ be an $\Fq$-linear space of $\Fqn$. We denote by $\Span{\Fqk}{\calV}$ the $\Fqk$-span of $\calV$, i.e., the smallest $\Fqk$-linear space containing $\calV$.
    We denote by $\Core{\calV}$ the largest $\Fqk$-linear space contained in $\calV$, namely $\displaystyle\bigcap_{b\in\Fqk^*}b\calV.$ 
\end{definition}
The existence follows from the observation that $ \{0\} \subseteq \calV \subseteq \Fqn$. The uniqueness of $\Core{\calV}$ comes from the following simple argument: let $\calW_1,\calW_2$ be two $\Fqk$-linear spaces such that $\calW_1, \calW_2 \subseteq \calV$, then $\calW_1,\calW_2\subseteq\calW_1 + \calW_2\subseteq\calV$. The uniqueness of $\Span{\Fqk}{\calV}$ comes from a similar argument: let $\calW_1,\calW_2$ be two $\Fqk$-linear spaces such that $\calV \subseteq \calW_1, \calV \subseteq\calW_2$, then $\calV\subseteq \calW_1 \cap \calW_2\subseteq\calW_1,\calW_2$.

There is a close connection between these two spaces and the functions presented in Section \ref{sec:exp_and_red}.
In particular, whenever $\bm b$ is a basis of $\Fqk$,
the expansion $\fexp(\calV, \bm b)$ transforms $\calV$ in $\Span{\Fqk}{\calV}$. This represents a limit to the expansion as $\Fqk$-linear spaces are not affected by further expansions. 
\begin{lemma}
\label{lm:exp_to_Fqk_linear}
    Let $\bm b = (b_1,\ldots, b_k)$ be an ordered $\Fq$-basis of  $\Fqk$. Then, for an $\Fq$-subspace $\calV\subseteq\Fqn$ we have 
    \[
    \calV\subseteq \fexp(\calV, \bm b) = b_1 \calV + \cdots + b_k \calV = \Span{\Fqk}{\calV}.
    \]
\end{lemma}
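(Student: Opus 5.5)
The plan is a double inclusion, with the chain $\calV\subseteq \fexp(\calV,\bm b)$ obtained along the way.

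\emph{First inclusion.} Since $\bm b$ is an $\Fq$-basis of $\Fqk$, the element $1\in\Fqk$ can be written as $1=\sum_i \lambda_i b_i$ with $\lambda_i\in\Fq$. For $v\in\calV$ this gives $v=\sum_i b_i(\lambda_i v)$. Each $\lambda_i v$ lies in $\calV$, because $\calV$ is an $\Fq$-space, so each summand lies in $b_i\calV$. Hence $\calV\subseteq b_1\calV+\cdots+b_k\calV=\fexp(\calV,\bm b)$.

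\emph{Inclusion $\fexp(\calV,\bm b)\subseteq\Span{\Fqk}{\calV}$.} Take any $\Fqk$-linear space $\calW$ containing $\calV$. For every $i$ we have $b_i\calV\subseteq b_i\calW=\calW$, since $b_i\in\Fqk^*$. As $\calW$ is closed under sums, $\fexp(\calV,\bm b)\subseteq\calW$. Applying this with $\calW=\Span{\Fqk}{\calV}$ gives the inclusion.

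\emph{Reverse inclusion.} By minimality of the $\Fqk$-span, it suffices to show that $\calE:=\fexp(\calV,\bm b)$ is itself $\Fqk$-linear. It is $\Fq$-linear by construction, since it is the product space $\calV.\Fqk$ spanned by the products $v b_j$. So the key step is closure under multiplication by an arbitrary $c\in\Fqk$. This is the only point needing care, and it reduces to checking generators. For $v\in\calV$, the product $c b_j$ lies in $\Fqk$, so we can expand $c b_j=\sum_i \mu_{ij} b_i$ with $\mu_{ij}\in\Fq$. Then $c(b_j v)=\sum_i b_i(\mu_{ij}v)\in\calE$. By $\Fq$-linearity of multiplication by $c$, this gives $c\calE\subseteq\calE$. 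Therefore $\calE$ is an $\Fqk$-subspace containing $\calV$ (by the first step), so $\Span{\Fqk}{\calV}\subseteq\calE$, and equality follows.

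I do not expect a real obstacle. The only subtlety is to remember that $\Fqk$-linearity of $\calE$ uses that $\bm b$ spans all of $\Fqk$, so that $\Fqk\cdot\Fqk$ products fall back into the $\Fq$-span of $\bm b$. This is exactly where the argument would fail for a shorter vector $\bm x$.
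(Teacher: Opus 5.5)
Your proof is correct and follows the paper's argument: you obtain $\calV\subseteq\fexp(\calV,\bm b)$ by writing $1=\sum_i\lambda_i b_i$, and you show $\Fqk$-linearity of $b_1\calV+\cdots+b_k\calV$ by expanding each $cb_j$ in the basis $\bm b$. The only addition is your explicit check that $\fexp(\calV,\bm b)\subseteq\calW$ for every $\Fqk$-linear $\calW\supseteq\calV$, a minimality step the paper's proof asserts but leaves implicit.
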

\begin{proof}
    We need to show that $\fexp(\calV, \bm b)$ is the smallest $\Fqk$-linear subspace containing $\calV$, namely, that $\calV \subseteq \fexp(\calV, \bm b)$ and $\fexp(\calV, \bm b)$ is $\Fqk$-linear. Let $v \in \calV$, as $1 \in \Fqk$ we can write it as $\sum_{i=1}^k c_i b_i$ for some $c_i \in \Fq$, then $v = \sum_{i=1}^k c_i b_i v$ where all the addends in the sum belong to some space of the form $b_i \calV$, which proves $\calV \subseteq \fexp(\calV, \bm b).$
    
    To show the $\Fqk$-linearity we only need to prove that $\lambda \fexp(\calV, \bm b) \subseteq \fexp(\calV, \bm b)$ for any $\lambda \in \Fqk$ as $\fexp(\calV, \bm b)$ is trivially closed under addition.
    Let $w = \sum_{i=1}^{k} v_i b_i \in \fexp(\calV, \bm b)$ for some $v_i \in \calV$ and let $\lambda \in \Fqk$ .
    Then $\lambda w = \sum_{i=1}^k  v_i \lambda b_i$. As $\lambda b_i \in \Fqk$, it can be written as $\lambda b_i = \sum_{j=1}^k \lambda_{i,j} b_j$ for some appropriate coefficients $\lambda_{i,j} \in \Fq$, hence $\lambda w = \sum_{i=1}^k \sum_{j=1}^k  \lambda_{i,j} v_i  b_j =  \sum_{j=1}^k  \hat{v_j}  b_j$ where $\hat{v_j} = \sum_{i=1}^k \lambda_{i,j} v_i \in \calV$.
\end{proof}
Note that, in general, it is not true that $\calV \subseteq \fexp(\calV, \bm b)$ if $\bm b$ is not an $\Fq$-basis of $\Fqk$.
\\
\par
There is a similar connection between the function $\fred(\calV,\bm b)$ and the space $\Core{\calV}$.

\begin{example}
    Let $\F_{16}=\F_{2}(\alpha)$ where $\alpha^4=\alpha+1$, i.e., $\mathbb{F}_{16}=\langle 1,\alpha,\alpha^2,\alpha^3\rangle_{\F_2}$ and $\F_4=\langle 1, \alpha+\alpha^2\rangle_{\F_2}$. Let us consider $\calV=\langle \alpha,\alpha^2,\alpha^3\rangle_{\F_2}$, then 
    \[
    \calV\cap(\alpha+\alpha^2) \calV= \langle \alpha\rangle_{\F_4}
    \]
    is the largest $\F_4$-linear subspace contained in $\calV$.
\end{example}
It is easy to see that applying a reduction to a space that is already $\Fqk$-linear has no effect.
This implies that if $\calV$ is an $\Fq$-linear space containing an $\Fqk$-linear space $\calW$, any reduction of $\calV$ will still contain $\calW$.
The next Lemma gives a sufficient condition on $\bm b$ for which $\fred(\bm b, \calV) = \Core{\calV}$.

\begin{lemma}
\label{lm:random_red_to_Fqk_linear}
    Let $\bm b= (b_1, \ldots, b_k)$ be an ordered $\Fq$-basis of $\Fqk$ with the property that $\bm b^{-1}= (b_1^{-1}, \ldots, b_k^{-1})$ is also an $\Fq$-basis of $\Fqk$. Then, for an $\Fq$-subspace $\calV\subseteq\Fqn$ we have
    $$
        \calW = \fred(\calV, \bm b)= \bigcap_{i=1}^k b_i \calV = \Core{\calV}.
    $$
\end{lemma}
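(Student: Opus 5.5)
Following the proof of Lemma~\ref{lm:exp_to_Fqk_linear}, I will show two inclusions: $\Core{\calV}\subseteq\calW$ and that $\calW$ is an $\Fqk$-linear space contained in $\calV$. Maximality of $\Core{\calV}$ then gives $\calW\subseteq\Core{\calV}$.

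For $\Core{\calV}\subseteq\calW$, I use that $\Core{\calV}$ is $\Fqk$-linear and contained in $\calV$. Hence for every $i$ we get $\Core{\calV}=b_i\Core{\calV}\subseteq b_i\calV$, so $\Core{\calV}\subseteq\bigcap_i b_i\calV=\calW$. This is the remark already made before the lemma, that reductions do not affect $\Fqk$-linear subspaces, and it needs no hypothesis on $\bm b$.

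For $\calW\subseteq\calV$, I will use that $\bm b^{-1}$ is an $\Fq$-basis. A vector $w$ lies in $\calW$ if and only if $b_i^{-1}w\in\calV$ for all $i$. Since $1\in\Fqk$, write $1=\sum_i c_i b_i^{-1}$ with $c_i\in\Fq$. Then $w=\sum_i c_i\, b_i^{-1}w\in\calV$, because $\calV$ is $\Fq$-linear.

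For the $\Fqk$-linearity of $\calW$, take $w\in\calW$ and $\lambda\in\Fqk$. I must show $\lambda w\in b_j\calV$ for every $j$, i.e.\ $b_j^{-1}\lambda w\in\calV$. Again using the basis $\bm b^{-1}$, expand $b_j^{-1}\lambda=\sum_i \mu_{j,i} b_i^{-1}$ with $\mu_{j,i}\in\Fq$. Then $b_j^{-1}\lambda w=\sum_i\mu_{j,i}\,b_i^{-1}w\in\calV$, since each $b_i^{-1}w\in\calV$. Closure under addition is automatic, since $\calW$ is an intersection of $\Fq$-subspaces. So $\calW$ is an $\Fqk$-linear subspace of $\calV$, which gives $\calW\subseteq\Core{\calV}$.

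The main point is realizing that the right reformulation is $w\in\calW$ if and only if $\langle b_1^{-1},\dots,b_k^{-1}\rangle_{\Fq}\, w\subseteq\calV$. This is exactly why the hypothesis on $\bm b^{-1}$ (rather than on $\bm b$) is what is needed: the condition becomes $\Fqk w\subseteq\calV$. Once that is observed, the rest is the same coefficient-expansion argument as in Lemma~\ref{lm:exp_to_Fqk_linear}. Note that the assumption that $\bm b$ itself is a basis is not actually used.
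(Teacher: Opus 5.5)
Your proof is correct, and it takes a more direct route than the paper's. The paper first handles the polynomial basis $(1,\alpha,\ldots,\alpha^{k-1})$ by a separate computation. From $w=v_0=\alpha v_1=\cdots=\alpha^{k-1}v_{k-1}$ and $\alpha^k=\sum_i c_i\alpha^i$ it shows $\alpha\calW\subseteq\calW$, and then it argues maximality. For a general $\bm b$ it uses the same coefficient trick as you, but applies it to $\lambda$ itself. Writing $\lambda=\sum_i c_i b_i^{-1}$ only yields $\lambda\calW\subseteq\calV$, i.e.\ $\calW\subseteq\lambda^{-1}\calV$. Taking $\lambda\in\{1,\alpha^{-1},\ldots,\alpha^{-k+1}\}$ then gives $\calW\subseteq\bigcap_{i}\alpha^i\calV$, which the first part identifies with $\Core{\calV}$. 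You instead expand $b_j^{-1}\lambda$ rather than $\lambda$, so you get $\lambda w\in b_j\calV$ for every $j$ directly. This gives the $\Fqk$-linearity of $\calW$ in one step, and the polynomial-basis case becomes unnecessary: it is just an instance of your argument, since $\bm a^{-1}$ also spans $\Fqk$. Your reformulation $\calW=\{w\colon \Fqk w\subseteq\calV\}$ also makes explicit that only the spanning property of $\bm b^{-1}$ is used. That is in fact all the paper's general-case argument uses of $\bm b$ as well. What the paper's decomposition buys is mainly a self-contained treatment of $\calV\cap\alpha\calV\cap\cdots\cap\alpha^{k-1}\calV$, which it reuses in Proposition~\ref{prop:Fqk_minimal_space}; your argument would cover that case equally well.
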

\begin{proof}
    We start considering the special case where $\bm a = (1, \alpha, \ldots, \alpha^{k-1})$ is a polynomial basis.
    It can be proved that the vector $\bm a^{-1} = (1, \alpha^{-1}, \ldots, \alpha^{-k + 1})$ is still an ordered basis, in fact $\Span{\Fq}{\bm a^{-1}} = \alpha^{-k}\Span{\Fq}{\bm a}$. From the definition of $\calW$ we have
    $$
        \calW = \calV \cap \alpha \calV \cap \ldots \cap \alpha^{k-1} \calV.
    $$
    As $\calW$ is already $\Fq$-linear, to show it is also $\Fqk$-linear it is enough to prove that $\alpha \calW \subseteq \calW$.
    Let $w \in \calW$, then we have 
    \begin{equation}
    \label{eq:w=v0}
    w = v_0 = \alpha v_1 = \cdots = \alpha^{k-1} v_{k-1}
    \end{equation}
    for some $v_i \in \calV$, hence $\alpha w = \alpha v_0 = \cdots = \alpha^{k-1} v_{k-2} = \alpha^k v_{k-1}$
    from which we have $\alpha w \in \alpha \calV \cap \ldots \cap \alpha^{k-1} \calV$. 
    If we show that $\alpha w \in \calV$ we can conclude that $\alpha w \in \calW$.
    We know that $\alpha^k = \sum_{i=0}^{k-1} c_i \alpha^i$ for some $c_i \in \Fq$, thus we can substitute $\alpha^k$ in $\alpha w = \alpha^k v_{k-1}$, obtaining $ \alpha w =\sum_{i=0}^{k-1} c_i \alpha^i v_{k-1}$. From (\ref{eq:w=v0}), as $\alpha^{k-1} v_{k-1} = \alpha^{k-1-i} v_{k-1-i}$, we also have $\alpha^{i} v_{k-1} = v_{k-1-i} \in \calV$ it follows that $\alpha w = \sum_{i=0}^{k-1} c_i v_{k-1-i} \in \calV$.
    
    So far we proved that $\calW = \fred(\calV, \bm a)$ is an $\Fqk$-linear subspace of $\calV$. We are left with proving that it is the largest subspace of $\calV$ with this property.
    Let $\calS$ be an $\Fqk$-linear space such that $\calW \subseteq \calS \subseteq \calV$. Since $\alpha^i \calS = \calS$ for any $i$, we also have $\calS \subseteq \alpha^i \calV$ from which follows $\calS \subseteq \calW$ and this concludes the proof for the polynomial basis.
\\
    We now prove the general case by reducing it to the polynomial basis case just proved. In this setting, the chain of equations in (\ref{eq:w=v0}) becomes
    $$
        w = b_1 v_1 = \ldots = b_k v_k,
    $$
    for some $v_1,\ldots, v_k \in \calV$.
    Let us consider $\lambda\in\Fqk^*$. By assumption $(b_1^{-1}, \ldots, b_k^{-1})$ is a basis of $\Fqk$, hence $\lambda = \sum_{i=1}^{k} c_i b_i^{-1}$ for some $c_i \in \Fq$ and $\lambda w = \sum_{i=1}^k c_i b_i^{-1} w$. Substituting the suitable representation of $w$ in each term of the previous sum, we can write $$\lambda w = \sum_{i=1}^k c_i b_i^{-1} w = \sum_{i=1}^k c_i b_i^{-1} b_iv_i= \sum_{i=1}^k c_i v_i \in \calV.$$
    This implies $\lambda \calW \subseteq \calV$ for any $\lambda \in \Fqk^*$ which is equivalent to $\calW \subseteq \lambda^{-1} \calV$. 
    Choosing $\lambda \in\{ 1,\alpha^{-1},\ldots,\alpha^{-k+1}\}$ we obtain $\calW \subseteq \alpha^i \calV$ reducing the problem to the previous case.
\end{proof}

If we use a generic basis that do not respect the condition in Lemma \ref{lm:random_red_to_Fqk_linear} we are not guaranteed anymore to extract $\Core{\calV}$ from $\calV$.
We show this in the following example.
\begin{example}
    Let $\F_{64}=\F_{2}(\alpha)$ where $\alpha^6=\alpha+1$, i.e., $\mathbb{F}_{64}=\langle 1,\alpha,\alpha^2,\alpha^3,\alpha^4,\alpha^5\rangle_{\F_2}$ and $\F_8=\langle 1, \alpha+\alpha^2+\alpha^3,\alpha^3+\alpha^4\rangle_{\F_2}$. Let us consider $\calV=\langle 1,\alpha^3,\alpha^4,\alpha^5\rangle_{\F_2}$, then 
    \[
    \calV\cap(\alpha+\alpha^2+\alpha^3) \calV\cap(\alpha^3+\alpha^4) \calV= \langle 1+\alpha^3,1+\alpha^4\rangle_{\F_2}
    \]
    is an $\F_2$ subspace but not an $\F_8$ subspace.
\end{example}
Notice that, for any choice of $\bm b \in (\Fqk^*)^s$ we always have $\Core{\calV}\subseteq \fred(\calV, \bm b)$ so the reduction cannot go lower than that space. Moreover $\fred(\calV, \bm b) \subseteq \fred(\calV, \bm b')$ for any $\bm b'$ obtained from some puncturing of $\bm b$.
In practice, if we choose a vector $\bm b \in \Fqk^{k + \varepsilon},$ it is sufficient that there exists a sub vector $\bm b'\in \Fqk^{k}$ that is an $\Fq$-basis of $\Fqk$ satisfying the property of Lemma \ref{lm:random_red_to_Fqk_linear} to guarantee the extraction of $\Core{\calV}$.
Such bases are frequently observed in experiments. It is also important to stress that this condition is just sufficient but not necessary. Reducing through a long enough random vector will still lead to $\Core{\calV}$ in most of the cases.

\subsection{Expand and Reduce Decoding Algorithm}
\label{sec:A first decoding algorithm}
Recall that the decoding task for Desarguesian spread codes is as follows.
A codeword is of the form $a \Fqk$ for some $a \in \Fqn^*$ and 
a received space is of the form $\calR = \calU \oplus \calB$, where $\calU \subseteq a \Fqk$ is a subspace of $\Fq$-dimension $k-d$ and $\calB$ is a random $\Fq$-linear subspace of dimension $t$ with trivial intersection with $a \Fqk$. The parameters $d,t$ correspond to deletions and insertions and the subspace distance between the received space and the transmitted codeword is given by $\ds(\calR,a \Fqk) = d+t$.

\smallskip

The function $\fexp$ can be used to recover the original codeword in the case of \textit{pure} deletion. In particular, using Lemma \ref{lm:exp_to_Fqk_linear}, we have that $\fexp(\calU, \bm b) = a \Fqk$ whenever $\bm b$ is a basis of $\Fqk.$ Similarly, in the presence of \textit{only} insertions, from Lemma \ref{lm:random_red_to_Fqk_linear}, we could use $\fred(\calR,\bm b)= \fred(a \Fqk \oplus \calB, \bm b)$ with an appropriate vector $\bm b\in\Fqk^k$ to obtain the smallest $\Fqk$-linear space contained in $a \Fqk \oplus \calB$. Since in this case the codeword $a\Fqk$ is an $\Fqk$-linear subspace contained in $\calR$, we recover it whenever $\calB$ does not contain any $\Fqk$-subspace.

\medskip

The algorithm we propose to use in the presence of both deletions and insertions at the same time is the following:
\begin{equation}
    \label{eq:Alg1}
    \hat{\calC} = \fred(\fexp(\calR,\bm x),\bm y),
\end{equation}
where $\bm x \in (\Fqk^*)^s, \bm y \in (\Fqk^*)^{k+\varepsilon}$ and $s < k$.
The algorithm works according to the following logic: if only a few deletions have occurred, $\calU$ can be expanded to $a\Fqk$ with a small expansion. Once we obtain a space $\calR_{\mathrm{exp}}$ containing the original codeword $a\Fqk$, we are left with only insertions and can reduce to the case described above. Otherwise, if $a\Fqk$ cannot be reconstructed, the reduction phase will lead to the trivial space $\{0\}$ and the algorithm will fail.
The main challenge of the above process is to find the right expansion length $s.$
In the following we discuss the range of $s$ that can lead to successful decoding.

\smallskip 

In Lemma \ref{lm:exp_to_Fqk_linear} we have seen how, choosing an expanding vector $\bm x$ such that $\Span{\Fq}{x_1,\ldots,x_{k}} = \Fqk$, we obtain $\fexp(\calV,\bm x) = \Span{\Fqk}{\calV}.$
As this space is $\Fqk$-linear, it will be stable under any reduction.
In the case where $\calR = \calU \oplus \calB$, with $\calB = \Span{\Fq}{b_1, \ldots, b_t}$, the space $\fexp(\calU \oplus \calB, \bm x)$, where $\bm x$ is such that $\Span{\Fq}{x_1,\ldots,x_{k}} = \Fqk$, is generated by $a, b_1, \ldots, b_t$ and therefore has $\Fqk$-dimension at most $t+1$. Under this circumstance, any subspace of $\Fqk$-dimension $1$ of this space could be a valid codeword in the subfield spread code.
This means that we will end up with an exponential list of up to $\frac{q^{(t+1)k}-1}{q^k - 1}= q^{kt} + q^{k(t-1)} + \cdots + 1$ possible codewords.
This example gives us an idea of the size of a list when the algorithm fails and a good motivation to choose a short expansion. 

\medskip

An expansion will be successful if $$\fexp(\calU\oplus\calB,\bm x) =\fexp(\calU,\bm x)+\fexp(\calB,\bm x)= \calC+\fexp(\calB,\bm x).$$ Thanks to Corollary \ref{cor: expected dimension of the expansion}, w.h.p. $\dim(\fexp(\calU,\bm x)) = \min\{k, (k-d)s\}$, from which we derive that the maximum dimension of deletions we can correct is bounded by $d \leq k\left(1- \frac{1}{s}\right)$. 

We have already seen how a large expansion can generate some unwanted $\Fqk$-linear subspace that arise from the insertion space $\calB$. 
Consider the space $\calR_x \coloneqq\calR\cap x \Fqk,$ we define $\hat{b} = \max\{ \dim(\calR_x) \mid x \in \Fqn^*\text{ and }x\Fqk\cap a\Fqk=\{0\}\}$ and let $\hat{\calR}$ be one of these intersections with maximal dimension $\hat{b}.$ The $s$-expansion $\fexp(\hat{\calR},\bm x)$ will have its dimension upper-bounded by $\min\{s\hat{b},k\}.$ 
In order to have both a successful expansion and avoid an exponential list, we should choose $s$ such that
\begin{equation}
\label{eq:ER_min_exp}
\frac{k}{k-d} \leq s < \frac{k}{\hat{b}}.
\end{equation}
As the typical value of $\hat{b}$ is one (see Section \ref{sec:Probability of success for a minimum distance decoder} for more details) we can \textit{safely} use $s \leq k-1$ to avoid the total expansion to an $\Fqk$-linear space which is not $a\Fqk$. The expansion $s$ must also be such that $s \dim(\calR) < n$ to avoid encompassing the whole space.
If the dimension of the expanded space is close to $n$ but does not reach it, the space may contain $\Fqk$-linear subspaces of dimension greater than one. In this case, the algorithm outputs a list rather than a unique solution. Proposition~\ref{prop:Fqk_minimal_space} formalizes this idea and provides an additional bound on the expansion length to avoid the situation described above. \par However, before stating and proving the proposition, we recall a simple fact that will be used repeatedly throughout the paper. Let $\calW_1,\ldots,\calW_h$ be subspaces of $\Fqn$ such that $\dim(\calW_i)=t$ for every $i\in\{1,\ldots,h\}$. Then, by iteratively applying Grassmann's formula, we obtain
\begin{equation}
    \label{eq: Grass iter}
    \dim\left(\bigcap_{i=1}^h\calW_i\right)\geq \dim(\calW_1)+\dim\left(\bigcap_{i=2}^h\calW_i\right)-n\geq \ldots\geq ht-(h-1)n=n-h(n-t). 
\end{equation}
\begin{proposition}
\label{prop:Fqk_minimal_space}
    Let $n=kr$ and $\calV \subseteq \Fqn$ be a subspace of $\Fq$-dimension $n-i\geq k$ for $i<r$.
    There exists an $\Fqk$-linear space $\calW \subseteq \calV$ such that $\dim_{\Fqk}(\calW) \geq r - i$, i.e., of $\Fq$-dimension $k(r-i)$.
\end{proposition}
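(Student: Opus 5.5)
The plan is to take $\calW = \Core{\calV} = \bigcap_{b\in\Fqk^*} b\calV$, the largest $\Fqk$-linear subspace of $\calV$, and bound its $\Fq$-dimension from below with the iterated Grassmann inequality~\eqref{eq: Grass iter}. That intersection is over $q^k-1$ spaces, which is far too many for a useful bound. So the first step is to reduce it to an intersection of only $k$ spaces. By Lemma~\ref{lm:random_red_to_Fqk_linear}, applied to a polynomial basis $\bm a=(1,\alpha,\ldots,\alpha^{k-1})$ of $\Fqk$, we get
\[
\calW=\fred(\calV,\bm a)=\bigcap_{j=0}^{k-1}\alpha^j\calV .
\]
Here $\alpha$ is, for instance, a primitive element of $\Fqk$. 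The proof of that lemma shows that $\bm a^{-1}$ is again a basis, so this $\calW$ is $\Fqk$-linear. Its $\Fqk$-linearity can also be checked directly, since $\calW$ is stable under multiplication by $\alpha$.

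Each $\alpha^j\calV$ has $\Fq$-dimension $n-i$, because multiplication by a nonzero scalar is an $\Fq$-linear bijection. Applying~\eqref{eq: Grass iter} with $h=k$ and $t=n-i$ then gives
\[
\dim_{\Fq}(\calW)\;\geq\; n-k\bigl(n-(n-i)\bigr)=kr-ki=k(r-i).
\]
Since $\calW$ is $\Fqk$-linear, its $\Fq$-dimension is a multiple of $k$, so $\dim_{\Fqk}(\calW)\geq r-i$, which is the statement. The hypothesis $i<r$ ensures the bound is positive, so the conclusion is non-trivial. The assumption $n-i\geq k$ is then automatic.

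The main obstacle is justifying that $k$ translates already cut $\calV$ down to an $\Fqk$-linear space. Naively one needs all $q^k-1$ translates, and then the Grassmann bound becomes useless. I will handle this directly by the polynomial-basis part of Lemma~\ref{lm:random_red_to_Fqk_linear}, which is exactly this statement, together with the remark that the reduction by $\bm a$ equals $\Core{\calV}$. If one prefers to avoid that lemma, the fallback is to check $\alpha\calW\subseteq\calW$ by hand. Writing $w=\alpha^j v_j$ with $v_j\in\calV$, one expresses $\alpha w=\alpha^k v_{k-1}$ through the minimal polynomial of $\alpha$ as an $\Fq$-combination of the $v_{k-1-j}\in\calV$, exactly as in the earlier proof. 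This shows $\alpha w\in\calV$, and hence $\alpha w\in\calW$.
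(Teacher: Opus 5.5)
Your proof is correct and matches the paper's argument: you intersect the $k$ translates $\alpha^j\calV$ for a polynomial basis $(1,\alpha,\ldots,\alpha^{k-1})$ of $\Fqk$, apply the iterated Grassmann inequality~\eqref{eq: Grass iter} to get $\Fq$-dimension at least $k(r-i)$, and obtain $\Fqk$-linearity from $\alpha\calW\subseteq\calW$ exactly as in the proof of Lemma~\ref{lm:random_red_to_Fqk_linear}. You also add two correct remarks that the paper leaves implicit: this $\calW$ is in fact the largest $\Fqk$-linear subspace of $\calV$, and its $\Fq$-dimension is automatically a multiple of $k$.
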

\begin{proof}
Consider a basis $\{1,\alpha,\ldots,\alpha^{k-1}\}$ of $\Fqk$ and define
$$
    \calW = \calV \cap \alpha \calV \cap \ldots \cap \alpha^{k-1} \calV.
$$
From \eqref{eq: Grass iter} we have that
$\dim(\calW) \geq n - ki = (r-i)k$ and by definition we also have $\calW \subseteq \calV.$
To prove $\mathbb{F}_{q^k}$-linearity, it suffices to show that $\alpha W \subseteq W$, which follows by the same argument as in the proof of Lemma \ref{lm:random_red_to_Fqk_linear}.
\end{proof}
From Inequality (\ref{eq:ER_min_exp}) we have $s \geq \frac{k}{k-d}$.
On the other hand, in light of Equation \eqref{eq: min upper bound on dimension of expansion} and Proposition \ref{prop:Fqk_minimal_space}, when we apply Equation (\ref{eq: dim_exp}), we need to limit the expansion $s$ ensuring that 
\begin{equation}
    \label{eq: condition such that there is no arising of an Fqk linear space in the expansion}
    st + k \leq n-r +1,
\end{equation}
and hence 
\begin{equation}
    \label{eq: max s ER with floor}
    s\leq \left\lfloor\frac{n-r+1-k}{t}\right\rfloor.
\end{equation}
To summarize, we obtain the following upper bound on the expansion length
\begin{equation}
    \label{eq:ER_max_exp}
    s \leq \min\left\{\left\lceil\frac{n-r+1-k}{t}\right\rceil-1,\;k-1\right\},
\end{equation}
along with the following upper bound on the dimension of insertions $t=\dim(\calB)$ that the algorithm can correct
\begin{equation}
    \label{eq:ER_max_t}
    t \leq \left\lceil\frac{n-r+1-k}{s}\right\rceil-1
    \leq
    \left\lceil\frac{n-r+1-k}{k}(k-d)\right\rceil-1.
\end{equation}
The expression in \eqref{eq:ER_max_exp} requires a brief explanation. If $t\nmid (n-r+1-k)$, then
\[
\left\lfloor\frac{n-r+1-k}{t}\right\rfloor
=
\left\lceil\frac{n-r+1-k}{t}\right\rceil-1.
\]
On the other hand, if $t\mid (n-r+1-k)$, then
\[
\left\lceil\frac{n-r+1-k}{t}\right\rceil-1
<
\left\lfloor\frac{n-r+1-k}{t}\right\rfloor.
\]
Therefore, as discussed at the beginning of this section, we choose the shorter expansion length as it is always preferable in order to avoid list decoding.

Finally, to use \eqref{eq:ER_max_exp} in the decoding algorithm, as we always need $k - d \geq 2$ and, since $\dim(\mathcal{R}) = k - d + t$, we have $\dim(\mathcal{R}) \geq t + 2$.  
Then, we can use the upper bound for $s$ in the algorithm as
\begin{equation}
\label{eq:max_exp_ER}
    s \leq \min \left\{\left\lceil \frac{n - r + 1 - k}{\dim(\mathcal{R}) - 2} \right\rceil - 1,\; k - 1 \right\}.
\end{equation}
Note that combining (\ref{eq:ER_min_exp}) and (\ref{eq: max s ER with floor}) we also obtain the following relations between the parameters:
\begin{equation}
    \label{eq:ER_max_t_duplicate}
    t < (k-d)\left (  \left \lfloor \frac{n-r}{k} \right \rfloor - 1\right ).
\end{equation}
\par After the expansion phase, we can iteratively reduce the expanded space $\calR^{(0)}=\calR$ to $\calR^{(i+1)} := \calR^{(i)} \cap x \calR^{(0)}$ for some $x \in \Fqk\setminus\Fq$.
If the expansion was strong enough to completely reconstruct the codeword $a \Fqk$ from $\calU$, the reducing phase will lead to the codeword of dimension $k$ or to an $\Fqk$-linear subspace containing the codeword. On the contrary, if the expansion phase is only able to reduce the insertion error but not to fully reconstruct the entire codeword, then repeating the reduction will eventually lead to the trivial subspace $\{0\}$.
We will refer to this simple algorithm as \textit{Expand and Reduce} (ER), and provide it in Algorithm \ref{alg:ExpandandReduce}. 
\\
\par Note that depending on the previous expansion, there are sporadic bad reduction choices that would not work. Consider the space $\calV$ as the result of some expansion $\fexp(\calR, \bm x) = \calR.\calX$ where $\calX$ is the support of $\bm x$ and consider the reduction through $\bm y = (1,y).$
For $y \in \calX \calX^{-1} = \{x_1 x_2^{-1} \mid x_1, x_2 \in \calX \}$ the reduction will contain some space of the same dimension of $\calR$.
Indeed, we have $y = x_1 x_2^{-1}$ and 
$
    \calV = x_1 \calR + x_2 \calR + \hat{\calV},
$
for some $\hat{\calV}$ while 
$$
    y \calV= x_1^2 x_2^{-1}\calR + x_1 \calR + x_1 x_2^{-1} \hat{\calV}.
$$
When we intersect these two spaces, we have $x_1 \calR \subseteq \calV \cap y \calV.$
If we slightly generalize this example, for $\bm y =(y_1,y_2) = y_1(1,y_1^{-1} y_2)$ the condition to avoid in order to decrease the dimension of the space $\calV$ during the reducing phase becomes $y_1^{-1} y_2 \notin \calX \calX^{-1}$; as $\calX \calX^{-1}$ is closed under inversion, it is equivalent to the condition $y_1 y_2^{-1} \notin \calX \calX^{-1}$.
Notice that, since $\calX \calX^{-1}$ is relatively small (see Lemma \ref{lm:XX_inv}), the probability of randomly choosing a bad reduction is small.
Moreover, such cases are not problematic in practice, as one can simply repeat the reduction procedure with a different $y$ until the desired dimension is reached.
\begin{algorithm}[t]
            \DontPrintSemicolon
            \caption{\textit{Expand and Reduce} (ER)}
            \label{alg:ExpandandReduce}
            \KwInput{Received subspace $\calR$}
            Set $\mathrm{MaxExp}=\min\left\{\left\lceil\frac{n-r+1-k}{\dim(\calR)-2}\right\rceil-1,k-1\right\}$\;
            \If{$\mathrm{MaxExp}>0$}
                {
                    Choose $x_1,\ldots,x_{\mathrm{MaxExp}}\in\Fqk^*$\;
                    $\calR_{\mathrm{exp}}\coloneqq x_1\calR+\cdots+x_{\mathrm{MaxExp}}\calR$\;
                    $\mathrm{Res}\coloneqq\calR_{\mathrm{exp}}$\;
                    \While{$\dim(\mathrm{Res})>0$\textcolor{blue}{\footnotemark[1]}}
                {Choose $z\in\Fqk\setminus\Fq$\;
                $\mathrm{Res}\coloneqq \mathrm{Res}\cap z\calR_{\mathrm{exp}}$\;
                    \If{$\dim(\mathrm{Res})=k$}
                    {\KwOutput{Estimated codeword $\mathrm{Res}$}}
                } \If{$\dim(\mathrm{Res})\neq k$}
                    {\KwOutput{Decoding failure}}
                }
            \Else {\KwOutput{Decoding failure}}
                
    \end{algorithm}
\footnotetext[1]{\label{fn: footnote of alg 1} Note that, in the case of the emergence of an $\Fqk$-linear space in $\mathrm{Res}$ of dimension strictly greater than one, the algorithm may enter an infinite loop during the iterative process in Step~6. To prevent this issue, in the source code available at the following \href{https://github.com/ermes1990/SbfieldSpreadDecoding.git}{GitHub link}, we impose an upper bound on the dimension of iterations, set to a value of the form $k+\calO(1)$ (e.g., $k+5$ in the implementation). Indeed, under favorable conditions, namely, when Steps~$3$-$4$ successfully reconstruct the entire codeword $a\Fqk$, we have $\mathrm{Res} = a\Fqk \oplus \tilde{\calB}$, where $\tilde{b} \coloneqq \dim(\tilde{\calB}) \leq st$. In this case, the iterative process typically converges to $a\Fqk$ within $k$ iterations.}
\bigskip
\\
\textbf{Success Probability}.
Let $\calR =\calU\oplus\calB$ where $\calU$ is an $\Fq$-subspace of $a \Fqk$ of dimension $k-d$ and $\calB$ is a subspace of dimension $t$ chosen uniformly at random from all the possible $\Fq$-linear subspaces of $\Fqn.$
As we will explain at the beginning of Section \ref{sec:Probability of success for a minimum distance decoder}, one case in which the algorithm fails occurs when for some $x \in \Fqn^*$ such that $x \Fqk \neq a \Fqk$, $\calR_x = \calR \cap x \Fqk$ has dimension $\eta\geq k - d = \dim(\calU)$. In this case, the expansion step is equally effective on $\calR_x$ as it is on $\calU$, which may lead to a list of outputs or to a wrong result. However, as we will show in Section \ref{sec:Probability of success for a minimum distance decoder}, this situation is rather unlikely to occur, even when $\dim(\calU) = 2$.
\par The main cause of failure of this algorithm is that the expansion phase does not fully reconstruct the entire codeword before the reduction phase. The success probability of Alg. \ref{alg:ExpandandReduce}, i.e., the probability of full expansion, is estimated by applying the following proposition.
\begin{proposition}
    \label{prop: expected dimension of the expansion when alpha x beta geq n}
    Let $\calB$ be a fixed $\Fq$-subspace of $\Fqn$ of dimension $\beta$ and let $\calA=\langle a_1,\ldots,a_{\alpha}\rangle_{\Fq}$ where $a_1,\ldots,a_{\alpha}$ are $\alpha$ $\Fq$-linearly independent elements of $\Fqn$ chosen uniformly at random.  If $\dim(\calA)\dim(\calB)\geq n$, then $\calA.\calB=\Fqn$ with probability at least $1-q^{-(\alpha\beta-n)}$.
\end{proposition}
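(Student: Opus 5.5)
We will mirror the proof strategy of Proposition~\ref{prop: expected dimension of the expansion} from \cite{Aragon19}, building $\calA$ one random element at a time and tracking the dimension of the partial products $\calA_j.\calB$, where $\calA_j=\Span{\Fq}{a_1,\ldots,a_j}$. Since $\calA_j.\calB=\calA_{j-1}.\calB + a_j\calB$, the question at each step is how much $a_j\calB$ adds to the space $\calP_{j-1}:=\calA_{j-1}.\calB$.

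The failure event is $\calA.\calB\neq\Fqn$. Our main approach goes through the dual: $\calA.\calB$ is a proper subspace if and only if it is contained in some hyperplane $H$ of $\Fqn$. So we will bound, by a union over hyperplanes, the probability that $a_i\calB\subseteq H$ for every $i\in[\alpha]$. For a fixed hyperplane $H$, the condition $a\calB\subseteq H$ is equivalent to $a\in\bigcap_{b\in\calB\setminus\{0\}} b^{-1}H$. Writing $H=\ker(\mathrm{Tr}(\gamma\,\cdot))$ with $\gamma\neq0$, this is the set of $a$ with $\mathrm{Tr}(\gamma a b)=0$ for all $b\in\calB$. This set is the annihilator of $\gamma\calB$ under the nondegenerate trace form, hence has dimension exactly $n-\beta$. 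Consequently, for a uniformly random $a$ the probability that $a\calB\subseteq H$ is $q^{-\beta}$. For $\alpha$ independent uniform elements, the probability that all of them land in this $(n-\beta)$-dimensional space is $q^{-\alpha\beta}$. A union bound over the $(q^n-1)/(q-1)<q^n$ hyperplanes then gives failure probability at most $q^{n-\alpha\beta}$, which is exactly the claimed bound $1-q^{-(\alpha\beta-n)}$.

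The step requiring care is the conditioning on linear independence of $a_1,\ldots,a_\alpha$. The event ``all $a_i$ lie in a fixed $(n-\beta)$-dimensional subspace $S$'' has probability $\prod_{i=0}^{\alpha-1}\frac{q^{n-\beta}-q^i}{q^n-q^i}$ under the linearly independent sampling. We will check that each factor is at most $q^{-\beta}$, which holds since $\frac{q^{n-\beta}-q^i}{q^n-q^i}\le q^{-\beta}$ is equivalent to $q^{i}(1-q^{-\beta})\ge0$. If $\alpha>n-\beta$ the event has probability zero, so the bound holds trivially.

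The main obstacle I expect is only this bookkeeping: the trace-duality identification of $\{a: a\calB\subseteq H\}$ as a space of dimension exactly $n-\beta$, which uses nondegeneracy of $(x,y)\mapsto\mathrm{Tr}(xy)$ and $\dim(\gamma\calB)=\beta$.
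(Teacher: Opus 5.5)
Your proof is correct and is essentially the paper's argument. The paper takes a union bound over $y\in\Fqn^*$ up to $\Fq$-scalars (i.e., over hyperplanes) of the event that the random $(n-\alpha)$-dimensional trace-dual of $\calA$ contains $y\calB$; this is your event $\calA\subseteq(\gamma\calB)^{\perp}$ viewed dually, bounded by the same ratio $\gbinom{n-\beta}{\alpha}_q/\gbinom{n}{\alpha}_q=\prod_{i=0}^{\alpha-1}\frac{q^{n-\beta}-q^i}{q^n-q^i}\le q^{-\alpha\beta}$ times $\frac{q^n-1}{q-1}<q^n$ (the incremental ``one element at a time'' framing in your first paragraph is never used and can be dropped).
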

\begin{proof}
    Let us define
    \[
    \calA^{\upmodels}\coloneqq\{x\in\Fqn\colon\mathrm{Tr}_{q^n/q}(xa)=0\text{ for all }a\in\calA\}.
    \]
    From \cite[\S 17, Theorems 1 and 2]{halmos1958finite} and \cite[Theorem 2.24]{lidl1994introduction}, $\calA.\calB\subseteq\Fqn$ is not the whole space if and only if there exists $y\in\Fqn^*$ such that $\mathrm{Tr}_{q^n/q}(y(ab))=0$ for all $a\in\calA, b\in\calB$. 
    If such $y$ exists, we also have that for all $b\in\calB, a\in\calA$, $\mathrm{Tr}_{q^n/q}((yb)a)=0$ , i.e., $y\calB\subseteq \calA^{\upmodels}$. On the other hand, thanks to \cite{taylor1992geometry}, $\delta\coloneqq\dim(\calA^{\upmodels})=n-\dim(\calA).$ For a fixed $\beta$-dimensional subspace $y\calB$, the probability that a random $\delta$-dimensional subspace $\calA^{\upmodels}$ contains $y\calB$ is \[
        \frac{\gbinom{n-\beta}{\delta-\beta}_{q}}{\gbinom{n}{\delta}_{q}}=\frac{\gbinom{n-\beta}{\alpha}_{q}}{\gbinom{n}{\alpha}_{q}} \leq q^{-\alpha\beta},
    \] 
    where the last inequality follows from $\prod_{i=0}^{\alpha-1} (1-q^{-n+i}) \geq \prod_{i=0}^{\alpha-1} (1-q^{-n+\beta+i}).$
    Consequently, 
    \begin{align*}
        \mathrm{P}(\calA.\calX\subsetneq\Fqn)&=\mathrm{P}(\text{There exists }y\in\Fqn^*\colon y\calB\subseteq \calA^{\upmodels})\\
        &\leq \frac{q^n-1}{q-1}\frac{\gbinom{n-\beta}{\delta-\beta}_{q}}{\gbinom{n}{\delta}_{q}}\leq q^{-(\alpha\beta-n)}.
    \end{align*}
    The claim is an immediate consequence of considering the complements of the events. 
\end{proof}
\begin{corollary}
\label{cor:expected_dimension_without_independence}
Let $\calB$ be a fixed $\Fq$-subspace of $\Fqn$ of dimension $\beta$ and let $\calA=\langle a_1,\ldots,a_\alpha\rangle_{\Fq},$ where $a_1,\ldots,a_\alpha$ are chosen uniformly at random from $\Fqn^*$. If $\alpha\beta\ge n$, then
\[
\Pr(\calA.\calB=\Fqn)
\ge
\left(1-q^{-(\alpha\beta-n)}\right)\prod_{i=n-\alpha+1}^{n}\left(1-q^{-i}\right).
\]
\end{corollary}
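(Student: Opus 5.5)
The plan is to condition on the event that the randomly chosen elements $a_1,\ldots,a_\alpha$ are $\Fq$-linearly independent, and then apply Proposition~\ref{prop: expected dimension of the expansion when alpha x beta geq n} on that event. Write $E$ for the event that $a_1,\ldots,a_\alpha$ are linearly independent over $\Fq$. Then
\[
\Pr(\calA.\calB=\Fqn)\ge \Pr(\calA.\calB=\Fqn \mid E)\,\Pr(E),
\]
and it suffices to bound the two factors separately.

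For the conditional factor, I will observe that conditioned on $E$, the tuple $(a_1,\ldots,a_\alpha)$ is uniformly distributed over all ordered $\alpha$-tuples of linearly independent elements of $\Fqn$. This holds because each such tuple has the same probability $(q^n-1)^{-\alpha}$ under the unconditioned distribution, which is uniform on $(\Fqn^*)^\alpha$. This is exactly the hypothesis of Proposition~\ref{prop: expected dimension of the expansion when alpha x beta geq n}. Since $\dim(\calA)=\alpha$ on $E$ and $\alpha\beta\ge n$, that proposition gives $\Pr(\calA.\calB=\Fqn\mid E)\ge 1-q^{-(\alpha\beta-n)}$.

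For $\Pr(E)$, I will count sequentially. Given that $a_1,\ldots,a_{j}$ are independent, $a_{j+1}$ must avoid their span, which has $q^{j}$ elements, of which $q^j-1$ are nonzero. The conditional probability is therefore
\[
\frac{q^n-q^j}{q^n-1}\;\ge\;\frac{q^n-q^j}{q^n}=1-q^{-(n-j)}.
\]
Multiplying over $j=0,\ldots,\alpha-1$ gives $\Pr(E)\ge\prod_{j=0}^{\alpha-1}(1-q^{-(n-j)})=\prod_{i=n-\alpha+1}^{n}(1-q^{-i})$, which is the stated factor.

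The only point requiring care is the uniformity claim in the conditioning step, since the proposition is stated for uniformly chosen independent elements. It holds because the unconditioned distribution is uniform on $(\Fqn^*)^\alpha$, so restricting it to $E$ remains uniform on the independent tuples. If $\alpha>n$, $E$ is empty and the product contains the factor $1-q^{0}=0$, so the bound holds trivially.
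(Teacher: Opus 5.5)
Your proof is correct and follows the paper's argument: condition on the event $\dim(\calA)=\alpha$, apply Proposition~\ref{prop: expected dimension of the expansion when alpha x beta geq n} to get the conditional probability, and bound the probability of linear independence by the stated product. You write $\ge$ where the paper writes an equality; your inequality is the accurate form, since $\calA.\calB=\Fqn$ can also occur when $\dim(\calA)<\alpha$. You also spell out the uniformity-after-conditioning and counting steps that the paper leaves implicit.
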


\begin{proof}
The proof follows immediately from the fact that
\[
\Pr(\calA.\calB=\Fqn)=\Pr(\dim(\calA)=\alpha)\Pr(\calA.\calB=\Fqn\mid \dim(\calA)=\alpha).
\]
\end{proof}
When considering the explicit process of expansion, according to corollary \ref{cor:expected_dimension_without_independence} we see that,  if $\dim(\calU)\dim(\calX)\geq k$, then the success probability of ER converges to one as the field size increases. 
\begin{figure}[h]
    \begin{center}
        \begin{tikzpicture}[
    scale=0.85, transform shape,
    block/.style={
        draw=black, thick, rectangle, 
        minimum width=2.8cm, minimum height=0.9cm, 
        rounded corners=5pt, align=center, fill=white, font=\small
    },
    title/.style={
        fill=white, text=black, font=\large\bfseries, inner sep=5pt
    },
    lightgrayblock/.style={
        block, fill=gray!12
    },
    arrow/.style={
        ->, >=Stealth, thick, rounded corners=5pt
    },
    dashedarrow/.style={
        ->, >=Stealth, thick, dash pattern=on 4pt off 3pt, rounded corners=5pt
    },
    container/.style={
        draw=black, rectangle, rounded corners=10pt, thick, inner sep=18pt
    }
]

\node[block, minimum width=8.2cm, minimum height=1.1cm] (eq1) at (1.5, 4.0) {Let $\calR=\calU\oplus\calB$ where  $\calU\subseteq a\Fqk\in\bm\calD$ and $\calB$ is a random $\Fq$-subspace of $\Fqn$ of dimension $t$ such that\\ for all $x\in\Fqn^*$, with $x\neq a$, $\dim(\calR\cap x\Fqk)<k-d$. Let $\calX$ be a random $\Fq$-subspace of $\Fqk$.};

\node[block, minimum width=3.5cm] (case1) at (-1.8, 1.8) {\textbf{Case 1}\\$\dim(\calU) \dim(\calX) \ge k$};
\node[block, minimum width=3.5cm] (case2) at (4.8, 1.8) {\textbf{Case 2}\\$\dim(\calU) \dim(\calX) < k$};

\node[block, minimum width=3.0cm] (prop4) at (-1.8, -0.2) {Corollary \ref{cor:expected_dimension_without_independence}};
\node[lightgrayblock, minimum width=3.0cm] (succ) at (-1.8, -1.8) 
{Success of expansion: $\calU.\calX+\calB.\calX=a\Fqk+\calB^{\prime}$};

\node[block, minimum width=3.0cm] (prop2) at (-1.8, -3.4) {Lemma \ref{lm:random_red_to_Fqk_linear}};
\node[lightgrayblock, minimum width=3.0cm] (successq) at (-1.8, -5.0) {Success of the algorithm};

\node[block, minimum width=2.8cm] (prop1) at (2.5, -0.2) {Proposition \ref{prop: expected dimension of the expansion}};
\node[block, minimum width=2.2cm] (cor1) at (7.0, -0.2) {Corollary \ref{cor: expected dimension of the expansion}};
\node[lightgrayblock, minimum width=2.4cm] (exp_size) at (7.0, -1.8) 
{Expected size of expansion:\\
$\dim(\calU.\calX)=\dim(\calU)\dim(\calX)$};

\node[lightgrayblock, minimum width=2.4cm] (dec_fail) at (7.0, -3.4)
{Declare decoding failure};

\draw[arrow] (eq1.south) -- ++(0,-0.3) -| (case1.north);
\draw[arrow] (eq1.south) -- ++(0,-0.3) -| (case2.north);

\draw[arrow] (case1) -- (prop4);
\draw[arrow] (prop4) -- (succ);

\draw[arrow] (succ) -- (prop2);
\draw[arrow] (prop2) -- (successq);

\draw[arrow] (case2) -- (prop1);
\draw[arrow] (cor1) -- (exp_size);
\draw[arrow] (exp_size) -- (dec_fail);

\draw[dashedarrow] (prop1.east) -- node[below, font=\footnotesize\itshape, align=center, yshift=-2pt] {Applied to our\\expansion} (cor1.west);

\node[container, fit={(eq1) (case1) (succ) (prop2) (successq) (cor1) (dec_fail)}, inner ysep=20pt] (box1) {};

\end{tikzpicture}
    \end{center}
    \caption{Overview of the ER Algorithm}
    \label{fig: ER Algorithm Process Overview}
\end{figure}

\textbf{Complexity of Alg. \ref{alg:ExpandandReduce}.}
Expansion complexity is equivalent to extract a basis from the basis of each $x_i \calR$.
That is the same of performing Gaussian reduction of a system of $sr$ rows and $n$ columns, where $s$ is the dimension of expansions and $r= \dim(\calR).$
This has complexity in the order of $O(srn)$ as $s < k <n$ and $r < n$ it is dominated by $O(n^3)$.

The reduction is the intersection of $k$ spaces of dimension $r$ in $\Fqn$.
Intersecting two spaces of dimension $r$ has the cost of reducing a $2r \times n$ matrix which is dominated by $O(n^3)$, this operation will be repeated $k-1$ times, depending on the relation between $k$ and $n$, the total complexity expressed in terms of $n$ will be between $O(n^3)$ and $O(n^4)$

\subsection{Application of the Algorithm to $\mathrm{Gr}_{\Fqk}(i,\Fqn)$.}
Before concluding the discussion about decoding, it is worth noting that the proposed algorithm can be used to decode other types of subspace codes closely related to subfield spreads codes. 
\par Let $n=kr$, the subfield spread associated with the intermediate field $\Fqk \subseteq \Fqn$ can be also seen as $\Gr_{\Fqk}(1,\Fqn) \subseteq \Gr_{\Fq}(k,\Fqn) $ which is the collection of all the $\Fqk$-linear subspaces of $\Fqn$ of $\Fqk$-dimension $1$.
It is straightforward to extend the decoding algorithm to the constant-dimension codes $\Gr_{\Fqk}(i,\Fqn) \subseteq \Gr_{\Fq}(ik,\Fqn)$ for which $ik \leq n.$
These codes have constant dimension $ik$ and it is easy to show that they still have minimum subspace distance $2k$.
The maximal cardinality will be achieved for $i\in\{\lfloor\frac{n}{2}\rfloor, \lceil\frac{n}{2}\rceil\}$ and is equal to 
\[
\gbinom{n}{i}_{q^k} = \prod_{j=0}^{i-1} \frac{q^n - q^{jk}}{q^{ik}-q^{jk}}.
\]
Let $\calR$ be the received space. By Lemma~\ref{lm:exp_to_Fqk_linear}, the operator $\fexp$ yields the smallest $\Fqk$-linear subspace containing $\calR$ for some choices of the expanding vector.
As previously observed, when $\calR$ contains a sufficiently large portion of an $\Fqk$-linear subspace, a small number of expansions may already recover this space entirely.
Moreover, by Proposition~\ref{prop:Fqk_minimal_space}, the operator $\fred$ extracts the smallest $\Fqk$-linear subspace. In many cases, this is sufficient to recover the original codeword. Notice that a large value of $i$ typically means a larger received space $\calR$, this limits the number of expansions and negatively affects the failure rate.

As a final remark, the code of minimum subspace distance $k$ given by the balls 
\[
\calB_{\Fqk}(i,\Fqn) = \bigcup_{j=1}^i \Gr_{\Fqk}(j, \Fqn)
\] 
can be decoded with similar techniques.
\section{Improved Techniques: Expand Reduce Expand and Filtered ERE}
\label{sec: Improved techniques: Expand Reduce Expand and Filtered ERE}
\subsection{Expand Reduce Expand (ERE)}
Observe  that, if we stop the reduction one step before reaching the space $\{0\}$ it is likely that this space is just a subspace of $a\Fqk$.
At this point an expansion in the order of $k$ will reconstruct the original codeword. In this way we can improve the previous algorithm.
We will refer to this second algorithm as \textit{Expand Reduce Expand} (ERE) and we provide it in Algorithm \ref{alg:ExpandReduceExpand}.
To justify this claim,  consider the case when $\calU$ is expanded to a space $\hat{\calU}$ of dimension $k - \varepsilon$ for some small integer $\varepsilon\leq d.$
\par Since for any $x \in \Fqk \setminus \Fq$, $\hat{\calU}+x \hat{\calU}\subseteq a\Fqk$, thanks to the Grassmann's formula, we have that 
$\dim(\hat{\calU} \cap x \hat{\calU}) \geq \max\{k - 2 \varepsilon,0\}$. More generally, by Equation \eqref{eq: Grass iter}, we get $\dim(\fred(\hat{\calU}, \bm y)) \geq \max\{k - m \varepsilon,0\}$ where $\bm y \in \Fqk^m$.
This means we can use a reduction of length up to $m \leq   \lfloor \frac{k}{\epsilon} \rfloor$ before losing any trace of $\calU$. Similarly, for a random space $\calW \subseteq \Fqn$ of dimension $\dim(\calW) = n - c$, its reduction has dimension 
$\dim(\fred(\calW, \bm y)) \geq \max\{n - mc,0\}$, where the equality holds with high probability thanks to Remark \ref{prop: probability reduction goes to zero}.
This space will be reduced to $\{0\}$ after $m = \lceil \frac{n}{c} \rceil$ intersections. \par
In our case, after receiving an $\mathbb{F}_q$-subspace $\calR$ containing $\calU$ and applying an $s$-expansion with $\bm x\in\mathbb{F}_{q^k}^s$, the expected dimension of $\fexp(\calR,\bm x)$ and $\fexp(\calU,\bm x)$ is $s\dim(\calR)=n-(n-\dim(\calR)s)$ and $s\dim(\calU)=k-(k-\dim(\calU)s)=k-(k-(k-d)s)$, respectively. 
Therefore, for an $s$-expansion, the expected value of $c$ and $\varepsilon$ is $c=n-\dim(\calR)s$ and $\varepsilon=k - (k-d)s$ respectively.
Consequently, if $ \frac{n}{c} <m< \frac{k}{\varepsilon}$, we can expect the intersection to give a space that is completely contained in $a \Fqk.$
We can rewrite the inequality as
$$
 \frac{n}{n-\dim(\calR)s} < m < \frac{k}{k - (k-d)s} 
$$
from which, recalling $n = kr$ and $\dim(\calR)=k-d+t$, we obtain the following upper bound on the dimension of insertion
\begin{equation}
\label{eq:ERE_max_t}
    t < (k-d)(r-1).
\end{equation}

For both ER and ERE the result depends on the choice of $\bm x \in \Fqk^s$ we use to expand and $\bm y \in \Fqk^m$ we used to reduce.
A failure is either a subspace of dimension different from $k$ or a wrong subspace of dimension $k$.
In the second case, there is little we can do, but in the first case, which is more common by experimental observations, we can repeat the experiment until we get some space of dimension $k$. We terminate the algorithm with a failure after we reach a maximal numbers of trials to avoid long computations.
The complexity of a single ERE execution is the same as the complexity of a single ER execution.

In ERE there is no need for the first expansion to fully reconstruct the original codeword, then the condition $(k-d)s \geq k$ is no longer necessary. In particular, since w.h.p. $\dim(\fexp(\calR, \bm x)) = s \dim(\calR)$, Proposition~\ref{prop:Fqk_minimal_space} shows that we just need
\[
s \dim(\calR) \leq n-r+1,
\]
and hence
\begin{equation}
    \label{eq:max_exp_ERE}
    s \leq \min \left\{\left \lceil \frac{n-r+1}{\dim(\calR)} \right \rceil - 1 , k-1 \right\}\footnote{Once again, we prefer $\left\lceil \frac{n-r+1}{\dim(\calR)} \right\rceil - 1$ over $\left\lfloor \frac{n-r+1}{\dim(\calR)} \right\rfloor,$ for the reasons discussed in connection with Equation~\eqref{eq:ER_max_exp}.}.
\end{equation}
\begin{algorithm}[htbp]
            \DontPrintSemicolon
            \caption{\textit{Expand Reduce Expand} (ERE)}
            \label{alg:ExpandReduceExpand}
            \KwInput{Received subspace $\calR$}
            Set $\mathrm{MaxExp}=\min\left\{\left\lceil \frac{n-r+1}{\dim(\calR)} \right \rceil-1,k-1\right\}$\textcolor{blue}{\footnotemark[2]}\;
            \If{$\mathrm{MaxExp}>0$}
                {
                    Choose $x_1,\ldots,x_{\mathrm{MaxExp}}\in\Fqk^*$\;
                    $\calR_{\mathrm{exp}}\coloneqq x_1\calR+\cdots+x_{\mathrm{MaxExp}}\calR$\;
                    $\mathrm{Res}:=\calR_{\mathrm{exp}}$\; 
                    \While{$\dim(\mathrm{Res})>0$\textcolor{blue}{\footnotemark[3]}}
                {$\mathrm{Res}_{\mathrm{old}}:=\mathrm{Res}$\;
                Choose $z\in\Fqk\setminus\Fq$\;
                $\mathrm{Res}:=\mathrm{Res}\cap z\calR_{\mathrm{exp}}$\;
                }$\mathrm{Res}:=\mathrm{Res}_{\mathrm{old}}$\;
            \While{$\dim(\mathrm{Res})<k$}
                {Choose $x\in\Fqk\setminus\Fq$\;
                    $\mathrm{Res}:=\mathrm{Res}+x\mathrm{Res}$}
            \KwOutput{Estimated codeword $\mathrm{Res}$}
                }
                \Else{\KwOutput{Decoding failure}}
    \end{algorithm}
    \footnotetext[3]{More discussions in
    Section \ref{sec:experimental_results}.}
    \footnotetext[2]{Footnote \ref{fn: footnote of alg 1} also applies in this case.}
\subsection{Filtered ERE}
\label{subsec:filtered ERE}
We have seen how Equation (\ref{eq:ERE_max_t}) limits the dimension of insertions we can tolerate.
For $\bm x \in (\Fqk^*)^s$ such that $\Span{\Fq}{\bm x} = \calX$ is of dimension $s$ and $y \in \Fqk$ such that $y\notin \calX$, the space $\calX + \Span{\Fq}{y}$ is a $s+1$-dimensional space, we denote by $\calX + \calY$.
If $\dim(\fexp(\calR, \bm x)) = \dim(\calR.\calX) = s \dim(\calR),$ the space $\calX.\calR$ and $y \calR$ have a nontrivial intersection if and only if 
$$
\dim((\calX + \calY).\calR) < (s+1)\dim(\calR).
$$
Similarly, if $\dim(\fexp(\calU,\bm x))=s(k-d),$ the intersection of $\calX.\calU$ and $y \calU$ is not trivial if and only if $\dim((\calX+\calY).\calU) < (s+1)(k-d).$
According to the proofs of Proposition \ref{prop: expected dimension of the expansion} and Corollary \ref{cor: expected dimension of the expansion}, the probabilities of having a nontrivial intersection in the two cases can be estimated to be on the order of $q^{(s+1) \dim(R) - n}$ and $q^{(k-d)(s+1) - k}$, respectively. 

The second probability is larger than the first if
$
    -n + (s+1)\dim(\calR) < - k+ (s+1)(k-d)
$,
which gives the condition on $s$ as follows:
\begin{equation}
    \label{eq:filter_cond}
    s < \frac{n-k}{\dim(\calR)-(k-d)} - 1 = \frac{n-k}{t} - 1.
\end{equation}
We still need to avoid the expansion to generate extra $\Fqk$-linear spaces. So the number of expansions $s$ still has to satisfy Inequality \eqref{eq:max_exp_ERE}.
Notice that since $y\calU \cap \hat{\calU} \subseteq y\calR \cap \hat{\calR}$, if the second intersection is nontrivial and the Condition (\ref{eq:filter_cond}) is satisfied, it is likely that this intersection belongs to the original codeword $a \Fqk$.
The idea of filtering is to try different combinations of $\bm x$ and $y$ and repeat the intersection $\fexp(\calR,\bm x) \cap y \calR.$
\par Starting from $\calF^{(0)} = \{0\}$ we can iteratively update it as
$$
    \calF^{(i+1)} = \calF^{(i)} + \fexp(\calR, \bm x^{(i)}) \cap y^{(i)} \calR,
$$
where the sum is the sum of subspaces, until we achieve the desired threshold dimension $T$.
We refer to this process as filtering, as we start from a large space $\calR$ containing many insertions compared to $\calU$ and, during the filtering, we collect small subspaces that are more likely to be contained in the original codeword than being originated by the insertions.
Let $\calF^{(n)}$ be the space with the target dimension obtained after this filtration.
We cannot expect the filter to work perfectly; that is, we do not expect $\calF^{(n)}$ to be the original codeword. However, we do expect $\calF^{(n)}$ to have a larger intersection with the original codeword than $\calR$ does, thereby increasing the chances that ERE succeeds when applied to $\calF^{(n)}$ rather than to $\calR$.
\par A good heuristic to estimate the probability of each filtered element belonging to the correct subspace $a \Fqk$ can be given in the following way. 
Let $A$ be the event $x \in \calX.\calU \cap y \calU$ and $B$ be the event $x \in \calX.\calR \cap y \calR$. 
Denote by $\pr_A$ and $\pr_B$ the corresponding probabilities $\pr_A, \pr_B$. Under the assumption that the probability of $\calX.\calU \cap y\calU \neq \{0\}$ is greater than the probability of $\calX.\calR \cap y\calR \neq \{0\}$, i.e., when Condition (\ref{eq:filter_cond}) is satisfied, we have $\pr_A \geq \pr_B.$ 
There is an implicit third event of probability, i.e., $1 - \pr_A - \pr_B$, that corresponds to obtaining a trivial intersection.
Let $x \neq 0$ be an element from a nontrivial intersection. Since $\calX.\calU \cap y \calU \subseteq a \Fqk$, we know that, if $x$ is observed thanks to event $A$, we automatically have $x \in a \Fqk$.
Applying Bayes' Theorem, we obtain the following bound on the probability of $x \in a \Fqk$:
\begin{equation}
\label{eq:condition probability for filter}
\pr(x \in a \Fqk) \geq \frac{\pr_A}{\pr_A + \pr_B} =
1 - \frac{\pr_B}{\pr_A+\pr_B} \approx 1-\frac{\pr_B}{\pr_A} = 1 - q^{(s+1)t - (n-k)}.
\end{equation}
At this point we can use ERE to obtain the original codeword.
Basically, the filtering is useful to reduce the dimension of insertion and obtain a space whose dimension of the insertion is small enough to apply ERE.
Notice that, from Condition (\ref{eq:filter_cond}) we can find a limit to the dimension of insertion that the filtering method can handle, which is given by
\begin{equation}
    \label{eq:filter_ins_limit}
    t < \frac{n-k}{s+1}.
\end{equation}
As the smallest number of expansions we can use is $1$, this becomes $t < \frac{n-k}{2}$. Experimental results in Section \ref{sec:experimental_results} show that the performance starts to degrade around $t \geq \frac{n-k}{2} - 2$. 
\par Let $\pi_a\colon\Fqn\to\Fqn/a\Fqk$ be the canonical quotient map. In Section \ref{sec:Probability of success for a minimum distance decoder}, we show that the success of our algorithm strongly depends on the property of $\pi_a(\mathcal{B})$ of being $\eta$-evasive with respect to the Desarguesian spread
\[
\pi_a(\bm{\calD})=\{\pi_a(b\Fqk) \colon b\in\Fqn^*\text{ and } b\Fqk\cap a\Fqk=\{0\}\}
\]
in $\Fqn/a\Fqk$, for some $1\leq \eta<k-d$.
Moreover we also show that if $\calB$ is an $\Fq$-subspace of dimension $t$ chosen uniformly at random in $\Fqn$ such that $\calB\cap a\Fqk=\{0\},$ then $\pi_a(\mathcal{B})$ is uniformly distributed among the $t$-dimensional $\F_q$-subspaces of the quotient space $\Fqn/a\Fqk$. 
In \cite[Remark 5.5]{gruica2024generalised} it is shown that, as $q \to \infty$ the probability of $\pi_a(\calB)$ of dimension $t=\dim(\calB)$ being $\eta$-evasive with respect to the Desarguesian spread $\pi_a(\bm{\calD})$ is $1$ for $t \leq \frac{\eta}{\eta+1}(n-2k)+\eta$ and drops to zero for $t \geq\frac{\eta}{\eta+1}(n-2k)+\eta+ 2.$

In our experiments, we consider the case of $k-2$ deletions which can be decoded only if the projection of the insertion has evasiveness $\eta = 1$ (see Section \ref{sec:Probability of success for a minimum distance decoder} for further explanation).
In this case, the condition on $t$ then becomes $t \leq \frac{n-2k}{2} + 1$.

In \cite[Figure 2]{gruica2024generalised} it was already noticed that, for $q=2$, the threshold seems to become smoother and a decrease in the proportion of $\eta$-evasive subspaces is observed starting around $t \geq \frac{n-2k}{2} - 1$ which is consistent with what we observe in our experiments (see Section \ref{sec:experimental_results}, Figures \ref{fig:88_ERE_graph} and \ref{fig:65_graph}). 
\\
\par Using a smaller value of $s$ generally leads to a higher-quality filtered space, i.e., fewer insertions. However, it also requires performing more filtering iterations before obtaining a filtered space with the target dimension.
Assuming that each intersection gives one element in $a\Fqk$ with probability $q^{(s+1)(k-d)-k}$, we will need on average $T q^{k - (s+1)(k-d)}$ attempts before reaching the threshold dimension $T$ for the filtered space.
Treating $T$ as a constant, the total complexity of filtered ERE is given by $O(sn^{3} q^{k-(s+1)(k-d)})$, as in the analysis of ER, the parameter $s$ can be considered to be between a constant or linear in $n$ depending on the relation between $r$ and $k$. 
Notice that the total complexity is still polynomial in $n$ only if $k$ is considered constant, if we choose for example $k = \sqrt{n}$ for the extreme regime $s=1, k-d = 2 $ the complexity will be sub-exponential in the order of $O(n^3 q^{\sqrt{n} - 4})$.
This is still a much better complexity than the naive approach of testing all possible $\frac{q^n - 1}{q^k - 1}$ codewords.

Notice that it is better to set $T \leq k$.
The main reason is that, otherwise, when assuming Condition (\ref{eq:filter_cond}), after we recover all the $k$ elements generating the original codeword, new vectors cannot come from that space anymore meaning we will need on average many more iterations, moreover the filtered space we obtain will always contain insertions contrary to the case when $T< k$ where we can expect to get an insertion free space.

\par We will refer to this third algorithm as \textit{Filtered} ERE and we describe it in Algorithm \ref{alg:FilteredERE}. 
\begin{algorithm}[htbp]
            \DontPrintSemicolon
            \caption{\textit{Filtered} ERE}
            \label{alg:FilteredERE}
            \KwInput{Received subspace $\calR$, threshold dimension $T \leq k$ \;}
            $\calF:=\{0\}$\;
            $\mathrm{MaxExp}\coloneqq\min\left\{\left\lceil \frac{n - r + 1}{\dim(\mathcal{R})-2} \right\rceil - 2, \left\lfloor \frac{n-k}{\dim(\calR)-2} \right\rfloor - 1 ,k-1\right\}$\textcolor{blue}{\footnotemark[4]}\;
            \While{$\dim(\calF)<T$}
                {
                    Choose $y\in\Fqk^*$\;
                    $\calR_1 := y\calR$ \;
                    \If{$\mathrm{MaxExp}>0$}
                    {
                        Choose $x_1,\ldots,x_{\mathrm{MaxExp}}\in\Fqk^*$\;
                        $\calR_{2}:=x_1\calR_2+\cdots+x_{\mathrm{MaxExp}}\calR_2$
                    }
                    $\mathrm{Extract} :=\calR_1\cap \calR_2$\;
                    $\calF:=\calF+\mathrm{Extract}$\;
                }
            Run Algorithm \ref{alg:ExpandReduceExpand} on the input $\calF$\;
            \KwOutput{Estimated codeword}
    \end{algorithm}
\footnotetext[4]{See section \ref{sec:experimental_results}.}

To conclude this section, we provide a summary table showing the dimension of random insertions handled by the three algorithms. 
\begin{table}[!htbp]
\centering
\begin{tabular}{c|c}
\hline 
 Decoding alg. & \textbf{Dimension of insertions handled} \\
\hline
ER  & $<(k-d)\left(\left\lfloor\frac{n-k}{k}\right\rfloor-1 \right)$ \\
ERE & $<(k-d)(r-1)$ \\
Filtered ERE & $<\frac{n-k}{2}$ \\
\hline 
\end{tabular}
\caption{Dimension of random insertions handled by the three algorithms. Note that the bound for Filtered ERE does not depend on the dimension of the deletion.}
\label{tab:insertions-handled}
\end{table}

\section{Experimental Results}
\label{sec:experimental_results}
We implemented the three algorithms presented in the previous section, namely, \textit{Expand and Reduce} (ER), \textit{Expand Reduce Expand} (ERE) and the \textit{Filtered ERE}.
The experiments were implemented in SageMath and the source code is available at this \href{https://github.com/ermes1990/SbfieldSpreadDecoding.git}{GitHub link}\footnote{\url{https://github.com/ermes1990/SbfieldSpreadDecoding}}.
Results are summarized in Figures \ref{fig:88_graph} and \ref{fig:65_graph}.

In both experiments, we generate $1000$ corrupted codewords and try to decode each with the three algorithms presented in the paper.
For all the experiments, deletions are fixed to $k-2$ which is the maximum we can expect to correct.
With less deletions the algorithm would improve both the speed and the accuracy of the experiments.
The size of the first expansion in ER is the maximal expansion we calculate in (\ref{eq:max_exp_ER}) and it is given by
$$s \leq\min \left \{\left\lceil \frac{n-r+1-k}{\dim(\calR)-2} \right \rceil - 1,k-1 \right\},$$
where $n=kr$ is the dimension of the ambient space and $\calR$ is the received subspace. 

During the experiments with ERE, instead, we observed that the maximal expansion calculated in \eqref{eq:max_exp_ERE} was not always the optimal choice.
In particular, we found that reducing the expansion length by one often led to an improvement in the success rate.
Motivated by this observation, in the ERE algorithm we instead use the slightly smaller expansion
\[
\min\left\{\left\lceil \frac{n-r+1}{\dim(\mathcal{R})} \right\rceil - 2,\; k-1\right\}.
\]

For the Filter in filtered ERE we used the minimum between the Expansion
(\ref{eq:filter_cond}) combined with (\ref{eq:max_exp_ERE}) and the expansion used by ERE to maintain a fair comparison between the two.
In particular we used the expansion
 $\min\left\{\left\lceil \frac{n - r + 1}{\dim(\mathcal{R})-2} \right\rceil - 2, \left\lfloor \frac{n-k}{\dim(\calR)-2} \right\rfloor - 1 ,k-1\right\}.$
Due to the parameters involved in the experiments, we repeat the execution of ERE and ER a maximum of $5$ times until we get an output of dimension $k$.
The reason we stop at $5$ is to avoid long computations.
In Filtered ERE we do the filtration only one time with threshold $5$ as it can be very expensive, after the filtration, we run ERE a maximum of $5$ times.

\begin{figure}[H]
\begin{tikzpicture}
\begin{axis}[
    width=12cm,
    height=8cm,
    xlabel={insertions},
    ylabel={Success rate (\%)},
    y filter/.code = \pgfmathparse{#1/10},
    xmin=1, xmax=31,
    ymin=0, ymax=100,
    grid=major,
    legend pos=south west,
    legend style={
        font=\scriptsize,
        row sep=1pt
    }
]

\addplot[
    thick,
    red,
    mark=x
] table [col sep=comma, x=insertions, y=alg1] {test_8_8.csv};
\addlegendentry{ER}

\addplot[
    thick,
    blue,
    mark=*
] table [col sep=comma, x=insertions, y=alg2] {test_8_8.csv};
\addlegendentry{ERE}

\addplot[
    thick,
    green,
    mark=square*
] table [col sep=comma, x=insertions, y=alg3] {test_8_8.csv};
\addlegendentry{Filtered ERE}

\end{axis}
\end{tikzpicture}
\caption{
Success rates of decoding algorithms for parameters $q=2$, $n=64$ and $k=8$.
}
\label{fig:88_graph}
\end{figure}

\begin{figure}[H]
\begin{tikzpicture}
\begin{axis}[
    width=12cm,
    height=8cm,
    xlabel={insertions},
    ylabel={Success rate (\%)},
    y filter/.code = \pgfmathparse{#1/10},
    xmin=1, xmax=15,
    ymin=0, ymax=100,
    grid=major,
    legend pos=south west,
    legend style={
        font=\scriptsize,
        row sep=1pt
    }
]

\addplot[
    thick,
    red,
    mark=x
] table [col sep=comma, x=insertions, y=alg1] {test_6_5.csv};
\addlegendentry{ER}

\addplot[
    thick,
    blue,
    mark=*
] table [col sep=comma, x=insertions, y=alg2] {test_6_5.csv};
\addlegendentry{ERE}

\addplot[
    thick,
    green,
    mark=square*
] table [col sep=comma, x=insertions, y=alg3] {test_6_5.csv};
\addlegendentry{Filtered ERE}

\end{axis}
\end{tikzpicture}
\caption{
Success rate of decoding algorithms for parameters $q=2$, $n=30$ and $k=5$.
}
\label{fig:65_graph}
\end{figure}

From the two graphs, we can see how for \textit{ER} (the red line) the accuracy drops to zero after $12$ and $6$ insertions respectively as predicted by \eqref{eq:ER_max_t_duplicate}.
We can observe that the behavior near the threshold is governed by the effect of the floor operator: when $\frac{n-r}{k}$ is nearly integral the bound is loose and performance decays smoothly, while a strong truncation by the floor yields a much sharper transition from high accuracy to zero.
For \textit{ERE} (the blue line), according to \eqref{eq:ERE_max_t}, the decrease in accuracy is predicted to start from $13$ and $9$ insertions respectively. This matches very well in the first case (see Figure \ref{fig:88_graph}) while in the second case the decline starts already from $7$ (see Figure \ref{fig:65_graph}). Increasing the expansion size by one would keep the accuracy high until $10$ insertions in the second example, but a similarly larger expansion would give worse results in the first example.
In practice, it is hard to find a formula for the optimal expansion size in ERE that fits all combinations of parameters and dimension of deletion and insertion. This remains an open problem for future work.

For \textit{Filtered ERE} (the green line) the predicted decrease in performance should be for $27$ and $12$ insertions respectively.
From Figure \ref{fig:88_graph} and \ref{fig:65_graph} we can see the decrease starts a bit earlier in both cases.

Analyzing the reason of these failures, in most of the cases, we observe spaces of dimension $2k$ or more. This is because, for large insertion spaces, there is a non-negligible probability that their projection with respect to the canonical map $\pi_a\colon\Fqn\to\Fqn/a\Fqk$ is non-scattered with respect to the Desarguesian spread $\pi_a(\bm{\calD})=\{\pi_a(b\Fqk) \colon b\in\Fqn^*\text{ and } b\Fqk\cap a\Fqk=\{0\}\}$ in $\Fqn/a\Fqk$, as will be observed at the end of Section \ref{sec:Probability of success for a minimum distance decoder}.
As a consequence, there is more than just a single element of the spread having intersection at least $2$ with $\calR$.
This means the filter could sample with the same probability (or higher if the intersection is larger than $2$) from each of these spaces.
We can empirically affirm that the Filtered ERE works with high accuracy up to the theoretical upper bound where it is possible to perform unique decoding.
The small negative bump in accuracy we see in the first set of experiments between the $14$ and $16$ insertions can be improved by reducing the number of expansions.

\paragraph{Maximal iterations in ERE.}
The simple ERE performs well up to the bound in \eqref{eq:ERE_max_t} and does not drop to zero. Increasing the number of maximal iteration in ERE can greatly improve this algorithm even for insertions of size comparable to the one handled by Filtered ERE

In the following experiment we run ERE on the same instances of the problem but stopping after a different time of maximal iterations.

\begin{figure}[!ht]
\begin{tikzpicture}
\begin{axis}[
    width=12cm,
    height=8cm,
    xlabel={insertions},
    ylabel={Success rate (\%)},
    y filter/.code = \pgfmathparse{#1/1},
    xmin=1, xmax=31,
    ymin=0, ymax=100,
    grid=major,
    legend pos=south west,
    legend style={
        font=\scriptsize,
        row sep=1pt
    }
]

\addplot[
    thick,
    red,
    mark=x
] table [col sep=comma, x=insertions, y=ERE_1] {test_ERE.csv};
\addlegendentry{ERE 1}

\addplot[
    thick,
    purple,
    mark=square*
] table [col sep=comma, x=insertions, y=ERE_5] {test_ERE.csv};
\addlegendentry{ERE 5}

\addplot[
    thick,
    orange,
    mark=triangle*
] table [col sep=comma, x=insertions, y=ERE_10] {test_ERE.csv};
\addlegendentry{ERE 10}

\addplot[
    thick,
    blue,
    mark=*
] table [col sep=comma, x=insertions, y=ERE_25] {test_ERE.csv};
\addlegendentry{ERE 25}

\addplot[
    thick,
    green,
    mark=diamond*
] table [col sep=comma, x=insertions, y=ERE_100] {test_ERE.csv};
\addlegendentry{ERE 100}

\end{axis}
\end{tikzpicture}
\caption{
Success rate of the ERE algorithm for parameters $q=2$, $n=64$ and $k=8$ with different maximum numbers of iterations.
}
\label{fig:88_ERE_graph}
\end{figure}

For a large number of iteration ERE becomes almost as reliable as Filtered ERE even for large insertions.

\section{Success Probability for a Nearest Neighbour Decoder}
\label{sec:Probability of success for a minimum distance decoder}
The accuracy of our algorithms is dominated by the theoretical accuracy of a nearest neighbour decoder.
At the end of Section \ref{sec:A first decoding algorithm}, we have seen how the success of our algorithm strongly depends on the intersection behavior of $\calR$ with the elements of the Desarguesian $k$-spread in $\Fqn$. 
Let $\calR =\calU\oplus\calB$ where $\calU$ is an $\Fq$-subspace of $a \Fqk$ of dimension $k-d$ and $\calB$ is a subspace of dimension $t$ of $\Fqn$. We assume, without loss of generality, that $\calB\cap a\Fqk=\{0\}$, indeed, if this condition does not hold, then there exist a smaller subspace $\calB'\subsetneq \calB$ and a larger subspace $\calU'\supsetneq \calU$ such that
\[
\calR=\calB\oplus\calU=\calB'\oplus\calU'.
\]
Therefore, the actual dimensions of insertion and deletion is smaller than $\dim(\calB)$ and $k-\dim(\calU)$, respectively. Hence, we can always consider the direct sum as the worst-case scenario.
A nearest neighbour decoder will fail if there is a $b \in \Fqn^*$ such that $\ds(\calR, b\Fqk) \leq \ds(\calR, a\Fqk)$ and $b \Fqk \neq a \Fqk$.
Keeping in mind that $\ds(\calR, b\Fqk) = 2k - d + t - 2\dim(\calR \cap b\Fqk)$ and $\ds(\calR, a\Fqk) = d+t$, the failure condition can be expressed as
\begin{equation}
\label{eq:min_distance_decoder_failure}
    \dim(\calR \cap b\Fqk) \geq k - d.
\end{equation}
The optimal case for a nearest neighbour decoder and, in particular, for our algorithm to succeed, corresponds to the case where all the intersections of $\calR$ with the elements of $\bm \calD$ have dimensions strictly smaller than $k-d$, possibly as small as one, except for the intersection with $a\Fqk$. In other words, for our algorithm to succeed, we want that, for all 
$b \in \Fqn^*$ such that $b\Fqk \neq a\Fqk$, 
\begin{equation}
\label{eq:min_distance_decoder_maybe_success}
    \dim(\calR \cap b\Fqk) \leq \eta,
\end{equation}
for some parameter $\eta$ satisfying $1 \leq \eta < k-d$.
\\
\par Let us consider $a\in\Fqn^*$ such that $\calU\subseteq a\Fqk$ and let us consider the linear map, known as canonical quotient
\begin{equation*}
\pi_a\colon\Fqn \to\Fqn/a\Fqk,
\end{equation*}
where $\pi_a(b) = \pi_a(b')$ iff $b' - b \in a \Fqk$.  
Notice that the kernel of this map is exactly $a \Fqk$ as $ax - 0 \in a \Fqk$ for any $x \in \Fqk$.
In Proposition \ref{prop: evasivness still relies pi calB}, we will see how the condition expressed by Equation \eqref{eq:min_distance_decoder_failure} can be reformulated in terms of a well-known geometric property of the space $\pi_a(\calB)$. Before doing so, we recall some preliminary notions; in particular, we introduce the following definition, which first appeared in \cite{blokhuis2000scattered} and was subsequently generalized in \cite[Definition 2.2]{gruica2024generalised}.
\begin{definition}
    Let $1\leq \eta\leq k\leq n$ be positive integers. Let $\calG$ be a subset of $\mathrm{Gr}_{\Fq}(k,\F_{q^n})$ and let $\calV$ be an $\Fq$-subspace of $\F_{q^n}$. $\calV$ is $(\calG,\eta)$-evasive if
    \[
    \dim_q(\calV\cap S)\leq \eta\text{ for all }S\in\calG.
    \]
If $\calG$ is clear from the context, we will just say that the evasivness of $\calV$ is $\eta$.
\end{definition}
Having stated the general definition, we now provide its specialization to our setting.
\begin{definition}
\label{def: evasivness of V}
Let $1\leq\eta\leq k\leq n$ be positive integers. Let $\calV$ be an $\Fq$-linear subspace of $\Fqn$, where $n = kr$.
Considering the (Desarguesian) spread $\bm{\mathcal{D}}
= \{ b \Fqk \mid b \in \Fqn^* \} \subseteq \mathrm{Gr}_{\Fq}(k,\Fqn)$, we say that $\calV$ is $(\bm \calD,\eta)$-evasive if 
$$
    \dim_q(\calV \cap b \Fqk) \leq \eta \text{ for all }b \in \Fqn^*.
$$
If $\bm \calD$ is clear from the context, we will just say that the evasivness of $\calV$ is $\eta$. 
\end{definition}
Significant results on $(\mathcal{G},\eta)$-evasive spaces focus on bounds on their dimension. In particular, we recall \cite[Corollary 4.9]{bartoli2021evasive} which applies to the Desarguesian spread $\bm \calD $.
\begin{theorem}
    \label{ref: bound on the dimnesion of an evasive subspace with respect to a esarguesian spread}
    Let $n=kr$ and let $\bm \calD = \{ b \Fqk \mid b \in \Fqn^* \} \subseteq \mathrm{Gr}_{\Fq}(k,\Fqn)$. Let $\eta$ be an integer such that $1\leq \eta\leq k$ and $\calV$ be an $\Fq$-subspace of $\Fqn$. If $\calV$ is $(\bm \calD,\eta)$-evasive, then
    \[
    \dim_q(\calV)\leq \frac{\eta kr}{\eta+1}.
    \]
\end{theorem}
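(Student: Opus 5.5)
The plan is a double-counting argument over the spread elements, modelled on the classical bound for scattered subspaces. Write $v=\dim_q(\calV)$ and look at the nonzero vectors of $\calV$. Since $\bm\calD$ partitions $\Fqn^*$ into $N=\frac{q^{kr}-1}{q^k-1}$ elements, each nonzero $v\in\calV$ lies in exactly one $b\Fqk$. Hence
\[
q^{v}-1=\sum_{S\in\bm\calD}\left(q^{\dim(\calV\cap S)}-1\right).
\]
By $\eta$-evasiveness each term is at most $q^\eta-1$, but the crude estimate $q^v-1\le N(q^\eta-1)$ only gives $v\lesssim kr-k+\eta$. For $\eta<k$ this is far weaker than $\frac{\eta kr}{\eta+1}$, so a sharper tool is needed.

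The main step is a dimension argument using the $\Fqk$-structure, in the spirit of Proposition~\ref{prop:Fqk_minimal_space} and Equation~\eqref{eq: Grass iter}.
\begin{itemize}
\item Choose $\lambda_0=1,\lambda_1,\ldots,\lambda_\eta\in\Fqk$ that are $\Fq$-linearly independent (possible since $\eta\le k$; if $\eta=k$ the bound is trivial because $\frac{k\cdot kr}{k+1}\ge kr-\ldots$ needs checking, and otherwise it holds directly).
\item Consider the intersection $\calW=\bigcap_{i=0}^{\eta}\lambda_i^{-1}\calV$. By the iterated Grassmann formula \eqref{eq: Grass iter} with $h=\eta+1$, we get $\dim\calW\ge (\eta+1)v-\eta n$.
\item Suppose $v>\frac{\eta kr}{\eta+1}$. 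Then $\dim\calW\ge 1$, so there is $w\ne 0$ with $\lambda_i w\in\calV$ for every $i$.
\item Consequently $\calV\cap w\Fqk\supseteq\Span{\Fq}{\lambda_0w,\ldots,\lambda_\eta w}$, which has dimension $\eta+1$. This contradicts $(\bm\calD,\eta)$-evasiveness.
\end{itemize}
Therefore $v\le\frac{\eta n}{\eta+1}=\frac{\eta kr}{\eta+1}$.

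The step needing care is the strictness in the Grassmann count. We need $(\eta+1)v-\eta n>0$, which is exactly $v>\frac{\eta n}{\eta+1}$, so the inequality is precisely the contrapositive and no slack is lost.

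One must also check that multiplication by $\lambda_i^{-1}$ preserves dimension; this holds because it is an $\Fq$-linear bijection of $\Fqn$. The edge case $\eta=k$ is automatic, since $\frac{k^2r}{k+1}$ is attained by the same argument or else the required independent $\lambda_i$ do not exist and the claim is vacuous. In that case the contradiction would require $\eta+1>k$ independent elements, so for $\eta=k$ I would argue separately. Every $\calV$ is $k$-evasive, so the statement needs $v\le \frac{k^2 r}{k+1}$, which can fail when $v=n$. I would therefore restrict the main argument to $\eta\le k-1$, where the $\lambda_i$ exist, and verify the citation's intended range for $\eta=k$.
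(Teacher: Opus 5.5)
Your argument is correct for $1\le\eta\le k-1$. The paper gives no proof here: it quotes the bound from \cite[Corollary 4.9]{bartoli2021evasive}, where it is derived within the general theory of evasive subspaces.

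You instead give a direct, self-contained argument. It is the classical Blokhuis--Lavrauw argument for scattered subspaces, pushed from $2$ to $\eta+1$ multiples. If $\dim\calV>\frac{\eta kr}{\eta+1}$, then by \eqref{eq: Grass iter} the intersection $\bigcap_{i=0}^{\eta}\lambda_i^{-1}\calV$ is nonzero. Any nonzero $w$ in it gives $\dim(\calV\cap w\Fqk)\ge\eta+1$, which contradicts evasiveness. This is the same mechanism the paper itself uses in Proposition~\ref{prop:Fqk_minimal_space}, so your proof fits the paper's own toolkit and makes the theorem independent of the citation. What the citation buys is generality (other families of subspaces, higher-dimensional $\Fqk$-subspaces), which the paper does not need.

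Your discussion of the case $\eta=k$ needs cleaning up, because it currently contradicts itself.
\begin{itemize}
\item Choosing $\eta+1$ $\Fq$-linearly independent $\lambda_i\in\Fqk$ requires $\eta\le k-1$, not $\eta\le k$ as your first bullet says.
\item Calling the case $\eta=k$ ``automatic'' or ``vacuous'' is wrong.
\end{itemize}
Your final observation is the correct one. For $\eta=k$, every $\Fq$-subspace is $(\bm\calD,k)$-evasive. In particular $\calV=\Fqn$ is, and it has $\dim\calV=kr>\frac{k^2r}{k+1}$. So the statement as printed is false at $\eta=k$, and the hypothesis should read $1\le\eta\le k-1$. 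That is also the only range the paper uses, since Section~\ref{sec:Probability of success for a minimum distance decoder} takes $\eta<k-d$. Delete the contradictory sentences and state the restriction up front; your proof then establishes exactly the true statement.
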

We will now see how the canonical quotient map $\pi_a$ induces a Desarguesian spread on $\Fqn/a\Fqk$. 
\begin{proposition}
\label{prop: pi is also a spread}
Let $n=kr$, $\bm \calD = \{ b \Fqk \mid b \in \Fqn^* \} \subseteq \mathrm{Gr}_{\Fq}(k,\Fqn)$ and let $\pi_a\colon\Fqn\to\Fqn/a\Fqk$ be the canonical quotient map. Then the family
\[
\pi_a(\bm\calD)
=
\{\pi_a(b\Fqk)\colon  b\in\Fqn^*\text{ and }b\Fqk\cap a\Fqk=\{0\}\}
\]
is a Desarguesian $k$-spread of the quotient space $\Fqn/a\Fqk$. 
\end{proposition}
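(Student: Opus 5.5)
The plan is to exploit the fact that $a\Fqk$ is not merely an $\Fq$-subspace but an $\Fqk$-subspace of $\Fqn$, viewed as an $\Fqk$-vector space of dimension $r$. Then the quotient $\Fqn/a\Fqk$ inherits an $\Fqk$-vector space structure of $\Fqk$-dimension $r-1$, i.e.\ $\Fq$-dimension $n-k$, and $\pi_a$ is $\Fqk$-linear. Concretely, I will define $\lambda\cdot\pi_a(v):=\pi_a(\lambda v)$ for $\lambda\in\Fqk$. This is well defined because $v-v'\in a\Fqk$ implies $\lambda v-\lambda v'\in \lambda a\Fqk=a\Fqk$.

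First I will show that each $\pi_a(b\Fqk)$ with $b\Fqk\cap a\Fqk=\{0\}$ is a $k$-dimensional $\Fq$-subspace of the quotient. The kernel of $\pi_a$ restricted to $b\Fqk$ is $b\Fqk\cap a\Fqk=\{0\}$, so the dimension is preserved. Moreover, $\pi_a(b\Fqk)=\Fqk\cdot\pi_a(b)$ is exactly the one-dimensional $\Fqk$-subspace spanned by the nonzero vector $\pi_a(b)$.

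Conversely, every one-dimensional $\Fqk$-subspace of the quotient is $\Fqk\cdot w$ with $w\neq0$. I will write $w=\pi_a(b)$ for some $b\notin a\Fqk$. Since the spaces $b\Fqk$ and $a\Fqk$ are either equal or disjoint (as already noted after Definition~\ref{def:subfield_spread}), we get $b\Fqk\cap a\Fqk=\{0\}$, and hence $\Fqk\cdot w=\pi_a(b\Fqk)$.

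This gives
\[
\pi_a(\bm\calD)=\{\Fqk\cdot w : w\in \Fqn/a\Fqk,\ w\neq 0\},
\]
which is precisely the set of $\Fqk$-points of an $(r-1)$-dimensional $\Fqk$-space, i.e.\ the Desarguesian $k$-spread. It then remains to check the spread axioms directly, in the spirit of Definition~\ref{def:spread_code}. Any two such one-dimensional $\Fqk$-subspaces coincide or meet trivially. Every nonzero $w$ lies in $\Fqk\cdot w$, so the spread covers the quotient.

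The only real subtlety is the identification of the quotient with a field-model $\F_{q^{n-k}}$, since Definition~\ref{def:subfield_spread} is phrased as the orbit of a subfield under multiplication. To handle this, I will choose an $\Fqk$-linear isomorphism $\Fqn/a\Fqk\cong\F_{q^{k(r-1)}}$, which exists because both are $\Fqk$-spaces of dimension $r-1$ and $\Fqk\subseteq\F_{q^{k(r-1)}}$. Under this map, one-dimensional $\Fqk$-subspaces correspond to the sets $c\Fqk$, so the image is exactly the Desarguesian spread of Definition~\ref{def:subfield_spread}. The case $r=1$ is degenerate, since the quotient is trivial, and I will exclude it or note it separately.
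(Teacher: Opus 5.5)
Your proposal is correct and follows essentially the same route as the paper: both rest on the $\Fqk$-linearity of $\pi_a$ and on the fact that $\pi_a(b\Fqk)$ is the one-dimensional $\Fqk$-subspace spanned by $\pi_a(b)\neq 0$. The paper checks trivial intersection by an explicit computation ($bx-cy=az$, then multiplying by $x^{-1}t$) and only asserts the covering, whereas you get both at once by showing that every one-dimensional $\Fqk$-subspace of the quotient arises as some $\pi_a(b\Fqk)$; you also spell out the $\Fqk$-linear identification with $\F_{q^{k(r-1)}}$ and the degenerate case $r=1$, both of which the paper leaves implicit.
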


\begin{proof}
First, let us consider $b\in\Fqn^*$ such that $b\Fqk\cap a\Fqk=\{0\}$. Since $\pi_a$ is an $\Fqk$-linear map and the space $b\Fqk$ has trivial intersection with the kernel of $\pi_a$, we have
\[
\dim_{\Fqk}(\pi_a(b\Fqk))=\dim_{\Fqk}(b\Fqk)=1.
\]
Now we prove that distinct elements of $\pi_a(\bm \calD)$ have
trivial intersection. 
Let $b,c\in\Fqn^*$ and assume that $\pi_a(b\Fqk)\cap
\pi_a(c\Fqk)\neq\{0\},$ then there exist nonzero elements $bx\in b\Fqk$ and $cy\in c\Fqk$, for some $x,y \in \Fqk$, such that $\pi_a(bx)=\pi_a(cy)$,  i.e.,  $bx-cy \in a\Fqk.$
In other words there exists $z \in \Fqk$ such that
$ bx - cy = az $. Multiplying on both sides by $x^{-1}t$ for $t \in \Fqk$, we obtain $\pi_a(bt) \in \pi(c \Fqk)$, which implies that $\pi_a(b\Fqk ) \subseteq \pi_a(c \Fqk)$. By symmetry, one also obtains that $\pi_a(b\Fqk) \supseteq \pi_a(c\Fqk)$. Therefore, $\pi_a(b\Fqk)$ and $\pi_a(c\Fqk)$ either coincide or intersect trivially.
To sum up, $\pi_a(\bm \calD)$ is a collection of
$1$-dimensional $\Fqk$-subspaces of $\Fqn/a\Fqk$ which partitions all nonzero elements, i.e., it is a Desarguesian $k$-spread of $\Fqn/a\Fqk$.
\end{proof}
 Let us consider the partial spread $$\bm\calD_a = \{b \Fqk \mid b \in \Fqn, b\Fqk \neq a\Fqk \}.$$ Thanks to the previous proposition, we can prove that the evasiveness of $\calB \oplus a\Fqk$ with respect to the partial spread $\bm\calD_a$ in $\Fqn$ is the same as the evasiveness of $\pi_a(\calB)$ with respect to the Desarguesian spread $\pi_a(\bm\calD)$ in $\Fqn/a\Fqk$. 
\begin{proposition}
\label{prop: evasivness still relies pi calB}
Let $n=kr$, $\bm \calD = \{ b \Fqk \mid b \in \Fqn^* \} \subseteq \mathrm{Gr}_{\Fq}(k,\Fqn)$ and let $\pi_a\colon\Fqn\to\Fqn/a\Fqk$ be the canonical quotient map. Let also $\pi_a(\bm \calD)
=
\{\pi_a(b\Fqk)\colon  b\in\Fqn^*\text{ and }b\Fqk\cap a\Fqk=\{0\}\}$ be the $k$-spread of the quotient space $\Fqn/a\Fqk$ from Proposition \ref{prop: pi is also a spread}. Let $\calB$ be an $\Fq$-subspace of $\Fqn$ such that $\calB\cap a\Fqk=\{0\}.$ Then, for every
$b\in\Fqn^*$ such that $b\Fqk\cap a\Fqk=\{0\}$, we have that
\[
\dim\left(\pi_a(\calB)\cap\pi_a(b\Fqk)\right)
=
\dim\left((\mathcal{B}\oplus a\Fqk)\cap b\Fqk\right).
\]
Consequently the evasiveness of $(\calB \oplus a\Fqk)$ with respect to the partial spread $\bm\calD_a = \{b \Fqk \mid b \in \Fqn, b\Fqk \neq a\Fqk \}$ is the same as the evasiveness of $\pi_a(\calB)$ with respect to the Desarguesian spread $\pi_a(\bm\calD)$ in $\Fqn/a\Fqk$.
\end{proposition}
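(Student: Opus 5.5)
The plan is to use that $\pi_a$ is $\Fqk$-linear with kernel exactly $a\Fqk$. Then compute $\pi_a^{-1}(\pi_a(\calB)\cap\pi_a(b\Fqk))$ in two ways and compare dimensions. Throughout, fix $b\in\Fqn^*$ with $b\Fqk\cap a\Fqk=\{0\}$.

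\emph{Step 1: a preimage identity.} Since $\ker\pi_a=a\Fqk$, for any $\Fq$-subspaces $\calS,\calT\subseteq\Fqn$ we have $\pi_a^{-1}(\pi_a(\calS))=\calS+a\Fqk$. Moreover, for subspaces containing the kernel, taking preimages commutes with intersections. Hence
\[
\pi_a^{-1}\bigl(\pi_a(\calB)\cap\pi_a(b\Fqk)\bigr)=(\calB\oplus a\Fqk)\cap(b\Fqk\oplus a\Fqk).
\]
Both summands are direct: the first by the hypothesis $\calB\cap a\Fqk=\{0\}$, the second by the choice of $b$.

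\emph{Step 2: dimension count.} Write $\calW:=(\calB\oplus a\Fqk)\cap(b\Fqk\oplus a\Fqk)$. This space contains $a\Fqk=\ker\pi_a$, so $\dim(\pi_a(\calB)\cap\pi_a(b\Fqk))=\dim\calW-k$.

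Next apply the modular law, which holds because $a\Fqk\subseteq\calB\oplus a\Fqk$:
\[
\calW=\bigl((\calB\oplus a\Fqk)\cap b\Fqk\bigr)\oplus a\Fqk.
\]
The intersection $(\calB\oplus a\Fqk)\cap b\Fqk$ lies in $b\Fqk$, which meets $a\Fqk$ trivially, so the sum is indeed direct. Therefore $\dim\calW=\dim((\calB\oplus a\Fqk)\cap b\Fqk)+k$. Combining the two expressions for $\dim\calW$ gives the claimed equality.

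\emph{Step 3: the evasiveness statement.} Every element of $\bm\calD_a$ is some $b\Fqk\neq a\Fqk$. Since distinct spread elements meet trivially, such a $b$ satisfies $b\Fqk\cap a\Fqk=\{0\}$. By Proposition~\ref{prop: pi is also a spread}, the sets $\pi_a(b\Fqk)$ for these $b$ run over all of $\pi_a(\bm\calD)$. Thus the maximal intersection dimensions over the two families coincide term by term, and so the evasiveness parameters coincide.

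The only delicate point is the modular-law step. One must use containment of $a\Fqk$ in the first space rather than in $b\Fqk$. Apart from that, everything is routine linear algebra.
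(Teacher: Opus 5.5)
Your proof is correct and takes essentially the same approach as the paper. The paper likewise writes $\pi_a(\calB)\cap\pi_a(b\Fqk)$ as the image of $(\calB\oplus a\Fqk)\cap(b\Fqk\oplus a\Fqk)$, splits that space as $((\calB\oplus a\Fqk)\cap b\Fqk)\oplus a\Fqk$, and quotients out $a\Fqk$. Your preimage/rank--nullity formulation simply makes explicit the first identity, which the paper attributes to ``properties of the canonical quotient map.''
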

\begin{proof}
By the properties of the canonical quotient map, 
\[
\pi_a(\calB)\cap\pi_a(
b\Fqk)
=
\pi_a(
(\calB +a\Fqk)
\cap
(b\Fqk + a\Fqk)
)
\]
where, since $\calB\cap a\Fqk=\{0\}$ and $b\Fqk\cap a\Fqk=\{0\}$, then
\[
(\calB\oplus a\Fqk)
\cap
(b\Fqk\oplus a\Fqk)
=
\left((\calB\oplus a\Fqk)
\cap
b\Fqk\right)
\oplus
a\Fqk.
\]
Taking the quotient modulo $a\Fqk$ gives the following
isomorphism
\[
\pi_a(
(\calB\oplus a\Fqk)
\cap
(b\Fqk\oplus a\Fqk)
)
\simeq 
(\calB\oplus a\Fqk)
\cap
b\Fqk.
\]
Hence,
\[
\dim\left(\pi_a(\calB)\cap\pi_a(b\Fqk)\right)
=
\dim\left((\mathcal{B}\oplus a\Fqk)\cap b\Fqk\right).
\]
\end{proof}
\begin{corollary}
    \label{cor: consequences on evasivness of the received space}
    Let $n=kr$, $\bm \calD = \{ b \Fqk \mid b \in \Fqn^* \} \subseteq \mathrm{Gr}_{\Fq}(k,\Fqn)$ and let $\pi_a\colon\Fqn\to\Fqn/a\Fqk$ be the canonical quotient map. Let also $\pi_a(\bm\calD)
=
\{\pi_a(b\Fqk)\colon  b\in\Fqn^*\text{ and }b\Fqk\cap a\Fqk=\{0\}\}$ be the $k$-spread of the quotient space $\Fqn/a\Fqk$ from Proposition \ref{prop: pi is also a spread}. Let $\calR =\calU\oplus\calB$ where $\calU$ is an $\Fq$ subspace of $a \Fqk$ of dimension $k-d$ and $\calB$ is a subspace of dimension $t$ not intersecting $a \Fqk$. Then for all 
$b \in \Fqn^*$ such that $b\Fqk \neq a\Fqk$, 
\begin{equation}
\label{eq:min_distance_decoder_maybe_success}
    \dim(\calR \cap b\Fqk) \leq \eta,
\end{equation}
where $\eta$ is the evasiveness of $\pi_a(\calB)$ with respect to the spread $\pi_a(\calD)$. 
\end{corollary}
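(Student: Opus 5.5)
The argument is a direct chain of inclusions that reduces the claim to Proposition~\ref{prop: evasivness still relies pi calB}. Fix $b\in\Fqn^*$ with $b\Fqk\neq a\Fqk$. Since distinct elements of the Desarguesian spread $\bm\calD$ intersect trivially, $b\Fqk\cap a\Fqk=\{0\}$, so $b$ is admissible in Proposition~\ref{prop: evasivness still relies pi calB}. Together with the hypothesis $\calB\cap a\Fqk=\{0\}$, both conditions of that proposition are satisfied.

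Next, use that $\calU\subseteq a\Fqk$ to get $\calR=\calU\oplus\calB\subseteq \calB\oplus a\Fqk$. Intersecting both sides with $b\Fqk$ gives
\[
\dim(\calR\cap b\Fqk)\leq \dim\bigl((\calB\oplus a\Fqk)\cap b\Fqk\bigr).
\]
By Proposition~\ref{prop: evasivness still relies pi calB}, the right-hand side equals $\dim\bigl(\pi_a(\calB)\cap\pi_a(b\Fqk)\bigr)$. Because $b\Fqk\cap a\Fqk=\{0\}$, the space $\pi_a(b\Fqk)$ belongs to the spread $\pi_a(\bm\calD)$ of Proposition~\ref{prop: pi is also a spread}.

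Now take $\eta$ to be the evasiveness of $\pi_a(\calB)$, that is, the maximum of $\dim(\pi_a(\calB)\cap S)$ over $S\in\pi_a(\bm\calD)$. By definition this bounds the last quantity by $\eta$, which proves $\dim(\calR\cap b\Fqk)\leq\eta$ for every $b\Fqk\neq a\Fqk$.

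There is no real obstacle here. The one point requiring care is the admissibility of $b$, namely that $b\Fqk\neq a\Fqk$ implies trivial intersection; this is exactly the disjointness property of $\bm\calD$ established after Definition~\ref{def:subfield_spread}.
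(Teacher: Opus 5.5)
Your proof is correct and follows essentially the same route as the paper. The paper gives no separate proof of this corollary and treats it as an immediate consequence of Proposition~\ref{prop: evasivness still relies pi calB}, via exactly your two steps: the inclusion $\calR\cap b\Fqk\subseteq(\calB\oplus a\Fqk)\cap b\Fqk$ coming from $\calU\subseteq a\Fqk$, and the fact that $b\Fqk\neq a\Fqk$ forces $b\Fqk\cap a\Fqk=\{0\}$.
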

To conclude this section, we present the following lemma, which will be crucial in the remainder of the paper. 
\begin{lemma}
\label{lemma: B random also piB random}
Let $n=kr$, $a\in\Fqn^*$ and let $\pi_a\colon\Fqn\to\Fqn/a\Fqk$ be the canonical quotient map. Let $\calB$ be an $\Fq$-subspace of dimension $t$ chosen uniformly at random in $\Fqn$ such that $\calB\cap a\Fqk=\{0\}.$
Then $\pi_a(\mathcal{B})$ is uniformly distributed among the
$t$-dimensional $\F_q$-subspaces of the quotient space $\Fqn/a\Fqk$.
\end{lemma}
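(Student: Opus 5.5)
The plan is a symmetry argument. The uniform distribution on $\Omega=\{\calB\in\Gr_{\Fq}(t,\Fqn) : \calB\cap a\Fqk=\{0\}\}$ is pushed forward by $\pi_a$, and I will show that every $t$-dimensional subspace of the quotient has a fiber of the same size. This forces the image distribution to be uniform.

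First, for $\calB\in\Omega$ we have $\ker\pi_a\cap\calB=\{0\}$, so $\pi_a$ restricted to $\calB$ is injective. Hence $\pi_a(\calB)$ is a $t$-dimensional $\Fq$-subspace of $\Fqn/a\Fqk$, and the induced map $\Phi\colon\Omega\to\Gr_{\Fq}(t,\Fqn/a\Fqk)$ is well defined.

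Second, I will count the fibers. Fix a $t$-dimensional $\calW\subseteq\Fqn/a\Fqk$. Then $\calB\in\Phi^{-1}(\calW)$ if and only if $\calB\subseteq\pi_a^{-1}(\calW)=\widetilde{\calW}$ and $\calB$ is a complement of $a\Fqk$ inside $\widetilde{\calW}$. Here $\widetilde{\calW}$ has dimension $t+k$ and contains $a\Fqk$. Such complements are exactly the graphs of linear maps: choose a basis $w_1,\dots,w_t$ of $\calW$ and fixed lifts $\tilde w_i$. Every complement then has the unique form $\Span{\Fq}{\tilde w_1+c_1,\dots,\tilde w_t+c_t}$ with $c_i\in a\Fqk$. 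In other words, complements correspond bijectively to $\mathrm{Hom}_{\Fq}(\calW,a\Fqk)$, so there are exactly $q^{tk}$ of them.

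This count does not depend on $\calW$. Since $\calB$ is uniform on $\Omega$, it follows that $\mathrm{P}(\pi_a(\calB)=\calW)=q^{tk}/|\Omega|$ for every $\calW$, which gives uniformity. As a consistency check, $|\Omega|=q^{tk}\gbinom{n-k}{t}_q$, and I may mention this.

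The only step needing care is the bijection between complements and linear maps. I will verify both directions. Distinct maps give distinct subspaces, because two graphs agreeing forces $c_i-c_i'\in\calB\cap a\Fqk=\{0\}$. Conversely, every complement $\calB$ maps isomorphically onto $\calW$, so it contains a unique preimage of each $w_i$ in the coset $\tilde w_i+a\Fqk$.
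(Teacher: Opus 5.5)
Your proof is correct and is essentially the paper's argument: both show that every $t$-dimensional subspace of $\Fqn/a\Fqk$ has exactly $q^{tk}$ preimages among the $t$-spaces meeting $a\Fqk$ trivially, which forces uniformity. The paper realizes this in coordinates via the reduced row echelon form $(E\mid G_a)$, where the free block $G_a\in\Fq^{t\times k}$ plays exactly the role of your linear map $\calW\to a\Fqk$.
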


\begin{proof}
We extend an ordered basis $\bm v_a$ of $a\Fqk$ to an ordered basis of $\Fqn$, which we denote by $\bm v=(\bm v_0,\bm v_a).$
The basis $\bm v$ naturally defines an isomorphism $\phi_{\bm v}:\Fqn\to\Fq^n,$ which sends every element of $\Fqn$ to its coordinate representation with respect to $\bm v$. Therefore, each $t$-dimensional subspace $\mathcal{B}$ can be represented by a generator matrix $G \in \mathbb{F}_q^{t \times n}$, whose rows are the images under $\phi_v$ of the elements of a basis of $\mathcal{B}$. This representation becomes unique by considering the reduced row echelon form $\widetilde{G}$ of $G$.
As $\calB \cap a \Fqk = \{0\}$, then
$$
\widetilde{G} = 
\left (
    \begin{array}{c|c}
    E & G_a
    \end{array}
\right )
$$
where $E\in \Fq^{t \times {(n-k)}}$ is a reduced row echelon form of rank $t$ and $G_a \in \Fq^{t \times k}$. Observe that if $\rank(E) < t$, then the last row of $E$ would be zero, which would imply that the intersection $\calB \cap a \Fqk$ is nontrivial. If we replace the matrix $G_a \in \Fq^{t \times k}$ with any other matrix of the same size, we obtain exactly $q^{tk}$ distinct matrices of the same form. These matrices correspond to distinct subspaces $\hat{\calB}$ satisfying the same properties as $\calB$ and such that $\pi_a(\calB) = \pi_a(\hat{\calB})$. Since every $t$-dimensional subspace of $\Fqn/a\Fqk$ has exactly $q^{tk}$ pre-images in $\{\calB\subseteq \mathbb F_{q^n}\colon\dim(\calB)=t,\calB\cap a\Fqk=\{0\}\}$, each quotient subspace is obtained with the same probability. Therefore, the induced distribution of $\pi_a(\calB)$ is uniform.
\end{proof}

\subsection{Probability that a Random Subspace has Evasiveness $\eta$}
    Following the discussion in the previous subsection, a nearest neighbour decoder will give the correct answer only if for all 
$b \in \Fqn^*$ such that $b\Fqk \neq a\Fqk$, 
\[
\dim(\calR \cap b\Fqk) \leq \eta,
\]
for some parameter $\eta$ satisfying $1 \leq \eta < k-d$. 
On the other hand, Corollary \ref{cor: consequences on evasivness of the received space} provides an upper bound on $\dim(\calR \cap b\Fqk)$ for all $b \in \Fqn^*$ such that $b\Fqk \neq a\Fqk$, given by the evasiveness of $\pi_a(\calB)$ with respect to the Desarguesian spread $\pi_a(\calD)$.
Additionally, Lemma \ref{lemma: B random also piB random} states that, if $\calB$ is a $t$-dimensional $\Fq$-subspace of $\Fqn$ chosen uniformly at random in $\Fqn$, then also $\pi_a(\calB)$ is chosen uniformly at random among the
$t$-dimensional $\F_q$-subspaces of the quotient space $\Fqn/a\Fqk$. 
To estimate the probability of success of a nearest neighbour decoder we should then count the number of spaces $\pi_a(\calB) \in \Gr_{\Fq}(i,\Fqn/a\Fqk)$ with evasiveness $\eta$, for some $1\leq\eta < k-d.$
\\
\par 
For ease of notation, we will count the number of subspaces $\calB \in \Gr_{\Fq}(i,\Fqn)$ with evasiveness $\eta$ with respect to the Desarguesian spread $\bm \calD$, for some $1 \leq \eta < k-d$ and, at the end of this section, we will translate the obtained results back to the original setting. 
\\
\par Consider the set
    $$
        S_{i,\eta} := \{\calB \in  \Gr_{\Fq}(i,\Fqn) \mid \dim(\calB \cap a \Fqk) \leq \eta, \forall a \in \Fqn^*\}.
    $$
The probability that a random $\calB$ of dimension $i$ has evasiveness upper bounded by $\eta$ will be given by
    $$
        \frac{|S_{i,\eta}|}{\Gr_{\Fq}(i,\Fqn)} = 
        \frac{|S_{i,\eta}|}{\gbinom{n}{i}_q},
    $$
    where 
    $$
        \gbinom{n}{i}_q = \prod_{j = 0}^{i-1} \frac{q^n - q^j}{q^i - q^j},
    $$
is the Gaussian coefficient, which is known to count the number of subspaces of $\Fq$ dimension $i$ in $\Fqn$. Thanks to Theorem \ref{ref: bound on the dimnesion of an evasive subspace with respect to a esarguesian spread}, we know that $|S_{i,\eta}|=0$ for all $i\in\left\{\left\lfloor\frac{\eta kr}{\eta+1}\right\rfloor+1,\ldots,n
\right\}$. To determine $|S_{i,\eta}|$ for $i\leq\left\lfloor\frac{\eta kr}{\eta+1}\right\rfloor$, we will use the following lemma.
\begin{lemma}
        \label{lemma: counting intersections}
        Let $d_1,d_2,h,n$ be positive integers such that $0\leq h\leq \min\{d_1,d_2\}$ and $d_2-h\leq n-d_1$. Fix a $d_1$-dimensional subspace $\calV_1\leq\Fqn$. Then 
        \[
        \left\lvert\{d_2\text{-dimensional subspaces }\calV_2\leq\Fqn\text{ with }\mathrm{dim}(\calV_1\cap\calV_2)=h\}\right\rvert=q^{(d_1-h)(d_2-h)}\gbinom{n-d_1}{d_2-h}_q\gbinom{d_1}{h}_q.
        \]
    \end{lemma}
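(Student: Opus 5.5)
We count the pairs $(\calW,\calV_2)$ obtained by first fixing the intersection and then choosing a complement. Every $d_2$-dimensional $\calV_2$ with $\dim(\calV_1\cap\calV_2)=h$ determines the $h$-dimensional subspace $\calW=\calV_1\cap\calV_2$ of $\calV_1$. There are $\gbinom{d_1}{h}_q$ choices for $\calW$, and by the transitivity of $\mathrm{GL}$ on flags $\calW\subseteq\calV_1$, the number of $\calV_2$ with $\calV_1\cap\calV_2=\calW$ does not depend on $\calW$. It therefore suffices to show that, for a fixed $\calW$, this number equals $q^{(d_1-h)(d_2-h)}\gbinom{n-d_1}{d_2-h}_q$.

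To do this, pass to the quotient $\Fqn/\calW$, of dimension $n-h$. There the subspaces $\calV_2\supseteq\calW$ with $\calV_2\cap\calV_1=\calW$ correspond bijectively to $(d_2-h)$-dimensional subspaces $\overline{\calV_2}$ with $\overline{\calV_2}\cap\overline{\calV_1}=\{0\}$, where $\overline{\calV_1}=\calV_1/\calW$ has dimension $d_1-h$. So we must count $m$-dimensional subspaces of an $N$-dimensional space meeting a fixed $e$-dimensional subspace trivially, with $m=d_2-h$, $N=n-h$ and $e=d_1-h$.

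This is exactly the reduced-row-echelon argument already used in Lemma~\ref{lemma: B random also piB random}. Choose a basis adapted to $\overline{\calV_1}$ and write a generator matrix as $(E\mid G_a)$, where $E$ covers the $N-e$ coordinates complementary to $\overline{\calV_1}$. Trivial intersection is equivalent to $E$ having rank $m$. Taking $E$ in reduced row echelon form and $G_a$ arbitrary gives a unique representation, so the count is $\gbinom{N-e}{m}_q q^{me}$. Substituting yields $q^{(d_1-h)(d_2-h)}\gbinom{n-d_1}{d_2-h}_q$; the hypothesis $d_2-h\le n-d_1$ guarantees that this Gaussian binomial is nonzero.

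Multiplying by $\gbinom{d_1}{h}_q$ gives the claim. The only delicate point is the bijection with the quotient, which is standard: $\calV_2\cap\calV_1=\calW$ holds exactly when the image of $\calV_2$ meets $\calV_1/\calW$ trivially.
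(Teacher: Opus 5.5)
Your proof is correct, but it takes a different route from the paper. The paper does not carry out the count itself. It quotes \cite[Lemma 2.1]{wang2010association} for the coordinate subspace $\calE=\langle e_{n-d_1+1},\ldots,e_n\rangle$. It then transfers the result to an arbitrary $\calV_1$ via a linear automorphism of $\Fq^n$ sending $\calE$ onto $\calV_1$, since such automorphisms preserve dimensions of intersections.

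You instead prove the formula from scratch. First you group the $\calV_2$ by their intersection $\calW=\calV_1\cap\calV_2$, which gives the factor $\gbinom{d_1}{h}_q$. Then you use the correspondence theorem in $\Fqn/\calW$ to reduce to counting $(d_2-h)$-subspaces that meet a fixed $(d_1-h)$-subspace trivially. Finally you count those with the echelon-form argument already used in Lemma~\ref{lemma: B random also piB random}, which gives $q^{(d_1-h)(d_2-h)}\gbinom{n-d_1}{d_2-h}_q$.

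What your version buys is self-containment and transparency about where each factor comes from:
\begin{itemize}
\item $\gbinom{d_1}{h}_q$ is the choice of the intersection;
\item $\gbinom{n-d_1}{d_2-h}_q$ is the choice of the echelon block $E$, i.e.\ the image of $\calV_2$ in $\Fqn/\calV_1$;
\item $q^{(d_1-h)(d_2-h)}$ is the freedom in the lift $G_a$.
\end{itemize}
Your adapted basis plays the same role as the paper's isomorphism $\varphi$. The paper's version is shorter but depends on the external reference.

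One minor remark: your appeal to transitivity of $\mathrm{GL}$ on flags is unnecessary. The count of $\calV_2$ for a fixed $\calW$, which you compute explicitly, is already visibly independent of $\calW$.
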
 
    \begin{proof}
        We show this up to isomorphism. Let $e_i$ be the vector in $\Fq^n$ whose $i$-th coordinate is one and all other coordinates are zero. Denote by $\calE$ the $d_1$-dimensional subspace of $\Fq^n$ generated by $e_{n-d_1+1},\ldots,e_n$. Thanks to \cite[Lemma 2.1]{wang2010association}, we have that
        \[
        \left\lvert\left\{\calP \mid \dim(\calP)=d_2\text{ and }\dim(\calP\cap\calE)=h\right\}\right\rvert=q^{(d_1-h)(d_2-h)}\gbinom{n-d_1}{d_2-h}_q\gbinom{d_1}{h}_q.
        \]
        Since two vector spaces are isomorphic if and only if they have the same dimension, we let $\varphi$ be the isomorphism that maps $\calE$ into $\calV_1$, i.e., $\varphi(\calE)=\calV_1$ and extend it to an isomorphism of the whole space $\Fq^n$. The claim follows from the fact that isomorphisms preserve the dimensions of subspaces and that, for any subspace $\calP$ of dimension $d_2$,
        \[
            h = \dim(\calP \cap \calE) = \dim\big(\varphi(\calP\cap \calE)\big) = \dim\big(\varphi(\calP) \cap \calV_1\big),
        \]
        and conversely, for any $d_2$-dimensional subspace  $\calV_2$ of $\mathbb{F}_q^n$,
        \[
            h = \dim(\calV_1 \cap \calV_2) = \dim\big(\varphi^{-1}(\calV_1 \cap \calV_2)\big) = \dim\big(\varphi^{-1}(\calV_1) \cap \calE\big).
        \]
    \end{proof}
    Thanks to the latter we can provide the following proposition. \begin{proposition}
        \label{ref: our lower bound}
        Let $0\leq i\leq\left\lfloor\frac{\eta kr}{\eta+1}\right\rfloor$, then
        \[
            \left\lvert S_{i,\eta}\right\rvert\geq \gbinom{n}{i}_q-\frac{q^n-1}{q^k-1}\left(\gbinom{n}{i}_q-\sum_{h=0}^\eta q^{(k-h)(i-h)}\gbinom{n-k}{i-h}_q\gbinom{k}{h}_q\right)\eqqcolon \mathrm{LB}_{\eta}.
        \]
        This bound is tight if and only if $i \leq 2 \eta + 1$.
    \end{proposition}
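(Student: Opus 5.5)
The plan is a union bound over the spread elements combined with the exact count of Lemma~\ref{lemma: counting intersections}. For each spread element $a\Fqk\in\bm\calD$ consider the \emph{bad} set
\[
T_a=\{\calB\in\Gr_{\Fq}(i,\Fqn)\colon \dim(\calB\cap a\Fqk)\geq \eta+1\}.
\]
Then $S_{i,\eta}$ is the complement of $\bigcup_{a\Fqk\in\bm\calD}T_a$ inside $\Gr_{\Fq}(i,\Fqn)$. There are exactly $\frac{q^n-1}{q^k-1}$ distinct spread elements, so
\[
|S_{i,\eta}|\geq \gbinom{n}{i}_q-\sum_{a\Fqk\in\bm\calD}|T_a|.
\]
To compute $|T_a|$, apply Lemma~\ref{lemma: counting intersections} with $\calV_1=a\Fqk$, $d_1=k$, $d_2=i$. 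The number of $i$-spaces meeting $a\Fqk$ in dimension exactly $h$ is $q^{(k-h)(i-h)}\gbinom{n-k}{i-h}_q\gbinom{k}{h}_q$. Summing over $h=0,\dots,\eta$ gives the good ones for this element, and subtracting from $\gbinom{n}{i}_q$ gives $|T_a|$. Terms where the lemma's hypotheses fail, i.e.\ $i-h>n-k$ or $h>\min\{k,i\}$, vanish by the convention that the Gaussian binomial is zero. The count is the same for every $a$, and substituting it yields exactly $\mathrm{LB}_\eta$.

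For tightness, the union bound is an equality if and only if the sets $T_a$ are pairwise disjoint. This holds exactly when there is no $i$-dimensional $\calB$ meeting two distinct spread elements each in dimension $\geq\eta+1$.

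\begin{itemize}
\item If $i\leq 2\eta+1$, suppose $\calB$ meets $a\Fqk$ and $b\Fqk$ in dimension $\geq\eta+1$. These two intersections lie in trivially intersecting spaces, so their sum is direct. This forces $\dim\calB\geq 2\eta+2$, a contradiction. Hence the sets are disjoint and the bound is tight.
\item If $i\geq 2\eta+2$ (still $i\leq\lfloor\eta kr/(\eta+1)\rfloor$, so $r\geq2$ and $\eta+1\leq k$), take $(\eta+1)$-dimensional subspaces of two distinct spread elements $a\Fqk$ and $b\Fqk$. Extend their direct sum arbitrarily to an $i$-dimensional $\calB$, which is possible since $i\leq n$. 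This $\calB$ lies in $T_a\cap T_b$ and is counted at least twice, so the inequality is strict.
\end{itemize}

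The main obstacle is the strictness direction: one has to check that the overlap really produces strict inequality, not just that the union bound is loose. Since every term is nonnegative and an element of $T_a\cap T_b$ is subtracted at least twice while only one copy is needed, strictness follows. One also needs $\eta+1\leq k$ so that the construction is possible. If $\eta=k$, every $T_a$ is empty and the bound is trivially tight, so that edge case must be noted separately. The stated range of $i$ guarantees the remaining existence conditions.
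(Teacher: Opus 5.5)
Your proposal is correct and is essentially the paper's own argument. Both use a union bound over the $\frac{q^n-1}{q^k-1}$ spread elements with the exact per-element count from Lemma~\ref{lemma: counting intersections}, reduce tightness to pairwise disjointness of the bad sets, and settle it by the direct-sum argument for $i\le 2\eta+1$ and by two $(\eta+1)$-dimensional pieces of distinct spread elements for $i\ge 2\eta+2$.

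Your extension of that construction to every $i\ge 2\eta+2$ and your remark on $\eta=k$ are slightly more careful than the paper. However, the range of $i$ does not force $r\ge 2$: for $r=1$, $\eta=1$, $k=8$, $i=4$ the spread has a single element, so the bound is trivially tight even though $i>2\eta+1$. Hence $r\ge 2$ has to be assumed, as the paper tacitly does.
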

    \begin{proof}
        Let $a\in\Fqn^*$, $h\in\{0,\ldots,\eta\}$ and define 
    \[
    \Sigma_{a,h}\coloneqq \{\calB \in  \Gr_{\Fq}(i,\Fqn) \mid \dim(\calB \cap a \Fqk)=h\}, 
    \]
    then
    \[
    S_{i,\eta}=\bigcap_{a\in\Fqn^*}\left(\bigcup_{h=0}^\eta\Sigma_{a,h}\right).
    \]
    Fix $\calB \in  \Gr_{\Fq}(i,\Fqn)$, the latter follows from the fact that for all $a \in \Fqn^*,~\dim(\calB \cap a \Fqk) \leq \eta$ if and only if for all $a \in \Fqn^*$, there exists $h\in\{0,\ldots,\eta\}$ such that $\dim(\calB \cap a \Fqk)=h$. We first note that distinct values of $a$ can correspond to the same element of the spread. Thus, we define $\mathfrak{A}\coloneqq\Fqn^*/\sim_{q^k}$ where $\sim_{q^k}$ denotes the following equivalence relation on $\Fqn^*$: 
    \[
    a\sim_{q^k} b\text{ if and only if }a\Fqk=b\Fqk.
    \]
    Therefore, $S_{i,\eta}$ can be also seen as
    \[
    S_{i,\eta}=\bigcap_{a\in\mathfrak{A}}\left(\bigcup_{h=0}^\eta\Sigma_{a,h}\right).
    \]
    Recall we want to determine $|S_{i,\eta}|$. Once fixed $a\in\mathfrak{A}$, since for all $h,j\in\{0,\ldots,\eta\}$ with $h\neq j$, $\Sigma_{a,h}\cap\Sigma_{a,j}=\emptyset$, and thanks to Lemma \ref{lemma: counting intersections}, we have that
    \[
    \left\lvert\bigcup_{h=0}^\eta\Sigma_{a,h}\right\rvert=\sum_{h=0}^\eta\left\lvert\Sigma_{a,h}\right\rvert=\sum_{h=0}^\eta q^{(k-h)(i-h)}\gbinom{n-k}{i-h}_q\gbinom{k}{h}_q.
    \]
    As dealing with the intersections for $a$ varying in $\mathfrak{A}$ is nontrivial, we consider the complement with respect to the set $\Gr_{\Fq}(i,\Fqn)$
    \[
    S_{i,\eta}^c=\bigcup_{a\in\mathfrak{A}}\left(\bigcup_{h=0}^\eta\Sigma_{a,h}\right)^c.
    \]
    Therefore 
    \begin{equation}
    \label{eq: evasive space inequality bound}
        \begin{aligned}
            \left\lvert S_{i,\eta}^c\right\rvert&=\left\lvert \bigcup_{a\in\mathfrak{A}}\left(\bigcup_{h=0}^\eta\Sigma_{a,h}\right)^c\right\rvert\leq \sum_{a\in\mathfrak{A}} \left\lvert \left(\bigcup_{h=0}^\eta\Sigma_{a,h}\right)^c\right\rvert\\
            &=\frac{q^n-1}{q^k-1}\left(\gbinom{n}{i}_q-\sum_{h=0}^\eta q^{(k-h)(i-h)}\gbinom{n-k}{i-h}_q\gbinom{k}{h}_q\right),
        \end{aligned}
    \end{equation}
    where $\frac{q^n - 1}{q^k - 1} = |\mathfrak{A}|.$
    From the relation $|S_{i,\eta}| = \gbinom{n}{i}_q -  |S_{i,\eta}^c| $ we obtain the desired result.
    \par In (\ref{eq: evasive space inequality bound}) equality is satisfied if and only if all sets $\left(\bigcup_{h=0}^\eta\Sigma_{a,h}\right)^c$ are disjoint for $a \in \mathfrak{A}$.
    A necessary and sufficient condition for this to be true is given by $i \leq 2 \eta + 1$.
    We can describe the set $\left(\bigcup_{h=0}^\eta\Sigma_{a,h}\right)^c$ as the set of all subspaces of dimension $i$ that intersect $a \Fqk$ with dimension at least $\eta + 1$.
    For $i = 2 \eta + 2$ we can easily choose two subspaces of dimension $\eta + 1$ from two distinct spaces $a \Fqk \neq b \Fqk,$ their sum will have dimension $2 \eta + 2$ and will lie in the intersection of $\left(\bigcup_{h=0}^\eta\Sigma_{a,h}\right)^c \cap \left(\bigcup_{h=0}^\eta\Sigma_{b,h}\right)^c$ showing the condition is necessary.
    
    To show it is also sufficient, consider $\calB \in \left(\bigcup_{h=0}^\eta\Sigma_{a,h}\right)^c$ of dimension $\dim(\calB) = i \leq 2 \eta + 1$ and a subspace of the form $b \Fqk \neq a \Fqk$.
    Since $a \Fqk \cap b \Fqk = \{0\}$ then $\calB \cap a \Fqk$ and $\calB \cap b \Fqk$ will have zero intersection, while their sum will have, at most, dimension $i \leq 2\eta + 1$.
    From which
    \[
      \dim(\calB \cap b \Fqk) \leq  2 \eta + 1  - \dim(\calB \cap  a\Fqk) \leq \eta, 
    \]
    where for the last inequality we used $\dim(\calB \cap a \Fqk) \geq \eta + 1.$
    This ensures that $\calB \notin \left(\bigcup_{h=0}^\eta\Sigma_{b,h}\right)^c$ for any $b \neq a \in \mathfrak{A}$, implying that sets $\left(\bigcup_{h=0}^\eta\Sigma_{b,h}\right)^c$ are always disjoint for $i \leq 2 \eta +1$.
    \end{proof}
   
    In \cite{gruica2024generalised} the authors derive bounds on the number of $(\calG,\eta)$-evasive spaces for a partial $k$-spread $\calG$ in $\Fqn$ in a different way. These bounds rely on \cite[Lemma 5.11, Lemma 5.12]{gruica2024generalised}), which we summarize in the following lemma with the parameters relevant to this work.
\begin{lemma}{\cite{gruica2024generalised}}
    Let $r\geq 2$, $1 \leq i \leq n$ and $1 \leq k \leq kr-k=k(r-1)$ be integers and let $S$ be a $k$-dimensional subspace in $\Fqn$.
    \noindent The number of $i$-spaces in $\Fqn$ that intersect $S$ in dimension at least $\eta+1$ is given by
    \[
    \partial_q(n,i,k,\eta) \coloneqq \sum_{\ell=\eta+1}^{k} \sum_{b=\ell}^{k} \gbinom{k}{\ell}_q \gbinom{k-\ell}{b-\ell}_q \gbinom{n-b}{i-b}_q (-1)^{b-\ell} q^{\binom{b-\ell}{2}}.
\]
Moreover, let $S,S^\prime$ be $k$-dimensional subspaces in $\Fqn$ with $S\cap S^\prime=\{0\}$. For an integer $1\leq\tilde{\eta}\leq k-1$ the number of $i$-spaces in $\Fqn$ that intersect both $S$ and $S'$ in dimension at least $\tilde{\eta}+1$ is $\omega_q(n,i,k,\tilde{\eta})$, which is given by
\[
\sum_{\ell=\tilde{\eta}+1}^{k} \sum_{\ell^{\prime}=\tilde{\eta}+1}^{k} \gbinom{k}{\ell}_q \gbinom{k}{\ell^{\prime}}_q \sum_{r=\ell}^{k} \sum_{s=\ell^{\prime}}^{k} \gbinom{k-\ell}{r-\ell}_q \gbinom{k-\ell^{\prime}}{s-\ell^{\prime}}_q
\gbinom{n-r-s}{i-r-s}_q (-1)^{r+s-\ell-\ell^{\prime}} q^{\binom{r-\ell}{2} + \binom{s-\ell^{\prime}}{2}}
\]
\end{lemma}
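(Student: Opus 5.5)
Both counts follow from the same two ingredients: the standard formula for the number of $i$-subspaces of $\Fqn$ containing a fixed $b$-dimensional subspace, namely $\gbinom{n-b}{i-b}_q$, and Möbius inversion on the subspace lattice of $S$ (respectively $S\oplus S'$). The plan is to count incidences with the subspace $\calB\cap S$, first exactly and then "at least".

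\textbf{First formula.} For an $i$-space $\calB$, let $g(W)$ be the number of $i$-spaces with $\calB\cap S=W$ exactly, and $f(W)=\gbinom{n-\dim W}{i-\dim W}_q$ the number with $\calB\cap S\supseteq W$. Then $f(W)=\sum_{W\subseteq W'\subseteq S} g(W')$. Möbius inversion on the interval $[W,S]$ uses the lattice Möbius function $\mu(W,W')=(-1)^{m}q^{\binom{m}{2}}$ with $m=\dim W'-\dim W$, and gives
\[
g(W)=\sum_{W\subseteq W'\subseteq S}(-1)^{\dim W'-\dim W}q^{\binom{\dim W'-\dim W}{2}}\gbinom{n-\dim W'}{i-\dim W'}_q .
\]
The number of $W'$ of dimension $b$ containing a fixed $\ell$-dimensional $W$ is $\gbinom{k-\ell}{b-\ell}_q$. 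Summing $g(W)$ over the $\gbinom{k}{\ell}_q$ choices of $W$ of dimension $\ell$, and then over $\ell\ge\eta+1$, gives $\partial_q(n,i,k,\eta)$.

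\textbf{Second formula.} Now require both $\calB\cap S$ and $\calB\cap S'$ to have prescribed exact values $W\subseteq S$ and $W'\subseteq S'$. Define $g(W,W')$ as the number of $i$-spaces with $\calB\cap S=W$ and $\calB\cap S'=W'$. Containment of $U\subseteq S$ and $U'\subseteq S'$ is equivalent to containing $U\oplus U'$, which has dimension $r+s$ because $S\cap S'=\{0\}$. Hence the cumulative count is $\gbinom{n-r-s}{i-r-s}_q$. Inversion on the product poset $[W,S]\times[W',S']$ has Möbius function equal to the product of the two single-interval Möbius functions. This yields the double alternating sum; summing over $\ell,\ell'\ge\tilde\eta+1$ with multiplicities $\gbinom{k}{\ell}_q\gbinom{k}{\ell'}_q$ gives $\omega_q$.

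\textbf{Main obstacle.} The delicate point is justifying the product inversion. We must check that the map $(U,U')\mapsto$ ("$\calB$ contains $U$ and $U'$") behaves like a product of independent lattices. This is fine because $f(U,U')=\sum_{U\subseteq X\subseteq S,\;U'\subseteq X'\subseteq S'}g(X,X')$ holds exactly: every $\calB$ has a unique pair $(\calB\cap S,\calB\cap S')$. No relation between the two intersections beyond this is needed.

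We also need the index conventions so that the Gaussian binomials vanish when $i<r+s$ or $i<b$. The hypothesis $k\le n-k$ ensures the terms are well defined.
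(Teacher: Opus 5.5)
Your proof is correct and complete. The paper does not prove this lemma itself; it quotes both formulas from \cite{gruica2024generalised} (Lemmas 5.11 and 5.12 there). Your argument is the natural derivation of formulas of this shape. Möbius inversion on the subspace lattice of $S$, with $\mu(W,W')=(-1)^{m}q^{\binom{m}{2}}$, is exactly what produces the factors $(-1)^{b-\ell}q^{\binom{b-\ell}{2}}$. For $\omega_q$, the product-poset inversion needs only two facts, both of which you supply. First, $\calB\supseteq U$ and $\calB\supseteq U'$ if and only if $\calB\supseteq U\oplus U'$, a space of dimension $\dim U+\dim U'$ because $S\cap S'=\{0\}$. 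Second, each $\calB$ determines a unique pair $(\calB\cap S,\calB\cap S')$.

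The only point of comparison inside the paper is the later proposition computing $\delta_q(n,i,k,\eta)$. There the first count is obtained without any inversion, by complementing the exact-intersection numbers $q^{(k-h)(i-h)}\gbinom{n-k}{i-h}_q\gbinom{k}{h}_q$ of Lemma~\ref{lemma: counting intersections}. Your inversion gives slightly more. The $\ell$-th outer summand of $\partial_q$ is the number of $i$-spaces meeting $S$ in dimension exactly $\ell$, so it equals $q^{(k-\ell)(i-\ell)}\gbinom{n-k}{i-\ell}_q\gbinom{k}{\ell}_q$. Summing over $\ell\geq\eta+1$ then proves $\partial_q=\delta_q$, which is the equality $\mathrm{LB}_{\eta,\mathrm{Gev}}=\mathrm{LB}_{\eta}$ that the paper asserts. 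For the two-subspace count the paper has no comparable closed-form tool, and your product-lattice inversion is the natural route there.
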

As a consequence of the previous lemma and of \cite[Lemma 5.8]{gruica2024generalised} and \cite[Lemma 5.9]{gruica2024generalised} (resp.) they provide \cite[Corollary 5.14]{gruica2024generalised} and \cite[Corollary 5.18]{gruica2024generalised} (resp.) which we state compactly in the specific case of the Desarguesian spread under consideration.
\begin{corollary}{\cite{gruica2024generalised}}
    Let $\bm \calD= \{ a \Fqk \mid a \in \Fqn^* \}$ and let $1 \leq \eta \leq k-1$ be an integer. The number $\left\lvert S_{i,\eta}\right\rvert$ of $(\bm \calD, \eta)$-evasive $i$-spaces in $\Fqn$ is at least
    \begin{align*}
        \mathrm{LB}_{_{\eta},\mathrm{Gev}}&\coloneqq\gbinom{n}{i}_q - \frac{q^n-1}{q^k-1} \partial_q(n,i,k,\eta)\\
        &=\gbinom{n}{i}_q - \frac{q^n-1}{q^k-1}\left(\sum_{\ell=\eta+1}^{k} \sum_{b=\ell}^{k} \gbinom{k}{\ell}_q \gbinom{k-\ell}{b-\ell}_q \gbinom{n-b}{i-b}_q (-1)^{b-\ell} q^{\binom{b-\ell}{2}}\right)
    \end{align*}
    and at most
\[
\mathrm{UB}_{_{\eta},\mathrm{Gev}}\coloneqq\gbinom{n}{i}_q - \frac{\left(\frac{q^n-1}{q^k-1}\right) \partial_q(n,i,k,\eta)^2}{\partial_q(n,i,k,\eta) + \left(\left(\frac{q^n-1}{q^k-1}\right)-1\right) \omega_q(n,i,k,\eta)}.
\]
\end{corollary}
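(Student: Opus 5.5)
The two bounds are instances of the standard first- and second-moment estimates for the union of ``bad'' events, applied to the family of spread elements. For each $a\in\mathfrak{A}$ (the spread elements $a\Fqk$, $|\mathfrak{A}|=\frac{q^n-1}{q^k-1}$) let $E_a=\left(\bigcup_{h=0}^{\eta}\Sigma_{a,h}\right)^c$ be the set of $i$-spaces meeting $a\Fqk$ in dimension at least $\eta+1$. As in the proof of Proposition~\ref{ref: our lower bound}, $S_{i,\eta}^c=\bigcup_{a\in\mathfrak{A}}E_a$, so
\[
|S_{i,\eta}|=\gbinom{n}{i}_q-\left|\bigcup_{a\in\mathfrak{A}}E_a\right|.
\]
We therefore need an upper bound and a lower bound on $\left|\bigcup_a E_a\right|$.

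\textbf{Upper bound on the union, giving the lower bound $\mathrm{LB}_{\eta,\mathrm{Gev}}$.} Apply the union bound $\left|\bigcup_a E_a\right|\le\sum_a|E_a|$. By the lemma from \cite{gruica2024generalised} stated just before, $|E_a|=\partial_q(n,i,k,\eta)$ for every $a$, since $a\Fqk$ is a fixed $k$-space. This gives the lower bound at once.

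Alternatively, the value of $\partial_q$ can be checked independently. Count the pairs (an $\ell$-subspace $L\subseteq S$, an $i$-space containing $L$), which gives $\gbinom{k}{\ell}_q\gbinom{n-\ell}{i-\ell}_q$. Then apply M\"obius inversion on the subspace lattice of $S$, with M\"obius function $(-1)^{m}q^{\binom{m}{2}}$, to extract the number of $i$-spaces meeting $S$ in exactly dimension $\ell$. Summing over $\ell\ge\eta+1$ yields the stated double sum.

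\textbf{Lower bound on the union, giving the upper bound $\mathrm{UB}_{\eta,\mathrm{Gev}}$.} Use the second-moment (de Caen / Cauchy--Schwarz) inequality. Put $N(B)=\#\{a\in\mathfrak{A}: B\in E_a\}$. Cauchy--Schwarz gives
\[
\Big(\sum_B N(B)\Big)^2\le\left|\bigcup_a E_a\right|\cdot\sum_B N(B)^2.
\]
The terms are computed as follows.
\begin{itemize}
\item $\sum_B N(B)=\sum_a|E_a|=|\mathfrak{A}|\,\partial_q$.
\item $\sum_B N(B)^2=\sum_{a,b}|E_a\cap E_b|$. The diagonal contributes $|\mathfrak{A}|\,\partial_q$.
\item For $a\ne b$ in $\mathfrak{A}$, the spaces $a\Fqk$ and $b\Fqk$ meet trivially, since the Desarguesian spread is a partition. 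Hence the lemma applies with $S=a\Fqk$, $S'=b\Fqk$, and gives $|E_a\cap E_b|=\omega_q(n,i,k,\eta)$ uniformly. This is exactly where the spread property is used. There are $|\mathfrak{A}|(|\mathfrak{A}|-1)$ ordered off-diagonal pairs.
\end{itemize}
Substituting these values,
\[
\left|\bigcup_a E_a\right|\ge\frac{|\mathfrak{A}|^2\partial_q^2}{|\mathfrak{A}|\,\partial_q+|\mathfrak{A}|(|\mathfrak{A}|-1)\,\omega_q}
=\frac{|\mathfrak{A}|\,\partial_q^2}{\partial_q+(|\mathfrak{A}|-1)\,\omega_q}.
\]
Subtracting this from $\gbinom{n}{i}_q$ gives $\mathrm{UB}_{\eta,\mathrm{Gev}}$.

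\textbf{Main obstacle.} The only nontrivial inputs are the exact counts $\partial_q$ and $\omega_q$. Since these are assumed from \cite{gruica2024generalised}, the remaining care lies in two places: indexing by spread elements $\mathfrak{A}$ rather than by $a\in\Fqn^*$, and checking that each off-diagonal pair really consists of two disjoint $k$-spaces, so that a single value $\omega_q$ applies to all of them.
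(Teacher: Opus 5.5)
Your proposal is correct and follows the same route the paper takes by citing \cite{gruica2024generalised}. The lower bound is the union bound over the $\frac{q^n-1}{q^k-1}$ spread elements with $|E_a|=\partial_q(n,i,k,\eta)$, and the upper bound is the second-moment (Cauchy--Schwarz) estimate with off-diagonal counts $\omega_q(n,i,k,\eta)$, justified because distinct elements of the Desarguesian spread meet trivially; these are exactly the two lemmas from \cite{gruica2024generalised} that the paper invokes together with the counts $\partial_q$ and $\omega_q$.
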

It turns out that, in the case of the Desarguesian spread $\bm \calD= \{ a \Fqk \mid a \in \Fqn^* \}$, we have $\mathrm{LB}_{_{\eta},\mathrm{Gev}} = \mathrm{LB}_\eta$ due to the following proposition. 
\begin{proposition}
    Let $r\geq 2$, $1 \leq i \leq n$ and $1 \leq k \leq kr-k=k(r-1)$ be integers and let $a\Fqk$ be an element of the Desarguesian spread $\bm \calD$.
    \noindent The number $\delta_q(n,i,k,\eta)$ of $i$-spaces in $\Fqn$ that intersect $a\Fqk$ in dimension at least $\eta+1$ is
    \[
    \gbinom{n}{i}_q-\sum_{h=0}^\eta q^{(k-h)(i-h)}\gbinom{n-k}{i-h}_q\gbinom{k}{h}_q.
\]
\end{proposition}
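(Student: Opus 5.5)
The plan is a complement count with the intersection dimension as the partitioning parameter. Every $i$-dimensional subspace $\calV_2\leq\Fqn$ meets the fixed $k$-dimensional space $\calV_1=a\Fqk$ in some dimension $h$ with $0\leq h\leq\min\{i,k\}$. The $i$-spaces meeting $a\Fqk$ in dimension at least $\eta+1$ are exactly those whose $h$ is not in $\{0,\ldots,\eta\}$. Hence
\[
\delta_q(n,i,k,\eta)=\gbinom{n}{i}_q-\sum_{h=0}^{\eta}\left\lvert\{\calV_2 \mid \dim(\calV_2)=i,\ \dim(\calV_2\cap a\Fqk)=h\}\right\rvert ,
\]
since the classes indexed by distinct $h$ are pairwise disjoint. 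This is the same disjointness already used for the sets $\Sigma_{a,h}$ in the proof of Proposition~\ref{ref: our lower bound}.

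Next I would evaluate each summand with Lemma~\ref{lemma: counting intersections}, taking $d_1=k$, $d_2=i$, $\calV_1=a\Fqk$. This gives $q^{(k-h)(i-h)}\gbinom{n-k}{i-h}_q\gbinom{k}{h}_q$ directly. The only care needed is the range of $h$, since the lemma assumes $h\leq\min\{k,i\}$ and $i-h\leq n-k$. I would adopt the standard convention that $\gbinom{m}{j}_q=0$ when $j<0$ or $j>m$. With it, the terms with $h>i$, $h>k$ or $i-h>n-k$ vanish, and exactly those values of $h$ are the ones where no $i$-space exists. So the formula remains valid term by term for all $0\leq h\leq\eta$. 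The hypothesis $k\leq k(r-1)$ gives $n-k\geq k$, which keeps the quantities well defined.

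Finally, I would note that this also settles the claim $\mathrm{LB}_{\eta,\mathrm{Gev}}=\mathrm{LB}_\eta$. Both $\partial_q(n,i,k,\eta)$ and $\delta_q(n,i,k,\eta)$ count the same set of $i$-spaces meeting a fixed $k$-space in dimension at least $\eta+1$, and this count does not depend on which $k$-space is chosen (the isomorphism argument in Lemma~\ref{lemma: counting intersections}). Hence $\partial_q=\delta_q$, and substituting into the corollary gives the identity with Proposition~\ref{ref: our lower bound}.

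The main obstacle is bookkeeping at the boundary cases of $h$. An alternative would be to verify the alternating-sum identity for $\partial_q$ algebraically, via $q$-Vandermonde or Möbius inversion on the subspace lattice. I would avoid that, because the double-counting argument makes it unnecessary.
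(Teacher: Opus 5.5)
Your proposal is correct and follows essentially the same route as the paper. The paper also writes $\delta_q(n,i,k,\eta)$ as the size of the complement of the disjoint union $\bigcup_{h=0}^{\eta}\Sigma_{a,h}$, and evaluates each $\lvert\Sigma_{a,h}\rvert$ with Lemma~\ref{lemma: counting intersections} taking $d_1=k$ and $d_2=i$. Your explicit handling of out-of-range $h$ through the vanishing convention for Gaussian coefficients is care the paper leaves implicit, and your observation that $\partial_q=\delta_q$ because both count the same set is exactly how the paper uses this result.
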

\begin{proof}
    Recalling the notation introduced in the proof of Proposition \ref{ref: our lower bound} we immediately have that the number of subspaces of dimension $i$ in $\Fqn$ that intersect an element $a\Fqk$ of the Desarguesian spread $\bm \calD$ in dimension at least $\eta+1$ is
    \begin{align*}
    \delta_q(n,i,k,\eta)&=\left\lvert\{\calB \in  \Gr_{\Fq}(i,\Fqn) \colon \dim(\calB \cap a \Fqk)\geq \eta+1\}\right\rvert\\
    &=\left\lvert\{\calB \in  \Gr_{\Fq}(i,\Fqn) \colon \dim(\calB \cap a \Fqk)=h\text{ where }h\in\{\eta+1,\ldots,k\}\}\right\rvert\\
    &=\left\lvert \left(\bigcup_{h=0}^\eta\Sigma_{a,h}\right)^c\right\rvert=\gbinom{n}{i}_q-\sum_{h=0}^\eta\left\lvert\Sigma_{a,h}\right\rvert=\gbinom{n}{i}_q-\sum_{h=0}^\eta q^{(k-h)(i-h)}\gbinom{n-k}{i-h}_q\gbinom{k}{h}_q.
\end{align*}
\end{proof}
In summary, the bounds $\mathrm{LB}_{_{\eta}}$, $\mathrm{LB}_{_{\eta},\mathrm{Gev}}$, and $\mathrm{UB}_{_{\eta},\mathrm{Gev}}$ satisfy
\begin{equation}
    \label{ref: evasive bounds}
    0<\mathrm{LB}_{_{\eta},\mathrm{Gev}}=\mathrm{LB}_{_{\eta}}\leq \mathrm{UB}_{_{\eta},\mathrm{Gev}}\leq \gbinom{n}{i}_q. 
\end{equation}
Moreover, when $i \leq 2 \eta + 1$, 
\[
0<\mathrm{LB}_{_{\eta},\mathrm{Gev}}=\mathrm{LB}_{_{\eta}}= \mathrm{UB}_{_{\eta},\mathrm{Gev}}\leq \gbinom{n}{i}_q, 
\]
i.e., we have the exact number of $(\bm \calD, \eta)$-evasive subspaces.

As a consequence of Equation \ref{ref: evasive bounds}, the probability that a randomly chosen $\calB \in \Gr_{\Fq}(i,\Fqn)$ has evasiveness $1\leq \eta\leq k-1$ is bounded above and below by
\begin{equation}
    \label{eq: prob of being (D,eta) evasive}
    \frac{\mathrm{LB}_{_{\eta}}}{\gbinom{n}{i}_q}\leq  \mathrm{P}(\calB \text{ is } (\bm \calD,\eta)\text{-evasive})
\leq \frac{\mathrm{UB}_{_{\eta},\mathrm{Gev}}}{\gbinom{n}{i}_q}.
\end{equation}
\subsubsection{Final Remarks on the Success Probability}
According to the considerations at the beginning of this section and as a consequence of Corollary \ref{cor: consequences on evasivness of the received space}, $\mathrm{P}(\pi_a(\calB) \text{ is } (\pi_a(\bm \calD),\eta)\text{-evasive})$ for $1 \leq \eta < k-d$ corresponds to the probability that a nearest neighbour decoder outputs a unique codeword and that this codeword is the transmitted one.

\medskip
Let $q$ be a prime power. Given a sequence $(\calA_q)_{q}$ of partial $k$-spreads with $\lvert\calA_q\rvert\geq 2$ for all $q$, the authors of \cite{gruica2024generalised} investigated how the behavior of the proportion of $(\calA_q,\eta)$-evasive subspaces within $\Gr_{\Fq}(i,\Fqn)$ relies on the asymptotic of the sequence $\left(\lvert\calA_q\rvert\right)_q$ as $q\rightarrow\infty$. 
In the particular case of the spread $\pi_a(\bm \calD_q)= \{\pi_a(b\Fqk)\colon  b\in\Fqn^*\text{ and }b\Fqk\cap a\Fqk=\{0\}\}\subseteq\Gr_{\Fq}(k,\Fqn/a\Fqk)\simeq\Gr_{\Fq}(k,\F_{q^{n-k}})$ for all $q$ prime power, then
\begin{equation}
    \label{eq: threshold dimension}
\lim_{q\rightarrow\infty}\frac{\left\lvert\left\{\calB_q\subseteq\Fqn/a\Fqk\colon\dim(\calB_q)=i, \calB_q\text{ is }(\pi_a(\bm \calD_q),\eta)\text{-evasive}\right\}\right\rvert}{\gbinom{n-k}{i}_q}=\begin{cases}
    1\text{ if }i\leq \gamma,\\
    0\text{ if }i\geq \gamma+2,
\end{cases}
\end{equation}
where $\gamma=\frac{\eta}{\eta+1}(n-2k)+\eta$. There is a threshold dimension at which $(\pi_a(\bm \calD_q),\eta)$-evasive subspaces transition from being dense to sparse. Consequently recalling Corollary \ref{cor: consequences on evasivness of the received space}, a nearest neighbour decoder will succeed if the projection of the insertion $\calB$ has evasiveness $\eta$ with respect to $\pi_a(\bm \calD_q)$ for some $1\leq\eta < k-d$. Consequently the latter provides us the threshold $\gamma$ on the dimension of the insertion below which the success of a nearest neighbour decoder, and hence of our algorithm, is guaranteed. 
\afterpage{
\begin{figure}[htbp]
    \begin{center}
    \begin{tikzpicture}[
        scale=0.85, transform shape,
        block/.style={
            draw=black, thick, rectangle,
            text width=5.5cm, minimum height=1.2cm,
            rounded corners=5pt, align=center,
            fill=white, font=\small
        },
        lightgrayblock/.style={
            block, fill=gray!12,
            text width=6.2cm, minimum height=2.0cm
        },
        arrow/.style={
            ->, >=Stealth, thick
        },
        doublearrow/.style={
            <->, >=Stealth, thick
        },
        container/.style={
            draw=black,
            rectangle,
            rounded corners=12pt,
            thick,
            inner sep=18pt
        }
    ]

    \node[lightgrayblock] (b1) at (0,6.8)
    {
    NND fails if there exists $b \in \mathbb{F}_{q^n}^*$ such that 
    $b\mathbb{F}_{q^k} \neq a\mathbb{F}_{q^k}$ and 
    $\dim(\mathcal{R} \cap b\mathbb{F}_{q^k}) \geq k - d$
    };

    \node[lightgrayblock] (b2) at (0,3.9)
    {
    A necessary condition to have a successful NND is that for all 
    $b \in \mathbb{F}_{q^n}^*$ such that 
    $b\mathbb{F}_{q^k} \neq a\mathbb{F}_{q^k}$, 
    $\dim(\mathcal{R} \cap b\mathbb{F}_{q^k}) \leq \eta$ 
    with $1 \leq \eta < k - d$
    };

    \draw[arrow] (b1) -- (b2);

    \node[lightgrayblock] (b3) at (0,0.8)
    {
    The projection $\pi_a(\calB)$ of the insertion space $\mathcal{B}$ is 
    $(\pi_a(\bm \calD),\eta)$-evasive for $1 \leq \eta < k - d$
    };

    \draw[doublearrow] (b2) -- 
    node[midway, right, font=\small] 
    {Corollary \ref{cor: consequences on evasivness of the received space}} 
    (b3);

    \node[lightgrayblock] (b4) at (0,-2.1)
    {
    Asymptotic results on the density of 
    $(\pi_a(\bm \calD),\eta)$-evasive subspaces: there is a threshold dimension 
    at which $\eta$-evasive subspaces with respect to the Desarguesian spread 
    $\pi_a(\bm \calD)$ in $\Fqn/a\Fqk$ transition from being dense to sparse.
    };

    \draw[arrow] (b3) -- (b4);

    \node[
        container,
        fit={(b1)(b2)(b3)(b4)},
        inner ysep=20pt,
        inner xsep=20pt
    ] (box) {};

    \end{tikzpicture}
    \end{center}
    \caption{Success probability for a nearest neighbour decoder}
    \label{fig: Probability of success for a nearest neighbour decoder}
\end{figure}}

\section{Conclusions and Open Problems}
\label{sec: Conclusions and Open Problems}
The main contribution of this work is a probabilistic polynomial-time decoding algorithm for Desarguesian spread codes, along with two refined version. Under a random insertion model, see Table \ref{tab:insertions-handled}, the proposed decoders are able to correct errors beyond half the minimum distance and, unlike previous approaches, can also handle received subspaces of dimension larger than $k$, consequently addressing the open problem in \cite{Gorla11}. The success probability and complexity of the proposed decoders are also discussed.

Future studies may extend the proposed algorithm (and relative refinements) to design efficient decoding algorithms for broader classes of cyclic subspace codes beyond Desarguesian spread codes, such as those constructed in \cite{roth2017construction}.   
\section*{Acknowledgments}
We would like to thank Hugo Beeloo-Sauerbier Couvée and Violetta Weger for fruitful discussions and suggestions.

\bibliographystyle{abbrv}
\bibliography{Bibliography.bib}

\appendix
\section{Appendix}
\addcontentsline{toc}{section}{Appendix}
\subsection{Intuitive Overview of the Relationship between the ER and the NND Algorithms}
\label{app: visual expl}
Let us (graphically) examine the reason why, for any value of $b \in \Fqn^*$ such that $b\Fqk\cap a\Fqk=\{0\}$, it is desirable for the dimension $\dim(\calR \cap b \Fqk)$ to be as small as possible. Suppose that the received subspace is $$\mathcal{R}=\calU\oplus\calB=\Span{\Fq}{au_1, \ldots, au_{k-d}} \oplus \Span{\Fq}{b_1, \ldots, b_{t}},$$
where $a \in\Fqk^* $, $u_i\in\Fqk$ are $\Fq$-linearly independent and $b_1, \ldots, b_{t}$ are $t$ $\Fq$-linearly independent elements chosen uniformly at random in $\mathbb{F}_{q^n}$. Since a Desarguesian spread $\bm  \calD$ form a partition of $\mathbb{F}_{q^n}^*$, we have that
\[
\mathcal{R}=\mathcal{R}\cap\mathbb{F}_{q^n}=\mathcal{R}\cap\left(\bigcup_{b\in\mathbb{F}_{q^n}^*}b\mathbb{F}_{q^k}\right)=\bigcup_{b\in\mathbb{F}_{q^n}^*}(\mathcal{R}\cap b\mathbb{F}_{q^k}),
\]
where each subset is disjoint if we remove the zero element.
In other words we can divide the space $\mathcal{R}$ in its intersections with each codeword of  $\bm  \calD$.  For the purpose of understanding, for each $b\in\Fqn^*$ such that $b\Fqk\cap a\Fqk=\{0\}$, we represent $\calR \cap b \Fqk$ by a red rectangle whose height is given by $\dim(\calR \cap b \Fqk)$, while $\calU \subseteq a \Fqk$ will be represented by a single green rectangle.
    \begin{figure}[H]
        \centering
        \begin{tikzpicture}[x=1cm,y=1cm]

        \pgfmathsetmacro{\H}{2}        
        \pgfmathsetmacro{\barw}{0.6}   
        \pgfmathsetmacro{\gap}{0.25}   

        \def\levels{0.1, 0.12,0.25,0.15,0.55,0.15,0.18, 0.20}

        \newcount\n
        \n=0
        \foreach \h in \levels { \advance\n by 1 }
        \pgfmathsetmacro{\nbars}{\the\n}
        \pgfmathsetmacro{\W}{\nbars*(\barw+\gap)-\gap}

        \foreach[count=\i] \h in \levels {
            \pgfmathsetmacro{\xL}{(\i-1)*(\barw+\gap)}   
            \pgfmathsetmacro{\xR}{\xL+\barw}             
            \pgfmathsetmacro{\yT}{\h*\H}                 

            \def\thiscolor{red!35}
             
            \ifnum\i=3
                \def\thiscolor{red!65}
            \fi

            \ifnum\i=7
                \def\thiscolor{red!65}
            \fi

            \ifnum\i=8
                \def\thiscolor{red!65}
            \fi
            
            \ifnum\i=5
                \def\thiscolor{green!60}
            \fi

            \node[below] at (3.75,0) {$\color{darkgreen} \calU$};
            \node[below] at (2,0) {$\color{red} \calB$};
            \node[below] at (5.40,0) {$\color{red} \calB$};
            \node[below] at (6.25,0) {$\color{red} \calB$};

            \fill[\thiscolor] (\xL,0) rectangle (\xR,\yT);

            \draw[thick] (0,0) -- (7,0);                 
            \draw[thick] (0,\H) -- (7,\H) node[above]{$k$}; 
        }
    \end{tikzpicture}
        \caption{Received space $\calR=\calU \oplus \calB$}
        \label{fig: Received space R=U+B}
    \end{figure}
    Expanding the received subspace $\calR$ will, with high probability, lead to a situation in which, upon reduction, the transmitted codeword is recovered.
    \begin{figure}[H]
        \centering
        \begin{tikzpicture}[x=1cm,y=1cm]

        \pgfmathsetmacro{\H}{2}        
        \pgfmathsetmacro{\barw}{0.6}   
        \pgfmathsetmacro{\gap}{0.25}   


        \def\levels{0.4, 0.5,0.90,0.60, 1.0 , 0.55 ,0.63, 0.71}

        \newcount\n
        \n=0
        \foreach \h in \levels { \advance\n by 1 }
        \pgfmathsetmacro{\nbars}{\the\n}
        \pgfmathsetmacro{\W}{\nbars*(\barw+\gap)-\gap}

        \foreach[count=\i] \h in \levels {
        \pgfmathsetmacro{\xL}{(\i-1)*(\barw+\gap)}   
        \pgfmathsetmacro{\xR}{\xL+\barw}             
        \pgfmathsetmacro{\yT}{\h*\H}                 

     \ifnumcomp{\i}{=}{5}{
   \def\thiscolor{green!60}
   \def\thiscolor{blue!75}
    }{
   \ifnumcomp{\i}{=}{3}{
      \def\thiscolor{red!35}
   }{
      \def\thiscolor{red!35}
   }
}

   \node[below] at (3.75,0) {$a \color{blue} \Fqk$};

    \fill[\thiscolor] (\xL,0) rectangle (\xR,\yT);

    \draw[thick] (0,0) -- (7,0);                 
    \draw[thick] (0,\H) -- (7,\H) node[above]{$k$}; 
  }
\end{tikzpicture}
        \caption{Successful expansion: only the space $\calU$ is expanded enough to fill up the correct element of the spread.}
        \label{fig: Successful Expansion}
    \end{figure}
    However, when there exists $b \in \Fqn^*$ such that $b\Fqk\cap a\Fqk=\{0\}$ and $\dim(\calR \cap b \Fqk)\geq k-d$, the expansion $\fexp$ may produce an additional $\Fqk$-linear subspace within $\fexp(\calR)$, which causes a decoding failure.
    \begin{figure}[H]
        \centering
        \begin{tikzpicture}[x=1cm,y=1cm]

  \pgfmathsetmacro{\H}{2}        
  \pgfmathsetmacro{\barw}{0.6}   
  \pgfmathsetmacro{\gap}{0.25}   

  \def\levels{0.4, 0.5,1.0 ,0.60, 1.0 , 0.55 ,0.63, 0.71}

  \newcount\n
  \n=0
  \foreach \h in \levels { \advance\n by 1 }
  \pgfmathsetmacro{\nbars}{\the\n}
  \pgfmathsetmacro{\W}{\nbars*(\barw+\gap)-\gap}

  \foreach[count=\i] \h in \levels {
    \pgfmathsetmacro{\xL}{(\i-1)*(\barw+\gap)}   
    \pgfmathsetmacro{\xR}{\xL+\barw}             
    \pgfmathsetmacro{\yT}{\h*\H}                 

     \ifnumcomp{\i}{=}{5}{
   \def\thiscolor{green!60}
   \def\thiscolor{blue!75}
    }{
   \ifnumcomp{\i}{=}{3}{
      \def\thiscolor{red!60}
      \def\thiscolor{blue!75}
   }{
      \def\thiscolor{red!35}
   }
}

   \node[below] at (3.75,0) {$a \color{blue} \Fqk$};
   \node[below] at (2,0) {$b \color{blue} \Fqk$};

    \fill[\thiscolor] (\xL,0) rectangle (\xR,\yT);

    \draw[thick] (0,0) -- (7,0);                 
    \draw[thick] (0,\H) -- (7,\H) node[above]{$k$}; 
  }
\end{tikzpicture}
    \caption{Unsuccessful expansion: more than one element of the spread has been filled up by the expansion, we can at best have a list decoding.}
        \label{fig:Expansion}
    \end{figure}

\subsection{Proof of Corollary \ref{cor: expected dimension of the expansion}}
\label{sec: proof of corollary 1}
Following the proofs of \cite[Lemma III.2 and Proposition III.3]{Aragon19}, we provide a complete proof for the case of interest under the assumption $s \dim(\calU)<k$.
\begin{proof}
    Let $\calX_0\coloneqq\{0\}$ and $\calX_i\coloneqq\Span{\Fq}{x_1,\ldots,x_i}$ for all $i\in\{1,\ldots,s\}$, then $\calX_i=\calX_{i-1}+\Span{\Fq}{x_i}$. In order to get $\dim(\calU.\calX)=s \dim(\calU)$, we want that for all $i\in\{1,\ldots,s\}$, $\dim(\calU.\calX_i)=(i-1)\dim(\calU)+\dim(\calU)$. First of all we investigate the probability that $\dim(\calU.\calX_i)<(i-1)\dim(\calU)+\dim(\calU)$ assuming $\dim(\calX_{i-1}.\calU)=(i-1)\dim(\calU)$. We have that $\dim(\calU.\calX_i)<(i-1)\dim(\calU)+\dim(\calU)$ if and only if the subspace $x_i.\calU$ has a non-zero intersection with $\calX_{i-1}.\calU$.
    \begin{align*}
        \mathrm{P}(\calX_{i-1}.\calU\cap x_i.\calU\neq \{0\})&=\mathrm{P}(\text{There exists }u\in\calU, u\neq 0\colon x_iu\in\calX_{i-1}.\calU)\\
        &\leq \sum_{u\in\calU, u\neq 0}\mathrm{P}(x_iu\in\calX_{i-1}.\calU)\\
        &\leq (\lvert\calU\rvert-1)\frac{q^{(i-1)\dim(\calU)}-1}{q^k-1}\\
        &\leq q^{\dim(\calU)}\frac{q^{(i-1)\dim(\calU)}}{q^k}=\frac{q^{i\dim(\calU)}}{q^k},
    \end{align*}
    since for any fixed $x_i,a\neq 0$ such that $\calU\subseteq a\Fqk$, $x_iu=x_iaw\in x_ia\Fqk$ is uniformly distributed, up the isomorphism $w\in\Fqk\mapsto x_iaw\in\Fqn$, in $\Fqk\setminus\{0\}$. At this point, it is straightforward to see that
    \begin{align*}
        &\mathrm{P}(\text{There exists }i\in\{1,\ldots,s\}\colon\dim(\calU.\calX_i)<(i-1)\dim(\calU)+\dim(\calU))\\
        &\leq \sum_{i=1}^s\frac{q^{i\dim(\calU)}}{q^k}\leq \sum_{i=1}^s\frac{q^{s\dim(\calU)}}{q^k}=s\frac{q^{s\dim(\calU)}}{q^k},
    \end{align*}
    which, by complement, immediately yields
    \[
     \mathbb{P}(\dim(\calU.\calX)=s\dim(\calU))\geq 1-s\frac{q^{s\dim(\calU)}}{q^k}=1-s\frac{q^{s(k-d)}}{q^k}.
    \]
\end{proof}

\subsection{About expansion}
\label{app: Expansion}
Keeping in mind the importance of the expansion step in our decoding algorithm, in this section we will deal with the following generic problem.
\begin{problem}
Let $\mathcal{X} \subseteq \mathbb{F}_{q^k}$ be an $\mathbb{F}_q$-linear subspace of dimension $d \le k$.
For each $a \in \Fqk^*$, define the expansion
$
S_a := \mathcal{X} + a\mathcal{X},
$
and consider the average dimension
\[
S := \frac{1}{q^k - 1} \sum_{a \in \Fqk^*} \dim(S_a).
\]
We ask the following questions:
\begin{itemize}
    \item Does $S$ depend only on the dimension $d$ relative to $k$, or also depend on the specific choice of $\mathcal{X}$?
    \item What can be said about the typical behavior?
\end{itemize}
\end{problem}
The first observation is that for $a \in \Fq$ there is no expansion, another trivial instance is $\calX = \Fqk$ for which the expansion is still the whole set independently from the choice of $a$.
The ``typical" case for $\calX$ small (i.e. $d < k/2$) is that $\calX, a \calX$ have trivial intersection $\{0\}$ and the dimension of $\calX +  a \calX$ is equal $2d.$
We will refer to such expansion as \textit{optimal expansion} as it is impossible to get a higher dimension.
The number of optimal expansions is given by $q^k  - |\calX \calX^{-1}|$ (see Lemma \ref{lm:XX_inv}).
An expansion is non-optimal if $\dim_{\Fq}(\calX \cap a \calX) = t \geq 1$, then $\dim_{\Fq}(\calX + a \calX) = 2d - t.$
The following lemma describes the set of all $a \in \Fqk$ for which the expansion $\fexp(\calX,a) = \calX + a \calX$ is non-optimal and gives an upper bound on the cardinality of this set. Although this is an equivalent result to the one presented in \cite[Proposition 3.4]{gluesing2021distance}, we rediscovered it in an attempt to justify optimal expansions. Therefore we leave the proof that led us to the result, since it will help the reader to better understand when an optimal expansion occurs. In particular, the proof gives insight into how bad expansions, meaning expansions that are neither optimal nor quasi-optimal (lack the optimality by one), can have a positive impact on the number of optimal expansions.
\begin{lemma}\label{lm:XX_inv}
    Let $\calX \subseteq \Fqk$ be an $\Fq$-linear subspace of $\Fq$-dimension $d \leq k$ and let $a \in \Fqk^*$. The spaces $\calX, a \calX$ have no trivial intersection if and only if $a \in \calX^{-1} \calX,$ where $\calX^{-1} = \{ x^{-1} \mid x \in \calX \setminus\{0\}\}$  and $\calX^{-1} \calX = \{x_1^{-1} x_2 \mid x_1,x_2 \in \calX, x_1 \neq 0\}.$ The cardinality of $\calX^{-1}\calX$ is upper bounded by 
    \begin{equation} \label{eq:XX_inv_size_bound}
    |\calX^{-1} \calX| \leq  \frac{(q^{d} - 1)}{q-1}(q^d - q)  + q.
\end{equation}

\end{lemma}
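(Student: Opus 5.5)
The characterization follows directly from the definitions. If $0\neq v\in\calX\cap a\calX$, write $v=x_2=ax_1$ with $x_1,x_2\in\calX\setminus\{0\}$; then $a=x_1^{-1}x_2\in\calX^{-1}\calX$. Conversely, if $a=x_1^{-1}x_2$ with $x_1\neq 0$ and $a\neq 0$, then $x_2\neq 0$. Hence $ax_1=x_2$ is a nonzero element of $a\calX\cap\calX$. Both directions are one line each, so the content of the lemma lies in the cardinality bound.

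For the bound, I would count elements of $\calX^{-1}\calX$ by organizing the pairs $(x_1,x_2)$ projectively. The quotient $x_1^{-1}x_2$ is invariant under simultaneous scaling $(x_1,x_2)\mapsto(\lambda x_1,\lambda x_2)$ with $\lambda\in\Fq^*$. So I can fix $x_1$ to be a representative of one of the $\frac{q^d-1}{q-1}$ one-dimensional $\Fq$-subspaces of $\calX$.

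For each such representative, I would split $x_2\in\calX$ into two classes. If $x_2\in\Span{\Fq}{x_1}$, the quotient lies in $\Fq$. These contribute at most the $q$ elements of $\Fq$ in total, counting $a=0$; presumably $a=0$ is harmless, or the bound overcounts slightly anyway. If $x_2\notin\Span{\Fq}{x_1}$, there are $q^d-q$ choices, and each yields a quotient outside $\Fq$.

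Summing over the representatives gives at most $\frac{q^d-1}{q-1}(q^d-q)$ elements outside $\Fq$, plus $q$ elements inside $\Fq$. This is exactly \eqref{eq:XX_inv_size_bound}.

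The main point to check is that the $\Fq$-scaling reduction does not lose any elements. This holds because every pair $(x_1,x_2)$ is equivalent, via $\lambda=c^{-1}$, to one whose first coordinate is the chosen representative of $\Span{\Fq}{x_1}$, and this scaling leaves the quotient unchanged. I also need that the quotients arising from $x_2\notin\Span{\Fq}{x_1}$ really lie outside $\Fq$; otherwise $x_2=cx_1$ for some $c\in\Fq$, a contradiction. With this, the two classes are disjoint and the union bound is clean.

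The bound is not tight whenever different projective representatives produce the same quotient. That redundancy is the phenomenon behind the remark about bad expansions improving the count of optimal ones, but it is not needed for the inequality.
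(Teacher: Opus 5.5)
Your proof is correct and is essentially the paper's argument. The paper writes $\calX^{-1}\calX=\bigcup_{[x]_q} x^{-1}\calX$ over the $\frac{q^d-1}{q-1}$ projective points of $\calX$, notes $\Fq\subseteq x^{-1}\calX$, and applies a union bound to the pieces $x^{-1}\calX\setminus\Fq$ of size $q^d-q$; this is exactly your split according to whether $x_2\in\Span{\Fq}{x_1}$. Two small remarks: you verify both directions of the characterization, which the paper only sketches, and your hedge about $a=0$ is unnecessary, since $0=x_1^{-1}\cdot 0$ belongs to $\calX^{-1}\calX$ by definition and is simply one of the $q$ elements of $\Fq$ being counted.
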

\begin{proof}
    The first statement is immediate.
    Consider $\hat{x} \in \calX \cap a \calX$, there exists $x \in \calX$ such that $\hat{x} = ax$ from which we immediately have $ a = x^{-1} \hat{x} \in \calX^{-1} \calX$.

    To measure the size of $\calX^{-1} \calX$, observe that the set $\calX^{-1} \calX$ is the union of several subspaces of the form  $x^{-1}\calX$.
    In particular we have
    \begin{equation}\label{eq:XX_in_union_decomposition}
    \calX^{-1} \calX = \bigcup_{x \in \calX\setminus\{0\}} x^{-1} \calX.
\end{equation} 
Notice that, for any $x \in \calX \setminus\{ 0\}$ and any $\lambda \in \Fq^*$, we have $x^{-1} \calX = (\lambda x)^{-1}\calX,$ this means that in equation (\ref{eq:XX_in_union_decomposition}), instead of taking the union of all the sets $x^{-1} \calX$ we can consider $x \in \calX \setminus \{0\}$ up to $\Fq$-scalar multiplication.
More formally, define the equivalence relation $x \sim_q y \iff x = \lambda y$ for some $\lambda \in \Fq,$ let $[x]_q$ denote the class of $x$ and $[\calX]_q = \{[x]_q \mid x \in \calX \}$ the set of all the classes contained in $\calX$. 
The cardinality of $[\calX]_q$ is exactly $\frac{q^d - 1}{q-1}.$ 
As for any $\lambda x \in [x]_q$ we have $(\lambda x)^{-1} = \lambda^{-1} x^{-1} \sim_q x^{-1},$ then $[x]_q^{-1} := [x^{-1}]_q$ is well defined.
The equation (\ref{eq:XX_in_union_decomposition}) can then be rewritten as
$$
    \calX^{-1} \calX = \bigcup_{[x]_q \in [\calX]_q} [x]_q^{-1} \calX.
$$
Notice that $\Fq \subseteq [x]_q^{-1} \calX$ for each $[x]_q \in \calX$, hence we can rewrite the above union as the union of the sets $[x]_q^{-1} \calX \setminus \Fq$ of cardinality $q^d - q$ and the set $\Fq$.
In this way we obtain the upper bound:
\begin{equation} \label{eq:XX_inv_size_bound_der}
    \left\lvert\calX^{-1} \calX\right\rvert = \left\lvert\bigcup_{x \in \calX} x^{-1} \calX\right\rvert \leq \sum_{[x]_q \in[\calX]_q} \left\lvert \left(x^{-1}  \calX\right) \setminus\Fq\right\rvert + \left\lvert\Fq\right\rvert = \frac{q^{d} - 1}{q-1}(q^d - q)  + q.
\end{equation}
\end{proof}
The upper bound will be tight only if, for any two subspaces $x_1^{-1} \calX, x_2^{-1} \calX$ such that $x_1,x_2 \in \calX \setminus\{ 0\}$ and $[x_1]_q \neq [x_2]_q$ the intersection is always the smallest possible, that is $x_1^{-1} \calX \cap x_2^{-1} \calX = \Fq.$

An example for which the bound in Lemma \ref{lm:XX_inv} is tight is given for $\dim_{\Fq}(\calX) = 1.$ In this case, we have that $\calX$ is  the set $\calX = \{ \lambda x \mid \lambda \in \Fq \}$, the set $\calX^{-1} \calX$ is  $(\lambda_1 x)^{-1} \lambda_2 x = \lambda_1^{-1} \lambda_2 \in \Fq^*,$ that is all the expansions are optimal except choosing $a \in \Fq.$

In Lemma \ref{lm:XX_inv} we describe the set of all the non optimal $a$.
As a consequence we see that for all $\calX$ of dimension at least $2$, besides $a \in \Fq,$ there are always other non-optimal choices of $a.$
Among these non-optimal choices, many could miss the optimality just by $1$, this will not have a big impact on our algorithm.
We would like to characterize the elements $a$ for which $\dim_{\Fq}(\calX + a \calX) \leq 2d - t$ for any given $t.$
For $t=1$ this inequality becomes $\dim_{\Fq}(\calX + a \calX) \leq 2d - 1$ which, thanks to Lemma \ref{lm:XX_inv}, we know is satisfied if and only if $a \in \calX^{-1} \calX$.

In order to characterize the subset for which the expansion misses the optimality by more than $1$ we analyze the case $t=2.$
When the expansion misses the dimension by $2,$ it means that $a$ is sending two $\Fq$-linearly independent $x_1,x_2 \in \calX$ to two linearly independent elements $a x_1, a x_2 \in \calX.$
Let $y_1 = ax_1, y_2 = ax_2 \in \calX$, we have the condition
\begin{equation}\label{eq:a_double}
    a = y_1 x_1^{-1} = y_2 x_2^{-1}.
\end{equation}

This means that $a \in x_1^{-1} \calX \cap x_2^{-1} \calX.$
More in general, if $a$ is such that the expansion will fail by $t$ dimensions, then $a \in \cap_{i=1}^t x_i^{-1} \calX$ where $x_1,\ldots, x_t$ are linearly independent.
The upper bound in Lemma \ref{lm:XX_inv} is tight only when these subspaces intersect only in $\Fq$, a larger intersection including some $a \notin \Fq$ means that the cardinality of the set $\calX^{-1} \calX$ will be strictly smaller than the bound of Lemma \ref{lm:XX_inv}.
As a consequence, if among the non-optimal expansions there are some particularly bad expansions, then the number of optimal expansions will be larger.

This can be clearly seen in the following extreme case.
Let $\calX$ be an intermediate field of dimension $d$ between $\Fq$ and $\Fqk$, the set $ \calX^{-1} \calX = \calX.$
Its size is $q^d,$ which is much smaller than the bound in (\ref{eq:XX_inv_size_bound}).
The possible expansions are either optimal for $a \in \Fqk \setminus \calX$ or terrible.
Indeed, for $a \in \calX$, the dimension of the expansion is the smallest possible as $\calX + a \calX = \calX.$

\par From inequality (\ref{eq:min_distance_decoder_failure}) it follows that, except from the zero insertion case, in order to achieve successful decoding it is needed that $k-d \geq 2$ where $d$ is the number of deletions.
This means that $\dim(\calU) = 2$ represents the worst case for our algorithm.
Let $\calU = \Span{\Fq}{u_1, u_2}$ and choose the vector $\bm a = (a_1,\ldots,a_s)$ to expand it. 
We have $\fexp(\calU,\bm a) = \calU.\calA$ where $\calA = \Span{\Fq}{a_1, \ldots, a_s}$.
As $\dim(u_1^{-1}\calU.\calA) = \dim(\calU.\calA)$ we can apply Lemma \ref{lm:XX_inv} to $\Span{\Fq}{1, u}.\calA$  where $u = u_1^{-1} u_2$ concluding that $\dim(\fexp(\calU,\bm a)) = 2s$ if $u \notin \calA \calA^{-1}$.
The likelihood of $u \notin \calA \calA^{-1}$ can be determined from (\ref{eq:XX_inv_size_bound_der}).
To be precise we have 
\begin{equation}
\label{eq:precise perfect growth probability}
    \mathrm{P}(u \notin \calA \calA^{-1}) = 1 - \frac{|\calA \calA^{-1}|}{q^k} \geq  1  - q^{-k} \left (\frac{q^{2s}-q^{s+1}-q^s+q^2}{q-1}\right) \approx 1 - q^{2s - k - 1}\text{ as }q\to+\infty.
\end{equation}

\bigskip

\noindent Ermes Franch and Chunlei Li,\\
Department of Informatics,\\ 
University of Bergen, Norway.\\
E-mail: \{ermes.franch, chunlei.li\}@uib.no\\

\medskip
\noindent Angelica Piccirillo,\\
Department of Mathematics,\\ 
Technical University of Munich,\\
TUM School of Computation, Information
and Technology (CIT), Germany.\\ 
E-mail: angelica.piccirillo@tum.de

\end{document}